\theoremstyle{plain}
\newtheorem{theorem}{Theorem}[section]
\newtheorem{lemma}[theorem]{Lemma}
\newtheorem{proposition}[theorem]{Proposition}
\theoremstyle{definition}
\newtheorem{definition}[theorem]{Definition}
\theoremstyle{remark}
\newtheorem{remark}[theorem]{Remark}
\numberwithin{equation}{section}
\begin{document}

\title{\textbf{The solution to the BCS gap equation \\
and the second-order phase transition \\
in superconductivity}}

\author{Shuji Watanabe\\
Division of Mathematical Sciences\\
Graduate School of Engineering, Gunma University\\
4-2 Aramaki-machi, Maebashi 371-8510, Japan\\
Email: shuwatanabe@gunma-u.ac.jp}

\date{}

\maketitle

\begin{abstract}
%%%%%%%%%%%%%%%%%%%%%%%%%%%%%%%%%%%
The existence and the uniqueness of the solution to the BCS gap equation of superconductivity is established in previous papers, but the temperature dependence of the solution is not discussed. In this paper, in order to show how the solution varies with the temperature, we first give another proof of the existence and the uniqueness of the solution and point out that the unique solution belongs to a certain set. Here this set depends on the temperature $T$. We define another certain subset of a Banach space consisting of continuous functions of both $T$ and $x$. Here, $x$ stands for the kinetic energy of an electron minus the chemical potential. Let the solution be approximated by an element of the subset of the Banach space above. We second show, under this approximation, that the transition to a superconducting state is a second-order phase transition. 
%%%%%%%%%%%%%%%%%%%%%%%%%%%%%%%%%%%

\medskip

\noindent Mathematics Subject Classification (2000): 45G05, 46N50, 47H10, 82B26, 82D55

\medskip

\noindent Keywords: BCS gap equation, second-order phase transition, superconductivity, nonlinear integral equation, Schauder fixed-point theorem
\end{abstract}

%\runningauthor{Shuji Watanabe}
%\runningtitle{the second-order phase transition in superconductivity}

%%%%%%%%%%%%%%%%%%%%%%%%%%% 1 %%%%%%%%%%%%%%%%%%%%
\section{Introduction and preliminaries}

We use the unit $k_B=1$, where $k_B$ stands for the Boltzmann constant. Let $\omega_D>0$ and $k\in\mathbb{R}^3$ stand for the Debye frequency and the wave vector of an electron, respectively. We denote Planck's constant by $h>0$ and set $\hslash=h/(2\pi)$. Let $m>0$ and $\mu>0$ stand for the electron mass and the chemical potential, respectively. We denote by $T (\geq 0)$ the temperature, and by $x$ the kinetic energy of an electron minus the chemical potential, i.e., $x=\hslash^2|k|^2/(2m)-\mu \in [ -\mu, \, \infty)$. Note that $0<\hslash\omega_D<<\mu$.

In the BCS model \cite{bcs, bogoliubov} of superconductivity, the solution to the BCS gap equation \eqref{eq:gapequation} below is called the gap function. We regard the gap function as a function of both $T$ and $x$, and denote it by $u$, i.e., $u: \, (T,\, x) \mapsto u(T,\, x)$ $(\geq 0)$. The BCS gap equation is the following nonlinear integral equation:
\begin{equation}\label{eq:gapequation}
u(T,\, x)=\int_{\varepsilon}^{\hslash\omega_D}
\frac{U(x,\,\xi)\, u(T,\,\xi)}{\,\sqrt{\,\xi^2+u(T,\,\xi)^2\,}\,}\,
\tanh \frac{\,\sqrt{\,\xi^2+u(T,\,\xi)^2\,}\,}{2T}\, d\xi,
\end{equation}
where $U(x,\,\xi)>0$ is the potential multiplied by the density of states per unit energy at the Fermi surface and is a function of $x$ and $\xi$. In \eqref{eq:gapequation} we introduce $\varepsilon>0$, which is small enough and fixed $(0<\varepsilon<<\hslash\omega_D)$. It is known that the BCS gap equation \eqref{eq:gapequation} is based on a superconducting state called the BCS state. In this connection, see \cite[(6.1)]{watanabe} for a new gap equation based on a superconducting state having a lower energy than the BCS state.

The integral with respect to $\xi$ in \eqref{eq:gapequation} is sometimes replaced by the integral over $\mathbb{R}^3$ with respect to the wave vector $k$. Odeh \cite{odeh}, and Billard and Fano \cite{billardfano} established the existence and the uniqueness of the positive solution to the BCS gap equation in the case $T=0$. In the case $T\geq 0$, Vansevenant \cite{vansevesant} determined the transition temperature (the critical temperature) and showed that there is a unique positive solution to the BCS gap equation. Recently, Hainzl, Hamza, Seiringer and Solovej \cite{hhss} proved that the existence of a positive solution to the BCS gap equation is equivalent to the existence of a negative eigenvalue of a certain linear operator to show the existence of a transition temperature. Hainzl and Seiringer \cite{haizlseiringer} also derived upper and lower bounds on the transition temperature and the energy gap for the BCS gap equation. Moreover, Frank, Hainzl, Naboko and Seiringer \cite{fhns} gave a rigorous analysis of the asymptotic behavior of the transition temperature at weak coupling.

However, the solution mentioned above belongs to a function space consisting of functions of the wave vector only, and the temperature is regarded as a parameter, i.e., in previous papers the existence and the uniqueness of the solution is established for $T$ fixed. So the temperature dependence of the solution is not discussed. For example, it is not shown that the solution is continuous for $T \geq 0$. Moreover, how the solution varies with the temperature is not studied. Studying the temperature dependence of the solution to the BCS gap equation is very important. This is because, by dealing with the thermodynamical potential, this study leads to the mathematical challenge of showing that the transition to a superconducting state is a second-order phase transition. So it is highly desirable to study the temperature dependence of the solution.

In this paper, in order to show how the solution varies with the temperature, we first give another proof of the existence and the uniqueness of the solution to the BCS gap equation \eqref{eq:gapequation}. More precisely, we show that the unique solution belongs to $V_T$ (see \eqref{eq:vt} below). Note that the set $V_T$ depends on $T$. We define a certain subset $W$ (see \eqref{eq:spacew} below) of a Banach space consisting of continuous functions of both $T$ and $x$. Let the solution be approximated by an element of $W$. We second show, under this approximation, that the transition to a superconducting state is a second-order phase transition. In other words, we show that the condition that the solution belongs to $W$ is a sufficient condition for the second-order phase transition in superconductivity.

In this connection, on the basis of the Banach fixed-point theorem, the author \cite[Theorem 2.3]{watanabe3} recently gave a proof of the statement that the solution to the BCS gap equation \eqref{eq:gapequation} is continuous for both  $T$ and $x$ when $T$ $(\geq 0)$ is small enough.

Let
\begin{equation}\label{eq:potentialu}
U(x,\,\xi)=U_1 \qquad \mbox{at all} \quad (x,\,\xi) \in [\varepsilon, \, \hslash\omega_D]^2,
\end{equation}
where $U_1>0$ is a constant. Then the gap function depends on the temperature $T$ only. We therefore denote the gap function by $\Delta_1$ in this case, i.e., $\Delta_1:\, T \mapsto \Delta_1(T)$. Then \eqref{eq:gapequation} leads to the simplified gap equation
\begin{equation}\label{eq:smplgapequation}
1=U_1\int_{\varepsilon}^{\hslash\omega_D}
 \frac{1}{\,\sqrt{\,\xi^2+\Delta_1(T)^2\,}\,}\,
 \tanh \frac{\, \sqrt{\,\xi^2+\Delta_1(T)^2\,}\,}{2T}\,d\xi.
\end{equation}
It is known that superconductivity occurs at temperatures below the transition temperature. The following is the definition of the transition temperature, which originates from the simplified gap equation \eqref{eq:smplgapequation}.
\begin{definition}[\cite{bcs}]\label{dfn:tcsmpl}
The transition temperature originating from the simplified gap equation \eqref{eq:smplgapequation} is the temperature $\tau_1>0$ satisfying
\[
1=U_1\int_{\varepsilon}^{\hslash\omega_D}
\frac{1}{\,\xi\,}\,\tanh \frac{\xi}{\,2\tau_1\,}\,d\xi.
\]
\end{definition}

The BCS model makes the assumption that there is a unique solution $\Delta_1: T \mapsto \Delta_1(T)$ to the simplified gap equation \eqref{eq:smplgapequation} and that it is of class $C^2$ with respect to the temperature $T$ (see e.g. \cite{bcs} and \cite[(11.45), p.392]{ziman}). The author \cite{watanabe2} has given a mathematical proof of this assumption on the basis of the implicit function theorem. Set
\begin{equation}\label{eq:delta0}
\Delta=\frac{\,
\sqrt{ \left( \hslash\omega_D-\varepsilon\, e^{1/U_1} \right)
\left( \hslash\omega_D-\varepsilon\, e^{-1/U_1} \right) }\,}
{\,\sinh\frac{1}{\,U_1\,}\,}.
\end{equation}

\begin{proposition}[{\cite[Proposition 2.2]{watanabe2}}]\label{prp:simplegap}
Let $\Delta$ be as in \eqref{eq:delta0}. Then there is a unique nonnegative solution $\Delta_1: [\,0,\,\tau_1\,] \to [0,\,\infty)$ to the simplified gap equation \eqref{eq:smplgapequation} such that the solution $\Delta_1$ is continuous and strictly decreasing on the closed interval $[\,0,\,\tau_1\,]$:
\[
\Delta_1(0)=\Delta>\Delta_1(T_1)>\Delta_1(T_2)>\Delta_1(\tau_1)=0, \qquad 0<T_1<T_2<\tau_1.
\]
Moreover, it is of class $C^2$ on the interval $[\,0,\,\tau_1\,)$ and satisfies
\[
\Delta_1'(0)=\Delta_1''(0)=0 \quad \mbox{and} \quad \lim_{T\uparrow \tau_1} \Delta_1'(T)=-\infty.
\]
\end{proposition}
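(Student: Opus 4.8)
The plan is to recast \eqref{eq:smplgapequation} as the vanishing of a single scalar function and then exploit strict monotonicity together with the implicit function theorem. Define, for $T>0$ and $\Delta\geq 0$,
\[
\Phi(T,\Delta)=U_1\int_{\varepsilon}^{\hslash\omega_D}\frac{1}{\,\sqrt{\,\xi^2+\Delta^2\,}\,}\,\tanh\frac{\,\sqrt{\,\xi^2+\Delta^2\,}\,}{2T}\,d\xi,
\]
so that the simplified gap equation reads $\Phi(T,\Delta_1(T))=1$. The first step is a monotonicity lemma: writing $s=\sqrt{\xi^2+\Delta^2}$, the map $y\mapsto (\tanh y)/y$ is strictly decreasing on $(0,\infty)$, whence $h(s)=s^{-1}\tanh(s/2T)$ is strictly decreasing in $s$ and $T\mapsto\tanh(s/2T)$ is strictly decreasing in $T$. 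Consequently $\Phi(T,\cdot)$ is strictly decreasing in $\Delta$ and $\Phi(\cdot,\Delta)$ is strictly decreasing in $T$.

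Strict monotonicity in $\Delta$ yields uniqueness at once: for each fixed $T$ the equation $\Phi(T,\Delta)=1$ has at most one root $\Delta\geq 0$. For existence I would pin down the admissible range of $T$ through the two endpoints. Since $\Phi(T,\Delta)\to 0$ as $\Delta\to\infty$ and, at $\Delta=0$, the quantity $\Phi(T,0)=U_1\int_\varepsilon^{\hslash\omega_D}\xi^{-1}\tanh(\xi/2T)\,d\xi$ equals $1$ exactly at $T=\tau_1$ by Definition~\ref{dfn:tcsmpl} and is strictly decreasing in $T$, one gets the following trichotomy: for $0<T<\tau_1$ one has $\Phi(T,0)>1>\lim_{\Delta\to\infty}\Phi(T,\Delta)$, so the intermediate value theorem produces a unique root $\Delta_1(T)>0$; at $T=\tau_1$ the root is $\Delta_1(\tau_1)=0$; and for $T>\tau_1$ no nonnegative root exists. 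At $T=0$ the factor $\tanh$ becomes $1$ and the integral is evaluated in closed form as $\ln\dfrac{\hslash\omega_D+\sqrt{(\hslash\omega_D)^2+\Delta^2}}{\varepsilon+\sqrt{\varepsilon^2+\Delta^2}}$; solving $\Phi(0,\Delta)=1$ then gives precisely the value $\Delta_1(0)=\Delta$ of \eqref{eq:delta0}. Strict monotonicity of $T\mapsto\Delta_1(T)$ follows by comparing $\Phi(T_1,\Delta_1(T_2))>\Phi(T_2,\Delta_1(T_2))=1=\Phi(T_1,\Delta_1(T_1))$ for $T_1<T_2$ and using that $\Phi(T_1,\cdot)$ decreases; continuity of the monotone map $\Delta_1$ then follows because its range is the full interval $[0,\Delta]$, so it can have no jump.

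For the $C^2$ regularity on $[0,\tau_1)$ I would invoke the implicit function theorem applied to $\Phi-1$. Differentiation under the integral sign, justified by dominated convergence since the integrand and its $T,\Delta$-derivatives are continuous and uniformly integrable on compact sets with $\Delta>0$, shows $\Phi$ is $C^2$, and the monotonicity computation gives
\[
\partial_\Delta\Phi(T,\Delta)=U_1\int_\varepsilon^{\hslash\omega_D} h'(s)\,\frac{\Delta}{s}\,d\xi<0\qquad(\Delta>0).
\]
Since $\Delta_1(T)>0$ on $[0,\tau_1)$, the nonvanishing of $\partial_\Delta\Phi$ there lets the implicit function theorem deliver $\Delta_1\in C^2$ together with $\Delta_1'=-\partial_T\Phi/\partial_\Delta\Phi$. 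The vanishing derivatives at $T=0$ come from the flatness of $\tanh(s/2T)$ there: because $\mathrm{sech}^2(s/2T)$ decays faster than any power of $T$ as $T\to 0^+$, every $T$-derivative of the integrand vanishes at $T=0$, so $\partial_T\Phi(0,\Delta)=\partial_T^2\Phi(0,\Delta)=0$, whence $\Delta_1'(0)=\Delta_1''(0)=0$.

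The hard part will be the endpoint $T=\tau_1$, where the implicit function theorem degenerates. Because the integrand depends on $\Delta$ only through $\Delta^2$, one has $\partial_\Delta\Phi(\tau_1,0)=0$, so the theorem cannot be applied at $\tau_1$ and the continuity of $\Delta_1$ up to $\tau_1$ must rest on the monotonicity argument above rather than on the implicit function theorem. To obtain $\lim_{T\uparrow\tau_1}\Delta_1'(T)=-\infty$ I would combine two facts: $\partial_T\Phi(\tau_1,0)$ is finite and strictly negative (strict monotonicity in $T$ persists at $\Delta=0$), while $\partial_\Delta\Phi(T,\Delta_1(T))\to 0^-$ as $T\uparrow\tau_1$, since the factor $\Delta/s\to 0$ pointwise and dominated convergence applies. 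The formula $\Delta_1'=-\partial_T\Phi/\partial_\Delta\Phi$ then forces $\Delta_1'(T)\to-\infty$. Verifying the explicit evaluation of the $T=0$ integral and making the differentiation-under-the-integral estimates fully rigorous are the only remaining routine, if slightly tedious, points.
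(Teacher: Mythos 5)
Your argument is correct, and it follows essentially the same route as the proof the paper relies on: Proposition~\ref{prp:simplegap} is not proved in this paper but imported from \cite[Proposition 2.2]{watanabe2}, which the author explicitly describes as being based on the implicit function theorem applied to the scalar equation $\Phi(T,\Delta)=1$, exactly as you do, with the same monotonicity/intermediate-value argument for existence and uniqueness and the same closed-form evaluation at $T=0$ yielding \eqref{eq:delta0}. The one point worth noting is that the cited proof takes $f(T)=\Delta_1(T)^2$ as the implicit unknown (cf.\ the remark that $\Delta_1(T)=\sqrt{f(T)}$ and the appearance of $f'(T_c)$ in \eqref{eq:smplform}): since the integrand depends on $\Delta$ only through $\Delta^2$, the derivative $\partial_f\Phi$ stays bounded away from zero up to $\tau_1$, so $f$ is $C^1$ there with finite negative slope and $\Delta_1'=f'/(2\sqrt{f})\to-\infty$ falls out immediately, whereas you obtain the same conclusion by showing directly that $\partial_\Delta\Phi(T,\Delta_1(T))\to 0^-$ while $\partial_T\Phi$ stays strictly negative; both are sound.
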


\begin{remark}
We set $\Delta_1(T)=0$ for $T>\tau_1$.
\end{remark}

\begin{remark}
In Proposition \ref{prp:simplegap}, $\Delta_1(T)$ is nothing but $\sqrt{f(T)}$ in \cite[Proposition 2.2]{watanabe2}.
\end{remark}

Let $0<U_1<U_2$ , where $U_2>0$ is a constant. We assume the following condition on $U(\cdot,\,\cdot)$ throughout this paper:
\begin{equation}\label{eq:condition}
U_1 \leq U(x,\,\xi) \leq U_2 \quad \mbox{at all} \quad (x,\,\xi) \in [\varepsilon,\, \hslash\omega_D]^2, \qquad U(\cdot,\,\cdot) \in C^2([\varepsilon,\, \hslash\omega_D]^2).
\end{equation}
When $U(x,\,\xi)=U_2$ at all $(x,\,\xi) \in [\varepsilon,\, \hslash\omega_D]^2$, an argument similar to that in Proposition \ref{prp:simplegap} gives that there is a unique nonnegative solution $\Delta_2: [\,0,\,\tau_2\,] \to [0,\,\infty)$ to the simplified gap equation
\begin{equation}\label{eq:smplgapequation2}
1=U_2\int_{\varepsilon}^{\hslash\omega_D}
 \frac{1}{\,\sqrt{\,\xi^2+\Delta_2(T)^2\,}\,}\,
 \tanh \frac{\, \sqrt{\,\xi^2+\Delta_2(T)^2\,}\,}{2T}\,d\xi, \qquad
0\leq T\leq \tau_2.
\end{equation}
Here the transition temperature $\tau_2>0$ is defined by
\begin{equation}\label{dfn:tcsmpl2}
1=U_2\int_{\varepsilon}^{\hslash\omega_D}
\frac{1}{\,\xi\,}\,\tanh \frac{\xi}{\,2\tau_2\,}\,d\xi.
\end{equation}
We again set $\Delta_2(T)=0$ for $T>\tau_2$. A straightforward calculation gives the following.

\begin{lemma}\label{lm:tautau}\quad {\rm (a)} The inequality $\tau_1<\tau_2$ holds.

\noindent {\rm (b)} If \   $0\leq T<\tau_2$, then $\Delta_1(T)<\Delta_2(T)$. If \  $T\geq \tau_2$, then $\Delta_1(T)=\Delta_2(T)=0$.
\end{lemma}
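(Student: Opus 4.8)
The plan is to reduce everything to the strict monotonicity of two elementary auxiliary functions and then argue by comparison, exactly in the spirit of the remark that a ``straightforward calculation'' suffices.

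For part (a) I would introduce, for $\tau>0$, the function
\[
g(\tau)=\int_{\varepsilon}^{\hslash\omega_D}\frac{1}{\,\xi\,}\,\tanh\frac{\xi}{\,2\tau\,}\,d\xi .
\]
For each fixed $\xi\in[\varepsilon,\hslash\omega_D]$ the quantity $\xi/(2\tau)$ is strictly decreasing in $\tau$, and $\tanh$ is strictly increasing, so the integrand is strictly decreasing in $\tau$; hence $g$ is strictly decreasing on $(0,\infty)$. By Definition \ref{dfn:tcsmpl} and \eqref{dfn:tcsmpl2} we have $g(\tau_1)=1/U_1$ and $g(\tau_2)=1/U_2$. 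Since $U_1<U_2$ gives $g(\tau_1)=1/U_1>1/U_2=g(\tau_2)$, the strict monotonicity of $g$ forces $\tau_1<\tau_2$, which is (a).

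For part (b) the central step is a monotonicity statement in the variable $\Delta$. Fixing $T\in(0,\tau_2)$, I would set, for $\Delta\ge 0$,
\[
h_T(\Delta)=\int_{\varepsilon}^{\hslash\omega_D}
\frac{1}{\,\sqrt{\,\xi^2+\Delta^2\,}\,}\,
\tanh\frac{\,\sqrt{\,\xi^2+\Delta^2\,}\,}{2T}\,d\xi ,
\]
and show that $h_T$ is strictly decreasing. This reduces to the elementary fact that $\psi(y)=y^{-1}\tanh y$ is strictly decreasing for $y>0$: writing the integrand as $(2T)^{-1}\psi\bigl(\sqrt{\xi^2+\Delta^2}/(2T)\bigr)$ and noting that $\sqrt{\xi^2+\Delta^2}$ is strictly increasing in $\Delta$, each integrand strictly decreases with $\Delta$, whence so does $h_T$. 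With this in hand I split into cases. If $0\le T<\tau_1$, then by Proposition \ref{prp:simplegap} and its analogue for $\Delta_2$ both $\Delta_1(T)>0$ and $\Delta_2(T)>0$, and \eqref{eq:smplgapequation}, \eqref{eq:smplgapequation2} read $h_T(\Delta_1(T))=1/U_1$ and $h_T(\Delta_2(T))=1/U_2$; since $1/U_1>1/U_2$ and $h_T$ is strictly decreasing, $\Delta_1(T)<\Delta_2(T)$ (the case $T=0$ is identical with $\tanh$ replaced by $1$, which is still strictly decreasing along $\Delta$). If $\tau_1\le T<\tau_2$, then $\Delta_1(T)=0$ by the Remark, while the analogue of Proposition \ref{prp:simplegap} gives that $\Delta_2$ is strictly decreasing on $[0,\tau_2]$ with $\Delta_2(\tau_2)=0$, so $\Delta_2(T)>0=\Delta_1(T)$. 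Finally, if $T\ge\tau_2$, then $\tau_1<\tau_2\le T$ together with the two Remarks yields $\Delta_1(T)=\Delta_2(T)=0$.

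The only nonroutine point is the strict monotonicity of $\psi(y)=\tanh y/y$, and this is where I expect the real (if modest) work to sit. I would establish it by examining the sign of the numerator of $\psi'(y)$, namely $p(y)=y\,\cosh^{-2}y-\tanh y$: one has $p(0)=0$ and $p'(y)=-2y\,\cosh^{-2}y\,\tanh y<0$ for $y>0$, so $p(y)<0$ and hence $\psi'(y)<0$ on $(0,\infty)$. Once this single inequality is secured, parts (a) and (b) both follow from the ``decreasing function plus $U_1<U_2$'' comparison, with no further analytic input beyond the already-cited existence and monotonicity properties of $\Delta_1$ and $\Delta_2$.
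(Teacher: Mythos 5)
Your proof is correct and is exactly the ``straightforward calculation'' the paper invokes without writing out: the paper offers no proof of Lemma \ref{lm:tautau} beyond that phrase, and the natural argument is the monotone-comparison one you give, resting on the strict decrease of $y\mapsto\tanh(y)/y$. The only cosmetic point is that your auxiliary function $g$ collides with the paper's $g$ in \eqref{eq:fng}, so it should be renamed if incorporated.
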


On the basis of Proposition \ref{prp:simplegap}, the author \cite[Theorem 2.3]{watanabe2} proved that the transition to a superconducting state is a second-order phase transition under the restriction \eqref{eq:potentialu}.

As is mentioned above, we now introduce the thermodynamical potential $\Omega$ to study the phase transition in superconductivity. The thermodynamical potential $\Omega$ is of the form
\[
\Omega=-k_BT \,\ln Z,
\]
where $Z$ is the function called the partition function. For more details on the thermodynamical potential, see e.g. \cite[sec.~III]{bcs} or Niwa \cite[sec.~7.7.3]{niwa}. Let $N(x)\geq 0$ stand for the density of states per unit energy at the energy $x$ \quad $(-\mu \leq x < \infty)$ and set $N_0=N(0)>0$. Here, $N_0$ stands for the density of states per unit energy at the Fermi surface $(x=0)$. Note that the function $x \mapsto N(x)$ is continuous on $[-\mu , \, \infty)$. For the gap function $u$, set
\[
\Omega_S(T)=\Omega_N(T)+\Psi(T),
\]
where
\begin{eqnarray}\label{eq:omegan}
\Omega_N(T)&=&
-2N_0\int_{\displaystyle{\varepsilon}}^{\displaystyle{\hslash\omega_D}} x\,dx
-4N_0T\int_{\displaystyle{\varepsilon}}^{\displaystyle{\hslash\omega_D}}
 \ln\left( 1+e^{\displaystyle{-x/T}} \right)\,dx \\ \nonumber
& &+\Phi(T),\qquad T>0,
\end{eqnarray}
\begin{eqnarray}\label{eq:v}
\Phi(T)&=&
2\int_{\displaystyle{-\mu}}^{\displaystyle{-\hslash\omega_D}} x\, N(x)\,dx
 -2T\int_{\displaystyle{-\mu}}^{\displaystyle{-\hslash\omega_D}}
 N(x)\ln\left( 1+e^{\displaystyle{\,x/T}} \right)\,dx \\ \nonumber
& &-2T\int_{\displaystyle{\hslash\omega_D}}^{\infty} N(x)
 \ln\left( 1+e^{\displaystyle{-x/T}} \right)\,dx,\qquad T>0,
\end{eqnarray}
\begin{eqnarray}\label{eq:delta}
\Psi(T)&=&
-2N_0\int_{\displaystyle{\varepsilon}}^{\displaystyle{\hslash\omega_D}}
 \left\{ \sqrt{x^2+u(T,\,x)^2}-x\right\}\,dx  \\ \nonumber
& &+N_0\int_{\displaystyle{\varepsilon}}^{\displaystyle{\hslash\omega_D}}
\frac{u(T,\, x)^2}{\,\sqrt{\,x^2+u(T,\, x)^2\,}\,}\,
\tanh \frac{\,\sqrt{\,x^2+u(T,\, x)^2\,}\,}{2T}\, dx \\ \nonumber
& &-4N_0T
\int_{\displaystyle{\varepsilon}}^{\displaystyle{\hslash\omega_D}}
\ln \frac{1+e^{-\displaystyle{\sqrt{x^2+u(T,\, x)^2}/T}}}
{1+e^{-\displaystyle{x/T}}} \,dx,\quad 0<T\leq T_c\,.
\end{eqnarray}
Here, $T_c$ is the transition temperature defined by Definition \ref{dfn:tcgeneral} below and originates from the BCS gap equation \eqref{eq:gapequation}.

\begin{remark}
The integral $\displaystyle{ \int_{\displaystyle{\hslash\omega_D}}^{\infty} N(x) \ln\left( 1+e^{\displaystyle{-x/T}} \right)\,dx }$ on the right side of (\ref{eq:v}) is well defined for $T>0$, since $N(x)=O(\sqrt{x})$ as $x \to \infty$.
\end{remark}

\begin{definition}\label{dfn:thpo}
Let $\Omega_S(T)$ and $\Omega_N(T)$ be as above. The thermodynamical potential $\Omega$ in the BCS model is defined by (see e.g. Niwa \cite[sec.~7.7.3]{niwa})
\[
\Omega(T)=\left\{ \begin{array}{ll}\displaystyle{
 \Omega_S(T)} \qquad &(0<T\leq T_c),\\
\noalign{\vskip0.3cm} \displaystyle{
 \Omega_N(T)} \qquad &(T>T_c).
\end{array}\right.
\]
\end{definition}

\begin{remark}
Generally speaking, the thermodynamical potential $\Omega$ is a function of the temperature $T$, the chemical potential $\mu$ and the volume of our physical system. Fixing both $\mu$ and the volume of our physical system, we deal with the dependence of $\Omega$ on the temperature $T$ only.
\end{remark}

\begin{remark}
Hainzl, Hamza, Seiringer and Solovej \cite{hhss} studied the BCS gap equation with a more general potential examining the thermodynamic pressure.
\end{remark}

\begin{remark}\label{rmk:omegan-v}
It is shown in \cite[Lemmas 6.1 and 6.2]{watanabe2} that both of the functions $\Omega_N$ (see \eqref{eq:omegan}) and $\Phi$ (see \eqref{eq:v}), regarded as functions of $T$, are of class $C^2$ on $(0,\,\infty)$.
\end{remark}

\begin{definition}
We say that the transition to a superconducting state at the transition temperature $T_c$ is a second-order phase transition if the following conditions are fulfilled.

{\rm (a)}\   The thermodynamical potential $\Omega$, regarded as a function of $T$, is of class $C^1$ on $(0,\,\infty)$.

{\rm (b)}\   The thermodynamical potential $\Omega$, regarded as a function of $T$, is of class $C^2$ on $(0,\,\infty) \setminus \{ T_c\}$, and the second-order partial derivative $\left( \partial^2\Omega/\partial T^2\right)$ is discontinuous at $T=T_c$.
\end{definition}

\begin{remark}
As is known in condensed matter physics, condition (a) implies that the entropy $\displaystyle{ S=-\left( \partial\Omega/\partial T\right) }$ is continuous on $(0,\,\infty)$ and that, as a result, no latent heat is observed
at $T=T_c$ . On the other hand, (b) implies that the specific heat at constant volume, $\displaystyle{C_V=-T\left( \partial^2\Omega/\partial T^2\right)}$, is discontinuous at $T=T_c$ and that the gap $\Delta C_V$ in the specific heat at constant volume is observed at $T=T_c$ . For more details on the entropy well as the specific heat at constant volume, see e.g. \cite[sec.~III]{bcs} or Niwa \cite[sec.~7.7.3]{niwa}.
\end{remark}

The paper proceeds as follows. In section 2 we state our main results without proof. In sections 3, 4 and 5 we prove our main results.

%%%%%%%%%%%%%%%%%%%%%%%%%%%%%%%%  2  %%%%%%%%%%%%%%
\section{Main results}

Let $0 \leq T \leq \tau_2$ and fix $T$, where $\tau_2$ is that in \eqref{dfn:tcsmpl2}. We first consider the Banach space $C([\varepsilon,\, \hslash\omega_D])$ consisting of continuous functions of $x$ only, and deal with the following subset $V_T$:
\begin{equation}\label{eq:vt}
V_T=\left\{ u(T,\,\cdot) \in C([\varepsilon,\, \hslash\omega_D]): \; \Delta_1(T) \leq u(T,\,x) \leq \Delta_2(T) \;\; \mbox{at} \;\; x \in [\varepsilon,\, \hslash\omega_D] \right\}.
\end{equation}

\begin{remark}
The set $V_T$ depends on $T$. So we denote each element of $V_T$ by $u(T,\,\cdot)$.
\end{remark}

As is mentioned in the introduction, the existence and the uniqueness of the solution to the BCS gap equation was established in previous papers \cite{billardfano, fhns, hhss, haizlseiringer, odeh, vansevesant} and the uniqueness only holds for a nonnegative $U(\cdot,\,\cdot)$. However the temperature dependence of the solution is not discussed, and so we give another proof of the existence and the uniqueness of the solution to the BCS gap equation \eqref{eq:gapequation} so as to show how the solution varies with the temperature. More precisely, we show that for $T$ fixed, the unique solution belongs to $V_T$. Note that Proposition \ref{prp:simplegap} and Lemma \ref{lm:tautau} point out how $\Delta_1$ and $\Delta_2$ depend on the temperature and how $\Delta_1$ and $\Delta_2$ vary with the temperature.

\begin{theorem}\label{thm:gap}
Assume condition \eqref{eq:condition} on $U(\cdot,\,\cdot)$. Let $T \in [0,\, \tau_2]$ be fixed. Then there is a unique nonnegative solution $u_0(T,\,\cdot) \in V_T$ to the BCS gap equation \eqref{eq:gapequation}:
\[
u_0(T,\, x)=\int_{\varepsilon}^{\hslash\omega_D}
\frac{U(x,\,\xi)\, u_0(T,\, \xi)}{\,\sqrt{\,\xi^2+u_0(T,\, \xi)^2\,}\,}\,
\tanh \frac{\,\sqrt{\,\xi^2+u_0(T,\, \xi)^2\,}\,}{2T}\, d\xi, \quad
x \in [\varepsilon,\, \hslash\omega_D].
\]
Consequently, the solution is continuous with respect to $x$ and varies with the temperature as follows:
\[
\Delta_1(T) \leq u_0(T,\, x) \leq \Delta_2(T) \quad \mbox{at} \quad
(T,\,x) \in [0,\, \tau_2] \times [\varepsilon,\, \hslash\omega_D].
\]
\end{theorem}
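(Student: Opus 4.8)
The plan is to fix $T$ and rewrite the right-hand side of \eqref{eq:gapequation} as a nonlinear operator $A_T$ on the Banach space $C([\varepsilon,\hslash\omega_D])$,
\[
(A_T u)(x)=\int_{\varepsilon}^{\hslash\omega_D} U(x,\xi)\,\varphi_T\bigl(u(\xi);\xi\bigr)\,d\xi,\qquad \varphi_T(s;\xi)=\frac{s}{\sqrt{\xi^2+s^2}}\,\tanh\frac{\sqrt{\xi^2+s^2}}{2T},
\]
so that solving the gap equation inside $V_T$ amounts to finding a fixed point of $A_T$ in the set \eqref{eq:vt}. Since $V_T$ is a closed, bounded, convex subset of $C([\varepsilon,\hslash\omega_D])$, existence will follow from the Schauder fixed-point theorem once I check that $A_T$ maps $V_T$ into itself and is continuous and compact there. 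The limiting case $T=0$ is handled the same way after replacing the hyperbolic tangent by $1$.

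Two elementary monotonicity properties of $\varphi_T$ drive the argument. First, $s\mapsto\varphi_T(s;\xi)$ is strictly increasing on $[0,\infty)$; differentiating and writing $a=\sqrt{\xi^2+s^2}$, the derivative is a sum of nonnegative terms. Second, the ratio $\varphi_T(s;\xi)/s$ is strictly decreasing in $s$, since $y\mapsto(\tanh y)/y$ is decreasing and $a$ increases with $s$. The first property gives the self-mapping property: if $\Delta_1(T)\le u\le\Delta_2(T)$, then using $U_1\le U(x,\xi)\le U_2$ together with the constant-potential equations \eqref{eq:smplgapequation} and \eqref{eq:smplgapequation2}—multiplied through by $\Delta_1(T)$ and $\Delta_2(T)$, so that these constants become the fixed points of the two constant-potential operators—I obtain
\[
\Delta_1(T)\le(A_T u)(x)\le\Delta_2(T),\qquad x\in[\varepsilon,\hslash\omega_D],
\]
whence $A_T(V_T)\subset V_T$. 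For $T\ge\tau_1$ the lower bound is automatic because the integrand is nonnegative.

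Compactness and continuity are routine. The image $A_T(V_T)$ is bounded by $\Delta_2(T)$ and equicontinuous, the latter from $|(A_Tu)(x_1)-(A_Tu)(x_2)|\le(\hslash\omega_D-\varepsilon)\sup_\xi|U(x_1,\xi)-U(x_2,\xi)|$ together with $\varphi_T\le 1$ and the uniform continuity of $U$ granted by \eqref{eq:condition}; hence $A_T(V_T)$ is relatively compact by the Arzel\`a--Ascoli theorem. Continuity of $A_T$ follows from the uniform continuity of $\varphi_T$ in its first argument and dominated convergence. Schauder's theorem then yields a fixed point $u_0\in V_T$, which is the asserted nonnegative solution; the sandwiching bound is exactly membership in $V_T$, and continuity in $x$ is built into $C([\varepsilon,\hslash\omega_D])$.

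The \emph{main obstacle} is uniqueness, because $A_T$ is not a contraction in the supremum norm: as $T\uparrow\tau_2$ the Lipschitz constant computed from $\partial_s\varphi_T$ approaches, and may exceed, $1$, so the Banach fixed-point method available for small $T$ does not apply here. Instead I will use the two monotonicity properties jointly: $A_T$ is order-preserving (as $\varphi_T$ is increasing and $U>0$) and strictly sublinear, i.e.\ $A_T(t u)>t\,A_T u$ for $t\in(0,1)$ and $u>0$ (as $\varphi_T(s;\xi)/s$ is decreasing). Given two solutions $u,v\in V_T$, which are strictly positive when $T<\tau_1$ since then $u,v\ge\Delta_1(T)>0$, I set $s_\ast=\sup\{t>0:\,u\ge t v\}$ and rule out $s_\ast<1$: monotonicity and strict sublinearity give $u=A_Tu\ge A_T(s_\ast v)>s_\ast A_Tv=s_\ast v$, and since $u-s_\ast v$ is continuous and strictly positive on the compact interval it is bounded below by a positive multiple of $v$, contradicting the maximality of $s_\ast$. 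Thus $u\ge v$, and by symmetry $u=v$. Establishing these two order inequalities and controlling the positivity of the solutions is where the real work lies.
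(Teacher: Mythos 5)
Your proposal is correct and follows essentially the same route as the paper: Schauder's fixed-point theorem on the closed bounded convex set $V_T$ (self-mapping via the monotonicity of $s\mapsto\varphi_T(s;\xi)$ together with \eqref{eq:smplgapequation} and \eqref{eq:smplgapequation2}, compactness via Arzel\`a--Ascoli) for existence, and the Amann-type sublinearity argument for uniqueness, where your $s_\ast=\sup\{t>0:\,u\ge t v\}$ is exactly the pair $(t,x_0)$ of \eqref{eq:T_0x_0} in the paper's proof. Like the paper, you leave the positivity issue for $T\in[\tau_1,\tau_2]$ (where $\Delta_1(T)=0$ and the trivial solution must be excluded) only partially resolved, so the two arguments are on equal footing there as well.
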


\begin{remark}
In fact, Theorem \ref{thm:gap} holds true under
\[
U_1 \leq U(x,\,\xi) \leq U_2 \quad \mbox{at all} \quad (x,\,\xi) \in [\varepsilon,\, \hslash\omega_D]^2, \qquad U(\cdot,\,\cdot) \in C([\varepsilon,\, \hslash\omega_D]^2).
\]
But we assume condition \eqref{eq:condition} on $U(\cdot,\,\cdot)$ instead. This is because we deal with the subset $W$ (see \eqref{eq:spacew} below) so as to prove Theorem \ref{thm:phasetransition}.
\end{remark}

\begin{proposition}\label{prp:uzerotzero}
Let $T \in [\tau_1,\, \tau_2]$ be fixed and let $u_0(T,\,\cdot)$ be as in Theorem \ref{thm:gap}. If there is a point $x_1 \in [\varepsilon,\,\hslash\omega_D]$ satisfying $u_0(T,\,x_1)=0$, then $u_0(T,\,x)=0$ at all $x \in [\varepsilon,\,\hslash\omega_D]$.
\end{proposition}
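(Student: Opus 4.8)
The plan is to substitute $x=x_1$ into the gap equation and exploit the sign structure of the integrand. First I would record the key reduction that, for $T \in [\tau_1,\,\tau_2]$, Proposition \ref{prp:simplegap} together with the convention $\Delta_1(T)=0$ for $T>\tau_1$ gives $\Delta_1(T)=0$ throughout this interval. Consequently the membership $u_0(T,\,\cdot) \in V_T$ furnished by Theorem \ref{thm:gap} only yields $0 \leq u_0(T,\,x) \leq \Delta_2(T)$, so the value $0$ is admissible and $u_0(T,\,\cdot)$ is nonnegative on $[\varepsilon,\,\hslash\omega_D]$.

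Setting $x=x_1$ in the BCS gap equation and invoking the hypothesis $u_0(T,\,x_1)=0$, I obtain
\[
0 = \int_{\varepsilon}^{\hslash\omega_D} \frac{U(x_1,\,\xi)\, u_0(T,\,\xi)}{\,\sqrt{\,\xi^2 + u_0(T,\,\xi)^2\,}\,}\, \tanh\frac{\,\sqrt{\,\xi^2+u_0(T,\,\xi)^2\,}\,}{2T}\, d\xi.
\]
The next step is to show that this integrand is continuous and nonnegative. I would write it as the product of $u_0(T,\,\xi)$ and the factor $g(\xi)=U(x_1,\,\xi)\,\bigl(\xi^2+u_0(T,\,\xi)^2\bigr)^{-1/2}\tanh\bigl(\sqrt{\xi^2+u_0(T,\,\xi)^2}/(2T)\bigr)$. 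Since $\xi \geq \varepsilon > 0$, we have $\sqrt{\xi^2+u_0(T,\,\xi)^2} \geq \varepsilon > 0$, so the hyperbolic tangent of a strictly positive argument is strictly positive; combined with $U(x_1,\,\xi) \geq U_1 > 0$ from \eqref{eq:condition}, this shows $g(\xi)>0$ for every $\xi \in [\varepsilon,\,\hslash\omega_D]$. As $u_0(T,\,\cdot)$ and $U(x_1,\,\cdot)$ are continuous, the whole integrand $g(\xi)\,u_0(T,\,\xi)$ is continuous and nonnegative.

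Finally, a nonnegative continuous function on $[\varepsilon,\,\hslash\omega_D]$ whose integral vanishes must be identically zero, so $g(\xi)\,u_0(T,\,\xi)=0$ for every $\xi$. Because $g(\xi)>0$ throughout the interval, this forces $u_0(T,\,\xi)=0$ at all $\xi \in [\varepsilon,\,\hslash\omega_D]$, which is the assertion.

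I do not expect a genuine analytic obstacle: the whole argument rests on the strict positivity of the kernel factor $g$, which in turn relies only on the cutoff $\varepsilon>0$ keeping $\xi$ bounded away from $0$ (so that the radicand never vanishes). The one point that must be handled with care is the reduction $\Delta_1(T)=0$ on $[\tau_1,\,\tau_2]$: for $T<\tau_1$ one would have $\Delta_1(T)>0$, whence $u_0(T,\,\cdot)\geq\Delta_1(T)>0$ and the hypothesis $u_0(T,\,x_1)=0$ could never be met. The restriction $T \geq \tau_1$ is precisely what makes the statement nonvacuous.
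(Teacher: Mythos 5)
Your proof is correct and follows essentially the same route as the paper: substitute $x=x_1$ into the gap equation, observe that the integrand is nonnegative with a strictly positive kernel factor (since $U \geq U_1 > 0$, $\xi \geq \varepsilon > 0$ and $T \geq \tau_1 > 0$), and conclude that $u_0(T,\,\cdot)$ vanishes identically. The paper states this more tersely, while you additionally spell out the continuity of the integrand and the role of the restriction $T \geq \tau_1$, both of which are consistent with the paper's argument.
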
.

The existence of the transition temperature $T_c$ is pointed out in previous papers \cite{fhns, hhss, haizlseiringer, vansevesant}. In our case, it is defined as follows.

\begin{definition}\label{dfn:tcgeneral}\quad  Let $u_0(T,\,\cdot) \in V_T$ be as in Theorem \ref{thm:gap}. The transition temperature $T_c$ originating from the BCS gap equation \eqref{eq:gapequation} is defined by
\[
T_c=\inf \left\{ T>0:\, u_0(T,\, x)=0 \quad \mbox{at all} \; x \in [\varepsilon,\, \hslash\omega_D] \right\}.
\]
\end{definition}

\begin{remark}\label{rmk:tc1tctc2}
Combining Definition \ref{dfn:tcgeneral} with Theorem \ref{thm:gap} implies that $\tau_1\leq T_c \leq \tau_2$. For $T>T_c$, we set $u_0(T,\, x)=0$ at all $x \in [\varepsilon,\, \hslash\omega_D]$.
\end{remark}

\begin{proposition}\label{prp:u0delta}\quad Let $u_0(T,\,\cdot)$ be as in Theorem \ref{thm:gap}. If $U(x,\,\xi)=U_1$ at all $(x,\,\xi)\in [\varepsilon,\, \hslash\omega_D]^2$, then $u_0(T,\, x)=\Delta_1(T)$ and $T_c=\tau_1$.
\end{proposition}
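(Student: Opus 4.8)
The plan is to avoid solving the equation afresh and instead exploit the uniqueness already granted by Theorem \ref{thm:gap}. The guiding observation is that when $U(x,\,\xi)=U_1$, the right-hand side of \eqref{eq:gapequation} no longer depends on $x$, so it is natural to test the equation on a function that is constant in $x$. I would therefore exhibit the constant function $x \mapsto \Delta_1(T)$ as a nonnegative solution of \eqref{eq:gapequation} lying in $V_T$, and then invoke uniqueness to conclude $u_0(T,\,x)=\Delta_1(T)$.

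First I would check membership in $V_T$. By Lemma \ref{lm:tautau}(b) we have $\Delta_1(T)\leq\Delta_2(T)$ for every $T\in[0,\,\tau_2]$, so the constant function $x\mapsto\Delta_1(T)$ satisfies $\Delta_1(T)\leq\Delta_1(T)\leq\Delta_2(T)$ on $[\varepsilon,\,\hslash\omega_D]$ and hence belongs to $V_T$ (see \eqref{eq:vt}). Second I would verify that this constant function solves \eqref{eq:gapequation} with $U\equiv U_1$. Substituting $u(T,\,\xi)=\Delta_1(T)$ into the right-hand side of \eqref{eq:gapequation} gives
\[
U_1\,\Delta_1(T)\int_{\varepsilon}^{\hslash\omega_D}
\frac{1}{\,\sqrt{\,\xi^2+\Delta_1(T)^2\,}\,}\,
\tanh\frac{\,\sqrt{\,\xi^2+\Delta_1(T)^2\,}\,}{2T}\,d\xi,
\]
which equals $\Delta_1(T)$ precisely when $\Delta_1(T)$ satisfies the simplified gap equation \eqref{eq:smplgapequation}. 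For $0\leq T<\tau_1$, Proposition \ref{prp:simplegap} gives $\Delta_1(T)>0$ together with \eqref{eq:smplgapequation}, so the displayed integral equals $\Delta_1(T)$ and the constant function is a solution. For $\tau_1\leq T\leq\tau_2$ we have $\Delta_1(T)=0$, and then both sides of \eqref{eq:gapequation} vanish identically, so the (zero) constant function is again a solution. In either case the constant function $x\mapsto\Delta_1(T)$ is a nonnegative solution of \eqref{eq:gapequation} in $V_T$, and the uniqueness part of Theorem \ref{thm:gap} forces $u_0(T,\,x)=\Delta_1(T)$ at every $x\in[\varepsilon,\,\hslash\omega_D]$.

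For the transition temperature I would read off the vanishing set of $\Delta_1$. Since $u_0(T,\,x)=\Delta_1(T)$ is independent of $x$, the condition ``$u_0(T,\,x)=0$ at all $x$'' in Definition \ref{dfn:tcgeneral} is equivalent to $\Delta_1(T)=0$. By Proposition \ref{prp:simplegap} the function $\Delta_1$ is strictly decreasing on $[0,\,\tau_1]$ with $\Delta_1(\tau_1)=0$, and $\Delta_1(T)=0$ for $T>\tau_1$ by convention; hence $\{\,T>0:\Delta_1(T)=0\,\}=[\tau_1,\,\infty)$, whose infimum is $\tau_1$. Therefore $T_c=\tau_1$.

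I expect no serious obstacle in this argument, since its whole weight rests on the single observation that the right-hand side of \eqref{eq:gapequation} is $x$-independent when $U\equiv U_1$, which reduces everything to the already-solved simplified equation \eqref{eq:smplgapequation}. The only points requiring a little care are the degenerate regime $\tau_1\leq T\leq\tau_2$, where one must notice that $\Delta_1(T)=0$ makes the constant function solve \eqref{eq:gapequation} trivially, and the use of Lemma \ref{lm:tautau} to guarantee that this constant function does lie in $V_T$, so that the uniqueness in Theorem \ref{thm:gap} may legitimately be applied.
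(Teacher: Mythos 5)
Your proof is correct, but it inverts the logical direction of the paper's argument. The paper works forward from $u_0$: when $U\equiv U_1$ the right-hand side of \eqref{eq:gapequation} is independent of $x$, so $u_0(T,\,x)$ is a constant $u_0(T)$ and the equation factors as $u_0(T)\bigl\{1-U_1\int_{\varepsilon}^{\hslash\omega_D}(\xi^2+u_0(T)^2)^{-1/2}\tanh\bigl(\sqrt{\xi^2+u_0(T)^2}/(2T)\bigr)\,d\xi\bigr\}=0$; asserting $u_0(T)\neq 0$, the bracket must vanish, so $u_0(T)$ satisfies the simplified gap equation \eqref{eq:smplgapequation} and the uniqueness in Proposition \ref{prp:simplegap} gives $u_0(T)=\Delta_1(T)$. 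You instead exhibit the constant $\Delta_1(T)$ as a solution lying in $V_T$ and invoke the uniqueness of Theorem \ref{thm:gap}. What the paper's route buys is that the factorization makes visible that \emph{every} fixed point is a constant $c$ annihilating the bracket, and for $T>\tau_1$ the bracket is strictly positive for all $c\geq 0$ (since $\sqrt{\xi^2+c^2}\geq\xi$ and $y\mapsto\tanh(y/(2T))/y$ is decreasing), forcing $c=0$ there; what your route buys is an explicit treatment of the regime $\tau_1\leq T\leq\tau_2$, which the paper's phrase ``since $u_0(T)\neq 0$'' silently skips. The one caution for your version: in that regime $\Delta_1(T)=0$ and the zero function is always a fixed point in $V_T$, while the paper's uniqueness statement (Lemma \ref{lm:unique} together with the remark following the proof of Proposition \ref{prp:uzerotzero}) disregards the zero fixed point whenever a positive one exists; so to conclude $u_0(T,\,\cdot)=0$ there without ambiguity you should add the one-line observation above that no positive constant can be a fixed point when $T>\tau_1$, rather than leaning solely on the bare statement of Theorem \ref{thm:gap}.
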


We next consider the Banach space $C([0,\, T_c] \times [\varepsilon,\,\hslash\omega_D])$ consisting of continuous functions of both $T$ and $x$. Let us consider the following condition, which gives the behavior of functions as $T \to T_c$. We assume condition \eqref{eq:condition} on $U(\cdot,\,\cdot)$. Let $T_c$ be as in Definition \ref{dfn:tcgeneral} and let $\varepsilon_1>0$ be arbitrary.

\noindent \textbf{Condition (C).} \  For $u \in C([0,\, T_c] \times [\varepsilon,\, \hslash\omega_D]) \cap C^2\left( (0,\, T_c) \times [\varepsilon,\, \hslash\omega_D] \right)$, there are a unique $v \in C([\varepsilon,\, \hslash\omega_D])$ and a unique $w \in C([\varepsilon,\, \hslash\omega_D])$ satisfying the following.

(C1) \   $v(x)>0$ at all $x \in [\varepsilon,\, \hslash\omega_D]$.

(C2) \   For $\varepsilon_1>0$, there is a $\delta>0$ such that
$\left| T_c-T \right|<\delta$ implies
\[
\left| v(x)-\frac{\,u(T,\, x)^2\,}{\,T_c-T\,} \right|<T_c \,\varepsilon_1
\quad \mbox{and} \quad
\left| v(x)+2\,u(T,\, x)\frac{\partial u}{\,\partial T\,}(T,\, x) \right|<T_c \, \varepsilon_1\,,
\]
where $\delta$ does not depend on $x \in [\varepsilon,\, \hslash\omega_D]$.

(C3) \   Set $f(T,\, x)=u(T,\, x)^2$. Then, for $\varepsilon_1>0$, there is a $\delta>0$ such that $\left| T_c-T \right|<\delta$ implies
\[
\left| \frac{\, w(x)\,}{2}+
\frac{\,f(T,\, x)+(T_c-T)\,\displaystyle{ \frac{\,\partial f\,}{\,\partial T\,} }(T,\, x)\,}{\,(T_c-T)^2\,} \right|<\varepsilon_1 \quad \mbox{and} \quad
\left| w(x)-\frac{\,\partial^2 f\,}{\,\partial T^2\,}(T,\, x) \right|<\varepsilon_1\,,
\]
where $\delta$ does not depend on $x \in [\varepsilon,\, \hslash\omega_D]$.
\begin{remark}\label{rmk:utc} \   If $u \in C([0,\, T_c] \times [\varepsilon,\, \hslash\omega_D]) \cap C^2\left( (0,\, T_c) \times [\varepsilon,\, \hslash\omega_D] \right)$ satisfies condition (C), then \  $\displaystyle{ u(T_c\,,\, x)=0 }$ at all $x \in [\varepsilon,\, \hslash\omega_D]$.
\end{remark}

We deal with the following subset $W$ of the Banach space $C([0,\, T_c] \times [\varepsilon,\,\hslash\omega_D])$. Dealing with $W$ is important both in studying smoothness of the thermodynamical potential with respsect to $T$ and in showing that the transition to a superconducting state is a second-order phase transition.
\begin{eqnarray}\label{eq:spacew}
& & \qquad \\ \nonumber
W&=&\left\{ u \in C([0,\, T_c] \times [\varepsilon,\,\hslash\omega_D]) \cap C^2((0,\, T_c) \times [\varepsilon,\,\hslash\omega_D]):
\Delta_1(T)\leq u(T,\,x) \leq \Delta_2(T) \right. \\ \nonumber
& & \quad \left. \;
\mbox{at}\; (T,\,x)\in [0,\, T_c] \times [\varepsilon,\,\hslash\omega_D], \quad u \; \mbox{satisfies condition (C)} \right\}.
\end{eqnarray}

\begin{remark}
Let $u \in W$. Then, for $T \geq T_c$, we set $u(T,\,x)=0$ at all $x \in [\varepsilon,\,\hslash\omega_D]$. 
\end{remark}

The set $W$ is not empty. Let $U_3 > 0$ be a constant satisfying
\[
1=U_3\int_{\varepsilon}^{\hslash\omega_D}
\frac{1}{\,\xi\,}\,\tanh \frac{\xi}{\,2T_c\,}\,d\xi.
\]
Note that $(0<)\, U_1\leq U_3\leq U_2$ (see Remark \ref{rmk:tc1tctc2}). An argument similar to that in Proposition \ref{prp:simplegap} gives that there is a unique nonnegative solution $\Delta_3: [\,0,\,T_c\,] \to [0,\,\infty)$ to the simplified gap equation
\[
1=U_3\int_{\varepsilon}^{\hslash\omega_D}
 \frac{1}{\,\sqrt{\,\xi^2+\Delta_3(T)^2\,}\,}\,
 \tanh \frac{\, \sqrt{\,\xi^2+\Delta_3(T)^2\,}\,}{2T}\,d\xi.
\]
Indeed, the function $\Delta_3$ just above is an element of $W$ (see \cite{watanabe2}). Therefore, $W \not= \emptyset$.

Define a mapping $A$ by
\begin{equation}\label{eq:mapping}
Au(T,\, x)=\int_{\varepsilon}^{\hslash\omega_D}
\frac{U(x,\,\xi)\, u(T,\, \xi)}{\,\sqrt{\,\xi^2+u(T,\, \xi)^2\,}\,}\,
\tanh \frac{\,\sqrt{\,\xi^2+u(T,\, \xi)^2\,}\,}{2T}\, d\xi,\qquad
u \in W.
\end{equation}

\begin{proposition}\label{prp:uauu0} \   Assume condition \eqref{eq:condition} on $U(\cdot,\,\cdot)$. Let $W$, a subset  of the Banach space $C([0,\, T_c] \times [\varepsilon,\,\hslash\omega_D])$, be as above. Let $u_0(T,\,\cdot) \in V_T$ be as in Theorem \ref{thm:gap}.

{\rm (a)}\   The mapping $\displaystyle{ A: \, W \longrightarrow W }$ is continuous with respect to the norm of the Banach space $C([0,\, T_c] \times [\varepsilon,\,\hslash\omega_D])$.

{\rm (b)}\   Let $u \in W$. Let $0 \leq T \leq \tau_2$ and fix $T$. Then all of $u(T,\, \cdot)$, $Au(T,\, \cdot)$ and $u_0(T,\,\cdot)$ belong to $V_T$. Consequently, at all $(T,\,x) \in [0,\, \tau_2] \times [\varepsilon,\,\hslash\omega_D]$,
\[
\Delta_1(T) \leq u(T,\,x), \, Au(T,\, x), \, u_0(T,\, x) \leq \Delta_2(T).
\]
\end{proposition}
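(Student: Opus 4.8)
The plan is to prove (a) in two stages—first that $A$ maps $W$ into itself, then that $A$ is continuous—and to read off (b) from the bounds produced along the way. For the inclusion $A(W)\subseteq W$ I would verify the three defining requirements of $W$ in turn. The order relation $\Delta_1(T)\leq Au(T,x)\leq\Delta_2(T)$ is the cleanest. The essential observation is that for fixed $\xi\geq\varepsilon$ and fixed $T>0$ the function
\[
\eta\longmapsto \frac{\eta}{\,\sqrt{\xi^2+\eta^2}\,}\,\tanh\frac{\,\sqrt{\xi^2+\eta^2}\,}{2T}
\]
is nondecreasing on $[0,\infty)$, since with $r=\sqrt{\xi^2+\eta^2}$ it equals $\sqrt{1-\xi^2/r^2}\,\tanh(r/2T)$ and both factors grow with $r$, hence with $\eta$. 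Applying this with $\Delta_1(T)\leq u(T,\xi)$ together with $U(x,\xi)\geq U_1$ and the identity obtained by multiplying \eqref{eq:smplgapequation} through by $\Delta_1(T)$ gives $Au(T,x)\geq\Delta_1(T)$; the symmetric argument with $u(T,\xi)\leq\Delta_2(T)$, $U(x,\xi)\leq U_2$ and \eqref{eq:smplgapequation2} gives $Au(T,x)\leq\Delta_2(T)$. The regularity $Au\in C([0,T_c]\times[\varepsilon,\hslash\omega_D])\cap C^2((0,T_c)\times[\varepsilon,\hslash\omega_D])$ I would obtain by differentiation under the integral sign: since $U\in C^2$ by \eqref{eq:condition}, $u\in C^2$ on the open set, $\xi\geq\varepsilon>0$ keeps the radicand bounded below, and $T$ ranges over a compact subinterval away from $0$, the integrand and its $x$- and $T$-derivatives up to second order are jointly continuous and dominated, so the usual theorems apply; joint continuity up to $T=0$ follows from dominated convergence.

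The genuinely hard step is verifying that $Au$ satisfies condition (C), and here I would exploit the asymptotics that $u$ already carries. Since $u(T_c,\xi)=0$ (Remark \ref{rmk:utc}) and $u(T,\xi)^2/(T_c-T)\to v(\xi)$ uniformly, near $T_c$ the integrand linearises: $u(T,\xi)\to 0$ forces $\sqrt{\xi^2+u^2}\to\xi$, so
\[
Au(T,x)=\sqrt{T_c-T}\int_\varepsilon^{\hslash\omega_D}U(x,\xi)\,\frac{\sqrt{v(\xi)}}{\xi}\,\tanh\frac{\xi}{2T_c}\,d\xi\;+\;o\!\left(\sqrt{T_c-T}\right).
\]
This identifies the candidate $v^{*}(x)=\bigl(\int_\varepsilon^{\hslash\omega_D}U(x,\xi)\,\xi^{-1}\sqrt{v(\xi)}\,\tanh(\xi/2T_c)\,d\xi\bigr)^2$, which is continuous and strictly positive because $U>0$ and $v>0$, securing (C1) and the first estimate of (C2). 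The matching estimate for $\partial(Au)/\partial T$ and the far more delicate construction of $w^{*}$ for (C3)—obtained by carrying the expansion to second order in $(T_c-T)$ and differentiating the limiting relations under the integral—are where essentially all the work lies. The decisive point is that every remainder must be $o(\cdot)$ \emph{uniformly} in $x\in[\varepsilon,\hslash\omega_D]$, which is precisely what the uniform-in-$x$ formulation of (C2) and (C3) for $u$ supplies. I expect this uniform bookkeeping of the second-order asymptotics to be the principal obstacle.

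Continuity of $A$ on $W$ is comparatively routine: if $u_n\to u$ in $C([0,T_c]\times[\varepsilon,\hslash\omega_D])$, then on the compact range of attained values the map $\eta\mapsto\eta(\xi^2+\eta^2)^{-1/2}\tanh(\sqrt{\xi^2+\eta^2}/2T)$ is uniformly Lipschitz—the bound stays finite down to $T=0$ because $\xi\geq\varepsilon$ makes $\operatorname{sech}^2(\sqrt{\xi^2+\eta^2}/2T)/(2T)$ tend to $0$—so $|Au_n-Au|$ is dominated by a constant multiple of $\|u_n-u\|_\infty$ after integration, giving uniform convergence. Finally, (b) is read off from the monotonicity computation above: for fixed $T\in[0,\tau_2]$ the bounds $\Delta_1(T)\leq u(T,\cdot),\,Au(T,\cdot)\leq\Delta_2(T)$ place both $u(T,\cdot)$ and $Au(T,\cdot)$ in $V_T$, while $u_0(T,\cdot)\in V_T$ is Theorem \ref{thm:gap}; the range $T\in(T_c,\tau_2]$ is handled by noting that there $u(T,\cdot)=Au(T,\cdot)=u_0(T,\cdot)=0$ and $\Delta_1(T)=0\leq\Delta_2(T)$ since $\tau_1\leq T_c$.
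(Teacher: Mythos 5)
Your proposal follows essentially the same route as the paper: the monotonicity of $\eta\mapsto\eta(\xi^2+\eta^2)^{-1/2}\tanh(\sqrt{\xi^2+\eta^2}/2T)$ combined with \eqref{eq:smplgapequation} and \eqref{eq:smplgapequation2} for the bounds $\Delta_1(T)\leq Au(T,x)\leq\Delta_2(T)$, the candidate $v^{*}=F$ of \eqref{eq:functionF} for condition (C), and the Lipschitz-type estimate (as in Lemma \ref{lm:Bconti}) for continuity of $A$. The one place you only sketch — the uniform-in-$x$ second-order bookkeeping needed for (C3), where the paper introduces the function $G$ of \eqref{eq:functionG} — is also the part the paper itself dispatches as ``a straightforward calculation similar to the above,'' so your identification of that as the residual work is accurate and the approach is the same.
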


We choose $U_1$ and $U_2$ (see \eqref{eq:condition}) such that the following inequality holds:
\begin{equation}\label{eq:inequality}
\sup_{0 \leq T \leq \tau_2} \left| \Delta_2(T)-\Delta_1(T) \right| < \varepsilon_2,
\end{equation}
where $\varepsilon_2>0$ is small enough. Then it follows from Proposition \ref{prp:uauu0} (b) that for $u \in W$,
\begin{equation}\label{eq:gapfunction}
\left| u(T,\,x)-u_0(T,\, x) \right|< \varepsilon_2, \quad
\left| Au(T,\, x)-u_0(T,\, x) \right|< \varepsilon_2, \quad
\left| Au(T,\,x)-u(T,\, x) \right|< \varepsilon_2.
\end{equation}
at all $(T,\,x) \in [0,\, \tau_2] \times [\varepsilon,\,\hslash\omega_D]$.

\noindent \textbf{Approximation (A).} \   The gap function on the right side of \eqref{eq:delta} is the solution $u_0$ of Theorem \ref{thm:gap}, i.e., the solution to the BCS gap equation \eqref{eq:gapequation}. But no one gives the proof of the statement that there is a unique solution in $W$ to the BCS gap equation \eqref{eq:gapequation}. In view of \eqref{eq:gapfunction}, we then let $u_0$ be approximated by a $u \in W$, and replace the gap function on the right side of \eqref{eq:delta} by this $u \in W$.

Let $g: \, [0,\, \infty) \rightarrow \mathbb{R}$ be given by
\begin{equation}\label{eq:fng}
g(\eta)= \left\{ \begin{array}{ll}\displaystyle{
\frac{1}{\,\eta^2\,}\left( \frac{1}{\,\cosh^2\eta \,}
 -\frac{\,\tanh\eta\,}{\eta}\right) } \qquad &(\eta>0),\\
\noalign{\vskip0.3cm} \displaystyle{
-\frac{\,2\,}{\,3\,} } &(\eta=0).
\end{array}\right.
\end{equation}
Note that $g(\eta)<0$. See Lemma \ref{lm:gproperty} below for some properties of $g$.

\begin{theorem}\label{thm:phasetransition} \   Assume condition \eqref{eq:condition} on $U(\cdot,\,\cdot)$. Let $U_1$ and $U_2$ be chosen such that \eqref{eq:inequality} holds, and let the solution $u_0$ of Theorem \ref{thm:gap} be approximated by a $u \in W$ as stated in approximation (A) above. Let $v \in C([\varepsilon,\,\hslash\omega_D])$ be as in condition (C). Then the following hold.

{\rm (a)}\   The transition to a superconducting state at the transition temperature $T_c$ is a second-order phase transition. Consequently, the condition that the solution to the BCS gap equation \eqref{eq:gapequation} belongs to $W$ is a sufficient condition for the second-order phase transition in superconductivity.

{\rm (b)}\   The gap $\Delta C_V$ in the specific heat at constant volume at the transition temperature $T_c$ is given by the form
\[
\Delta C_V=-\frac{N_0}{\, 8\, T_c\,}
 \int_{\varepsilon/(2\, T_c)}^{\hslash\omega_D/(2\, T_c)} v(2T_c\,\eta)^2
 \, g(\eta)\, d\eta \quad (>0).
\]
\end{theorem}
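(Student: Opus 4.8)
The plan is to reduce everything to the behaviour of the single term $\Psi$ of \eqref{eq:delta} near $T_c$. Since $\Omega_S=\Omega_N+\Psi$ on $(0,T_c]$ while $\Omega=\Omega_N$ on $(T_c,\infty)$, and $\Omega_N$ is of class $C^2$ on $(0,\infty)$ by Remark \ref{rmk:omegan-v}, the regularity of $\Omega$ is governed entirely by $\Psi$: smoothness away from $T_c$ follows once $\Psi\in C^2((0,T_c))$, and the nature of the transition at $T_c$ is dictated by the one-sided limits $\lim_{T\uparrow T_c}\Psi'(T)$ and $\lim_{T\uparrow T_c}\Psi''(T)$. I would first record that, because $u\in W\subset C^2((0,T_c)\times[\varepsilon,\hslash\omega_D])$ and the integrands in \eqref{eq:delta} are smooth functions of $(E,T)$ with $E=\sqrt{x^2+u(T,x)^2}\ge x\ge\varepsilon>0$ bounded away from the only possible singularity, differentiation under the integral sign is legitimate and yields $\Psi\in C^2((0,T_c))$; together with $\Omega_N\in C^2$ this already delivers the $C^2$ regularity on $(0,\infty)\setminus\{T_c\}$ required in the definition.

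Next I would compute $\Psi'(T)$ and simplify. Writing $E'=\partial_T E=u\,(\partial_T u)/E$ and using the identity $1/(e^{E/T}+1)=\tfrac12\bigl(1-\tanh(E/2T)\bigr)$ to merge the $\tanh$ contribution of \eqref{eq:delta} with the logarithmic one, the bare $\int E'\,dx$ pieces from the first and third integrals cancel, and every surviving term carries a factor $u^2$ (or $u^3\,\partial_T u$). By Remark \ref{rmk:utc} one has $u(T_c,x)=0$, so $E\to x$ as $T\uparrow T_c$; invoking condition (C2) in the form $u(T,x)^2=(T_c-T)\,v(x)+o(T_c-T)$ and $2u\,\partial_T u\to-v$, each surviving term is $O(T_c-T)$, whence $\lim_{T\uparrow T_c}\Psi'(T)=0$. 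Since $\Psi(T_c)=0$ (again from $u(T_c,\cdot)=0$) and $\Omega'(T_c^+)=\Omega_N'(T_c)$, the one-sided derivatives of $\Omega$ agree at $T_c$, so $\Omega\in C^1((0,\infty))$, which is part (a) of the definition.

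The heart of the argument is $\lim_{T\uparrow T_c}\Psi''(T)$. I would differentiate the simplified expression for $\Psi'(T)$ once more and pass to the limit term by term using (C2) together with (C3) (that is $\partial^2 f/\partial T^2\to w$ and $\bigl(f+(T_c-T)\partial_T f\bigr)/(T_c-T)^2\to-w/2$ for $f=u^2$). Two structural cancellations must be exhibited. First, in the limit $E\to x$ the terms linear in $u^2$ coming from the three integrals of \eqref{eq:delta} are multiplied by a bracket $-1+\tanh(x/2T_c)+1-\tanh(x/2T_c)$ that vanishes identically in $x$; this annihilates both the $w$-dependent terms and the terms merely linear in $v$, so $w$ drops out of the answer. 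Second, the genuinely quadratic terms organise, after $\eta=x/(2T_c)$, into $\tfrac{1}{\cosh^2\eta}-\tfrac{\tanh\eta}{\eta}=\eta^2 g(\eta)$ of \eqref{eq:fng}, leaving
\[
\lim_{T\uparrow T_c}\Psi''(T)=\frac{N_0}{\,16\,T_c^{3}\,}\int_{\varepsilon}^{\hslash\omega_D}v(x)^2\,g\!\left(\frac{x}{2T_c}\right)dx\;(<0),
\]
the strict sign following from $v>0$ (condition (C1)) and $g<0$. This limit is finite and nonzero, so $\partial^2\Omega/\partial T^2$ jumps by exactly this amount at $T_c$ (the normal side giving $\Omega_N''(T_c)$, the $\Psi$ part giving $0$), which completes part (b) of the definition and hence part (a) of the theorem. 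For part (b) of the theorem it then suffices to note $\Delta C_V=C_V(T_c^-)-C_V(T_c^+)=-T_c\lim_{T\uparrow T_c}\Psi''(T)$, and the substitution $x=2T_c\eta$ turns the displayed integral into the asserted (positive) formula.

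The main obstacle I anticipate is precisely the rigorous passage to the limit in $\Psi''(T)$: the second differentiation produces terms with factors such as $(\partial_T u)^2$ and $u\,\partial^2_{TT}u$ that behave like $(T_c-T)^{-1}$ as $T\uparrow T_c$, so no term may be estimated in isolation. The role of condition (C3) is exactly to furnish the cancellations that render these divergent combinations convergent; one must retain the expansion of $u^2$ to second order in $(T_c-T)$ while simultaneously tracking the $O(T_c-T)$ variation of the smooth kernels (the $\tanh$ and logarithmic factors) to the same order. Verifying the two cancellations above requires carrying out this bookkeeping uniformly in $x\in[\varepsilon,\hslash\omega_D]$, which is where the uniformity of $\delta$ in condition (C) is indispensable.
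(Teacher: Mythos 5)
Your proposal is correct and reaches the paper's formula (your $\frac{N_0}{16T_c^3}\int_\varepsilon^{\hslash\omega_D}v(x)^2g(x/(2T_c))\,dx$ equals the $\frac{N_0}{8T_c^2}\int_{\varepsilon/(2T_c)}^{\hslash\omega_D/(2T_c)}v(2T_c\eta)^2g(\eta)\,d\eta$ of Lemma \ref{lm:deltac2} after $x=2T_c\eta$), and the overall strategy is the paper's: isolate $\Psi$ of \eqref{eq:delta} using Remark \ref{rmk:omegan-v}, drive the limit $T\uparrow T_c$ with condition (C) plus dominated convergence, and read off $\Delta C_V=-T_c\,(\partial^2\Psi/\partial T^2)(T_c)$. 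The genuine difference is in how the second derivative at $T_c$ is accessed. The paper works with one-sided difference quotients: since $\Psi(T_c)=0$ and $(\partial\Psi/\partial T)(T_c)=0$, Lemmas \ref{lm:deltat} and \ref{lm:deltac2} evaluate $-\lim_{T\uparrow T_c}\Psi(T)/(T_c-T)$ and $-\lim_{T\uparrow T_c}(\partial\Psi/\partial T)(T)/(T_c-T)$, for which the first-order information (C2) (namely $u^2/(T_c-T)\to v$ and $2u\,\partial_Tu\to-v$, uniformly) suffices; condition (C3) is deferred to a separate lemma establishing continuity of $\partial^2\Psi/\partial T^2$ on $(0,T_c]$. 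You instead compute $\lim_{T\uparrow T_c}\Psi''(T)$ directly, which forces you to confront the individually divergent terms $(\partial_Tu)^2$ and $u\,\partial^2_{TT}u$ and to invoke (C3) inside the main computation --- exactly the bookkeeping you flag as the principal obstacle, and which the paper's difference-quotient route sidesteps. In compensation, your route makes explicit two structural facts the paper leaves implicit in the already-simplified integrand of Lemma \ref{lm:deltac2}: the coefficient of the term linear in $u^2$ in $\Psi$ vanishes identically (the $-1+\tanh+1-\tanh$ bracket), which is precisely why $w$ never enters the answer, and the surviving quadratic coefficient organizes into $\eta^2g(\eta)$; moreover a limit-of-derivatives argument yields, via the mean value theorem, both the one-sided derivatives at $T_c$ and their left-continuity in one stroke, whereas the paper needs its extra continuity lemma for $\partial^2\Psi/\partial T^2$ to conclude that the jump in $C_V$ is the stated quantity.
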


\begin{remark}
Suppose that $U(x,\,\xi)=U_1$ at all $(x, \,\xi) \in [\varepsilon,\, \hslash\omega_D]^2$. By Proposition \ref{prp:u0delta}, $u_0(T,\, x)=\Delta_1(T)$ and $T_c=\tau_1$. Therefore the function $\Delta_1$ of Proposition \ref{prp:simplegap} becomes an element of $W$ (see \cite{watanabe2}), and hence, $u$ of Theorem \ref{thm:phasetransition} can be replaced by  $\Delta_1$. So, setting $\varepsilon=0$, we find that the gap $\Delta C_V$ reduces to the form (see \cite[Proposition 2.4]{watanabe2})
\begin{equation}\label{eq:smplform}
\Delta C_V=-N_0f'(T_c)\tanh \frac{\,\hslash\omega_D\,}{2T_c} \quad (>0),
\end{equation}
where \  $\displaystyle{ f'(T_c)=-\lim_{T \uparrow T_c}
\frac{\,\Delta_1(T)^2\,}{\,T_c-T\,} }$.
\end{remark}

%%%%%%%%%%%%%%%%%%%%%%%%%%%%%%%%  3  %%%%%%%%%%%%%%
\section{Proof of Theorem \ref{thm:gap}}

Let $0 \leq T \leq \tau_2$ and fix $T$. Let $V_T$ be as in \eqref{eq:vt}. A straightforward calculation gives the following.

\begin{lemma}
The set $V_T$ is bounded, closed and convex.
\end{lemma}

Let $x \in [\varepsilon,\,\hslash\omega_D]$ and define a mapping $B$ by
\begin{equation}\label{eq:mappingb}
Bu(T,\, x)=\int_{\varepsilon}^{\hslash\omega_D}
\frac{U(x,\,\xi)\, u(T,\, \xi)}{\,\sqrt{\,\xi^2+u(T,\, \xi)^2\,}\,}\,
\tanh \frac{\,\sqrt{\,\xi^2+u(T,\, \xi)^2\,}\,}{2T}\, d\xi,\qquad
u(T,\,\cdot) \in V_T \, .
\end{equation}

\begin{lemma}\label{lm:aucd} \quad $\displaystyle{ Bu(T,\,\cdot) \in C([\varepsilon,\,\hslash\omega_D]) }$ for $u(T,\,\cdot) \in V_T$.
\end{lemma}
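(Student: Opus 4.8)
The plan is to exploit the fact that, with $T$ fixed, the variable $x$ enters the integrand in \eqref{eq:mappingb} only through the factor $U(x,\,\xi)$, while the rest of the integrand depends on $\xi$ alone. Accordingly I would write
\[
Bu(T,\,x)=\int_{\varepsilon}^{\hslash\omega_D} U(x,\,\xi)\,K(\xi)\,d\xi,
\qquad
K(\xi)=\frac{u(T,\,\xi)}{\,\sqrt{\,\xi^2+u(T,\,\xi)^2\,}\,}\,
\tanh\frac{\,\sqrt{\,\xi^2+u(T,\,\xi)^2\,}\,}{2T},
\]
and first check that the integral is well defined. Since $u(T,\,\cdot)\in V_T\subset C([\varepsilon,\,\hslash\omega_D])$ and $\xi\geq\varepsilon>0$, the radicand satisfies $\sqrt{\,\xi^2+u(T,\,\xi)^2\,}\geq\varepsilon>0$, so there is no singularity; moreover $|u(T,\,\xi)|/\sqrt{\,\xi^2+u(T,\,\xi)^2\,}\leq 1$ together with $|\tanh|\leq 1$ gives $|K(\xi)|\leq 1$, and $K$ is continuous in $\xi$ on $[\varepsilon,\,\hslash\omega_D]$. (At $T=0$ one reads $\tanh(\sqrt{\,\xi^2+u(T,\,\xi)^2\,}/(2T))$ as its limit $1$, which is legitimate because the argument tends to $+\infty$ for $\xi\geq\varepsilon>0$.) Hence $K$ is bounded and the integral defining $Bu(T,\,x)$ exists for every $x$.

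For continuity in $x$ I would argue by uniform continuity of $U$. Condition \eqref{eq:condition} gives $U\in C^2([\varepsilon,\,\hslash\omega_D]^2)\subset C([\varepsilon,\,\hslash\omega_D]^2)$, so $U$ is uniformly continuous on the compact square. Given $\varepsilon_0>0$, choose $\delta>0$ so that $|x-x'|<\delta$ forces $|U(x,\,\xi)-U(x',\,\xi)|<\varepsilon_0$ uniformly in $\xi$. Then
\[
\bigl|\,Bu(T,\,x)-Bu(T,\,x')\,\bigr|
\leq\int_{\varepsilon}^{\hslash\omega_D}\bigl|\,U(x,\,\xi)-U(x',\,\xi)\,\bigr|\,|K(\xi)|\,d\xi
\leq\varepsilon_0\,(\hslash\omega_D-\varepsilon),
\]
which shows that $x\mapsto Bu(T,\,x)$ is (uniformly) continuous on $[\varepsilon,\,\hslash\omega_D]$. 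Equivalently, one could take $x_n\to x$ and pass to the limit under the integral sign by dominated convergence, with dominating function $U_2\cdot 1\in L^1([\varepsilon,\,\hslash\omega_D])$.

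The argument is entirely routine, and I do not expect a genuine obstacle; the only points needing care are bookkeeping—confirming that the denominator stays bounded away from zero (guaranteed by $\xi\geq\varepsilon>0$, which is precisely why the lower cutoff $\varepsilon$ was introduced in \eqref{eq:gapequation}) and handling the endpoint $T=0$ through the limiting value of $\tanh$. Both are immediate, so the substance of the lemma reduces to the uniform continuity of $U$ on the compact square $[\varepsilon,\,\hslash\omega_D]^2$.
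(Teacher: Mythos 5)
Your proof is correct and follows essentially the same route as the paper: both pull the $x$-dependence into the factor $U(x,\,\xi)$, bound the remaining $\xi$-dependent kernel by $1$, and conclude from the regularity of $U$ on the compact square. The only difference is that the paper extracts an explicit $\delta=\varepsilon_1\big/\bigl(\hslash\omega_D\sup|\partial U/\partial x|\bigr)$ via the mean value theorem (using $U\in C^2$), whereas you invoke uniform continuity of $U$; both yield a modulus of continuity independent of $u(T,\,\cdot)\in V_T$, which is exactly what is needed for the equicontinuity of $BV_T$ in the subsequent lemma.
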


\begin{proof} \  For $\varepsilon_1>0$, let \  
$\displaystyle{ \delta=
\frac{\varepsilon_1}{\, \displaystyle{ \hslash\omega_D\,\sup_{(x,\,\xi) \in [\varepsilon,\,\hslash\omega_D]^2} \left| \frac{\,\partial U\,}{\partial x}(x,\,\xi) \right| }
\,} }$. Then $\left| \, x-x_0\,\right|<\delta$ implies
\begin{eqnarray*}
\left| \, Bu(T,\, x)-Bu(T,\, x_0)\,\right|
&\leq& \int_{\varepsilon}^{\hslash\omega_D}
\left| U(x,\,\xi)-U(x_0,\,\xi) \right| \, d\xi \cr
&\leq& \hslash\omega_D\,\left| \, x-x_0\,\right|
 \sup_{(x,\,\xi) \in [\varepsilon,\,\hslash\omega_D]^2}
 \left| \frac{\,\partial U\,}{\partial x}(x,\,\xi) \right| \cr
&<& \varepsilon_1\,.
\end{eqnarray*}
\end{proof}

Since $\delta$ in the proof just above does not depend on $u(T,\,\cdot) \in V_T$, we immediately have the following.

\begin{lemma}\label{lm:equic}
The set $\displaystyle{ BV_T=\left\{ Bu(T,\,\cdot):\; u(T,\,\cdot) \in V_T \right\} }$  $\left( \subset C([\varepsilon,\,\hslash\omega_D]) \right)$ is equicontinuous.
\end{lemma}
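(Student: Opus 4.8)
The plan is to observe that Lemma~\ref{lm:equic} is an immediate consequence of the estimate already produced in the proof of Lemma~\ref{lm:aucd}; the only thing that needs emphasis is that the modulus of continuity found there is the \emph{same} for every member of the family $BV_T$.

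First I would isolate the two bounded factors in the integrand of \eqref{eq:mappingb}. For any $u(T,\,\cdot)\in V_T$ and any $\xi\in[\varepsilon,\,\hslash\omega_D]$, the quantity $\dfrac{u(T,\,\xi)}{\sqrt{\xi^2+u(T,\,\xi)^2}}$ has absolute value at most $1$, and $\tanh$ takes values in $(-1,1)$; hence their product is bounded by $1$ regardless of which $u(T,\,\cdot)\in V_T$ is chosen. Writing $Bu(T,\,x)-Bu(T,\,x_0)$ as a single integral in which these factors multiply the difference $U(x,\,\xi)-U(x_0,\,\xi)$, I would therefore discard them to reach
\[
\bigl|\,Bu(T,\,x)-Bu(T,\,x_0)\,\bigr|\le\int_\varepsilon^{\hslash\omega_D}\bigl|\,U(x,\,\xi)-U(x_0,\,\xi)\,\bigr|\,d\xi,
\]
exactly as in Lemma~\ref{lm:aucd}. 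The right-hand side no longer involves $u$ at all.

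Next I would bound the remaining integral by the mean value theorem, using that condition \eqref{eq:condition} guarantees $U\in C^2([\varepsilon,\,\hslash\omega_D]^2)$ and hence that $M:=\sup_{(x,\,\xi)}\bigl|\partial_x U(x,\,\xi)\bigr|$ is finite. This yields $\bigl|U(x,\,\xi)-U(x_0,\,\xi)\bigr|\le M\,|x-x_0|$ and thus $\bigl|Bu(T,\,x)-Bu(T,\,x_0)\bigr|\le\hslash\omega_D\,M\,|x-x_0|$. Given $\varepsilon_1>0$, the choice $\delta=\varepsilon_1/(\hslash\omega_D\,M)$ then makes $|x-x_0|<\delta$ force $\bigl|Bu(T,\,x)-Bu(T,\,x_0)\bigr|<\varepsilon_1$.

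The decisive observation---and really the whole content of the lemma---is that this $\delta$ depends only on $\varepsilon_1$ and on the fixed potential $U$, and not on the particular $u(T,\,\cdot)\in V_T$ nor even on the base point $x_0$. That is precisely the (uniform) equicontinuity of $BV_T$, so there is no genuine obstacle here; the step deserving the most care is simply verifying that the two integrand factors are uniformly bounded by $1$ over all of $V_T$, which is what frees the estimate from any dependence on $u$.
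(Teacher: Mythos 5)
Your proof is correct and follows the paper's own route exactly: the paper likewise observes that the $\delta$ constructed in the proof of Lemma~\ref{lm:aucd} is independent of $u(T,\,\cdot)\in V_T$, which immediately gives equicontinuity of $BV_T$. Your added remark that the integrand factors are uniformly bounded by $1$ is the same observation that makes the estimate in Lemma~\ref{lm:aucd} uniform over $V_T$.
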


\begin{lemma}\label{lm:d1aud2} \quad Let $u(T,\,\cdot) \in V_T$. Then \quad$\displaystyle{ \Delta_1(T) \leq Bu(T,\, x) \leq \Delta_2(T) }$ at $x \in [\varepsilon,\,\hslash\omega_D]$.
\end{lemma}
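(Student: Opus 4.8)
The plan is to reduce the claimed two-sided bound to a pointwise estimate on the integrand and then invoke the two simplified gap equations for $\Delta_1(T)$ and $\Delta_2(T)$. For fixed $\xi \in [\varepsilon,\,\hslash\omega_D]$ and fixed $T>0$, introduce
\[
F(s)=\frac{s}{\,\sqrt{\,\xi^2+s^2\,}\,}\,\tanh\frac{\,\sqrt{\,\xi^2+s^2\,}\,}{2T}\,,\qquad s\geq 0,
\]
so that the integrand of $Bu(T,\,x)$ in \eqref{eq:mappingb} is exactly $U(x,\,\xi)\,F\bigl(u(T,\,\xi)\bigr)$. Since $u(T,\,\cdot)\in V_T$ means $\Delta_1(T)\leq u(T,\,\xi)\leq\Delta_2(T)$ for every $\xi$, once I show that $F$ is nondecreasing in $s$ I get $F\bigl(\Delta_1(T)\bigr)\leq F\bigl(u(T,\,\xi)\bigr)\leq F\bigl(\Delta_2(T)\bigr)$ pointwise in $\xi$, and the conclusion will follow by integrating against $U(x,\,\xi)$ and comparing with the defining equations for $\Delta_1$ and $\Delta_2$.

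The key step, and essentially the only genuine computation, is the monotonicity of $F$. Writing $r=\sqrt{\,\xi^2+s^2\,}$, so that $dr/ds=s/r$, a direct differentiation gives
\[
F'(s)=\frac{1}{\,r^3\,}\left[\,\xi^2\,\tanh\frac{r}{2T}
+\frac{\,s^2 r\,}{2T}\,\frac{1}{\,\cosh^2(r/2T)\,}\,\right].
\]
Since $\xi\geq\varepsilon>0$, every term inside the bracket is nonnegative, so $F'(s)>0$; hence $F$ is strictly increasing on $[0,\,\infty)$, with $F\geq 0$ and $F(0)=0$. For the endpoint $T=0$ one replaces $\tanh$ by $1$, and then $F(s)=s/\sqrt{\,\xi^2+s^2\,}$ has $F'(s)=\xi^2/(\xi^2+s^2)^{3/2}>0$, so the same monotonicity persists.

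With monotonicity in hand the two bounds assemble symmetrically. For the lower bound, $U(x,\,\xi)\geq U_1$ together with $F\geq 0$ and $F\bigl(u(T,\,\xi)\bigr)\geq F\bigl(\Delta_1(T)\bigr)$ yields
\[
Bu(T,\,x)\geq U_1\int_{\varepsilon}^{\hslash\omega_D}F\bigl(\Delta_1(T)\bigr)\,d\xi=\Delta_1(T),
\]
where the equality is the simplified gap equation \eqref{eq:smplgapequation} multiplied through by $\Delta_1(T)$. In the same way, $U(x,\,\xi)\leq U_2$ and $F\bigl(u(T,\,\xi)\bigr)\leq F\bigl(\Delta_2(T)\bigr)$ give $Bu(T,\,x)\leq U_2\int_{\varepsilon}^{\hslash\omega_D}F\bigl(\Delta_2(T)\bigr)\,d\xi=\Delta_2(T)$ via \eqref{eq:smplgapequation2}. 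I expect the monotonicity identity for $F'(s)$ to be the main obstacle; the rest is routine bookkeeping, the only subtlety being the degenerate endpoints where $\Delta_1(T)=0$ (for $\tau_1\leq T\leq\tau_2$) or $\Delta_2(T)=0$ (at $T=\tau_2$), in which case both sides of the corresponding identity vanish and the asserted inequality collapses to the trivial $Bu(T,\,x)\geq 0$, valid since $u\geq 0$.
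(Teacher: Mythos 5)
Your proposal is correct and follows essentially the same route as the paper: both reduce the claim to the monotonicity in $s=u(T,\,\xi)$ of the integrand $s\,(\xi^2+s^2)^{-1/2}\tanh\bigl(\sqrt{\xi^2+s^2}/(2T)\bigr)$ and then invoke the simplified gap equations \eqref{eq:smplgapequation} and \eqref{eq:smplgapequation2} together with $U_1\leq U(x,\,\xi)\leq U_2$. The only cosmetic difference is that you verify the monotonicity by an explicit computation of $F'(s)$, whereas the paper factors the integrand into two separately monotone pieces and leaves the $\tanh$ factor implicit; your handling of the degenerate cases $\Delta_1(T)=0$ and $T=0$ is a welcome extra precision but not a different method.
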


\begin{proof}\quad We show $Bu(T,\, x) \leq \Delta_2(T)$.
Since
\[
\frac{u(T,\, \xi)}{\,\sqrt{\,\xi^2+u(T,\, \xi)^2\,}\,}\leq
\frac{\Delta_2(T)}{\,\sqrt{\,\xi^2+\Delta_2(T)^2\,}\,}\,,
\]
it follows from \eqref{eq:smplgapequation2} that
\begin{eqnarray*}
Bu(T,\, x) \leq \int_{\varepsilon}^{\hslash\omega_D}
\frac{U_2\, \Delta_2(T)}{\,\sqrt{\,\xi^2+\Delta_2(T)^2\,}\,}\,
\tanh \frac{\,\sqrt{\,\xi^2+\Delta_2(T)^2\,}\,}{2T}\, d\xi
=\Delta_2(T).
\end{eqnarray*}
The rest can be shown similarly by \eqref{eq:smplgapequation}.
\end{proof}

Combining Lemma \ref{lm:aucd} with Lemma \ref{lm:d1aud2} immediately implies the following.

\begin{lemma} \quad $\displaystyle{BV_T \subset V_T}$.
\end{lemma}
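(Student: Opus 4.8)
The plan is to verify directly that the image of an arbitrary element of $V_T$ under $B$ satisfies the two defining requirements for membership in $V_T$: continuity on $[\varepsilon,\,\hslash\omega_D]$, together with the pointwise bounds below by $\Delta_1(T)$ and above by $\Delta_2(T)$. Since each of these two requirements has already been isolated as a separate lemma, the argument reduces to a routine combination rather than any fresh estimate, and I would present it in exactly that spirit.

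First I would fix an arbitrary $u(T,\,\cdot) \in V_T$ and invoke Lemma \ref{lm:aucd} to conclude that $Bu(T,\,\cdot) \in C([\varepsilon,\,\hslash\omega_D])$, so that the image lies in the ambient Banach space in which $V_T$ is defined. Next I would apply Lemma \ref{lm:d1aud2} to the same $u(T,\,\cdot)$, obtaining the two-sided bound
\[
\Delta_1(T) \leq Bu(T,\, x) \leq \Delta_2(T), \qquad x \in [\varepsilon,\,\hslash\omega_D].
\]
Comparing these two conclusions with the definition \eqref{eq:vt} of $V_T$, I see that $Bu(T,\,\cdot)$ meets every condition required for membership, whence $Bu(T,\,\cdot) \in V_T$.

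Because $u(T,\,\cdot)$ was an arbitrary element of $V_T$, this establishes $BV_T \subset V_T$. There is no genuine obstacle at this step: the substantive work has already been done, namely the continuity estimate resting on the $C^2$-regularity of $U$ in Lemma \ref{lm:aucd} and the monotonicity comparison against the simplified gap equations \eqref{eq:smplgapequation} and \eqref{eq:smplgapequation2} in Lemma \ref{lm:d1aud2}. The present statement merely repackages those two facts as the single invariance property $BV_T \subset V_T$, which is precisely the form in which the self-map hypothesis of the Schauder fixed-point theorem will later be applied to produce the solution of Theorem \ref{thm:gap}.
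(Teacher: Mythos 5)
Your proof is correct and matches the paper's argument exactly: the paper likewise obtains $BV_T \subset V_T$ by combining Lemma \ref{lm:aucd} (continuity of $Bu(T,\,\cdot)$) with Lemma \ref{lm:d1aud2} (the bounds $\Delta_1(T) \leq Bu(T,\,x) \leq \Delta_2(T)$). Nothing is missing.
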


By Lemma \ref{lm:d1aud2}, the set $BV_T$ is uniformly bounded since
\[
Bu(T,\, x) \leq \Delta_2(0)=\frac{\,
\sqrt{ \left( \hslash\omega_D-\varepsilon\, e^{1/U_2} \right)
\left( \hslash\omega_D-\varepsilon\, e^{-1/U_2} \right) }\,}
{\,\sinh\frac{1}{\,U_2\,}\,} \qquad \mbox{for} \quad u(T,\,\cdot) \in V_T.
\]
Combining Lemma \ref{lm:equic} with the Ascoli--Arzel$\grave{\mbox{a}}$ theorem thus yields the following.

\begin{lemma}
The set $BV_T$, a subset of the Banach space $C([\varepsilon,\,\hslash\omega_D])$, is relatively compact.
\end{lemma}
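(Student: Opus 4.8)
The plan is to apply the Arzel\`a--Ascoli theorem directly, since the two hypotheses it requires have both just been put in place. On the compact interval $[\varepsilon,\,\hslash\omega_D]$, a subset of $C([\varepsilon,\,\hslash\omega_D])$ is relatively compact precisely when it is uniformly bounded and equicontinuous, so all I need to do is assemble these two properties for the family $BV_T$ and invoke the theorem.

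First I would record equicontinuity, which is already available from Lemma \ref{lm:equic}. The decisive feature there is that the modulus-of-continuity estimate established in Lemma \ref{lm:aucd} produced a $\delta$ depending only on $\varepsilon_1$ and on $\sup_{(x,\,\xi)}\left|\frac{\partial U}{\partial x}(x,\,\xi)\right|$, and never on the particular function $u(T,\,\cdot)\in V_T$. Hence a single $\delta$ works simultaneously for every member of $BV_T$, which is exactly equicontinuity of the family.

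Second I would read off uniform boundedness from Lemma \ref{lm:d1aud2} together with the monotonicity of $\Delta_2$. For every $u(T,\,\cdot)\in V_T$ and every $x\in[\varepsilon,\,\hslash\omega_D]$ one has $0\le\Delta_1(T)\le Bu(T,\,x)\le\Delta_2(T)\le\Delta_2(0)$, and the rightmost bound $\Delta_2(0)$ is a single constant independent of both $u$ and $x$. Thus $BV_T$ is contained in a fixed ball of $C([\varepsilon,\,\hslash\omega_D])$. With equicontinuity and uniform boundedness both in hand, the Ascoli--Arzel\`a theorem immediately yields that $BV_T$ is relatively compact.

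There is essentially no obstacle here, since every ingredient has been prepared in the preceding lemmas; this step is purely a citation of Ascoli--Arzel\`a. The only point I would take care to confirm is that the bound $\Delta_2(0)$ is genuinely independent of the chosen $u(T,\,\cdot)$, and this is immediate because that bound is obtained by replacing $u(T,\,\cdot)$ by the constant $\Delta_2(T)$ under the integral sign before using \eqref{eq:smplgapequation2}, so the resulting estimate no longer refers to $u$ at all.
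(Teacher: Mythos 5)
Your proposal is correct and follows essentially the same route as the paper: the paper likewise combines the uniform bound $Bu(T,\,x)\leq\Delta_2(T)\leq\Delta_2(0)$ from Lemma \ref{lm:d1aud2} with the equicontinuity of Lemma \ref{lm:equic} and then invokes the Ascoli--Arzel\`a theorem. No gaps.
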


\begin{lemma}\label{lm:Bconti} \   The mapping $\displaystyle{ B: \, V_T \longrightarrow V_T }$ is continuous with respect to the norm of the Banach space $C([\varepsilon,\,\hslash\omega_D])$.
\end{lemma}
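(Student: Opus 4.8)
The plan is to prove the stronger statement that $B$ is Lipschitz continuous on $V_T$, from which continuity in the sup norm follows at once. Since $U(x,\xi)\le U_2$ by \eqref{eq:condition}, I would first bound, for any two elements $u(T,\cdot),\tilde u(T,\cdot)\in V_T$ and each fixed $x\in[\varepsilon,\hslash\omega_D]$,
\[
\left| Bu(T,x)-B\tilde u(T,x)\right|\le U_2\int_{\varepsilon}^{\hslash\omega_D}\left| h(\xi,u(T,\xi))-h(\xi,\tilde u(T,\xi))\right|\,d\xi,
\]
where $h(\xi,s)=\dfrac{s}{\sqrt{\xi^2+s^2}}\tanh\dfrac{\sqrt{\xi^2+s^2}}{2T}$ collects the entire $\xi$-dependence of the integrand as a function of the single variable $s=u(T,\xi)\ge 0$. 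The problem thus reduces to estimating $\left|h(\xi,s)-h(\xi,\tilde s)\right|$ by $\left|s-\tilde s\right|$, uniformly in $\xi$.

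The key step is a uniform bound on $\partial h/\partial s$. Writing $r=\sqrt{\xi^2+s^2}$, so that $r\ge\xi\ge\varepsilon>0$ removes any singularity, a direct differentiation gives
\[
\frac{\partial h}{\partial s}=\frac{\xi^2}{r^3}\tanh\frac{r}{2T}+\frac{s^2}{r^2}\cdot\frac{1}{2T\cosh^2(r/2T)}.
\]
Both summands are nonnegative. Using $\tanh\le 1$, $\xi^2/r^2\le 1$ and $1/r\le 1/\varepsilon$ in the first term, and $s^2/r^2\le 1$ together with $\cosh^2\ge 1$ in the second, I obtain for $T>0$ the $s$- and $\xi$-independent bound $0\le\partial h/\partial s\le 1/\varepsilon+1/(2T)$. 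By the mean value theorem this yields $\left|h(\xi,s)-h(\xi,\tilde s)\right|\le(1/\varepsilon+1/(2T))\left|s-\tilde s\right|$, and substituting into the previous display gives
\[
\left\|Bu(T,\cdot)-B\tilde u(T,\cdot)\right\|_\infty\le U_2(\hslash\omega_D-\varepsilon)\left(\frac{1}{\varepsilon}+\frac{1}{2T}\right)\left\|u(T,\cdot)-\tilde u(T,\cdot)\right\|_\infty,
\]
the desired Lipschitz estimate with constant depending only on the fixed $T$.

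The one point requiring separate attention is the endpoint $T=0$, at which $\tanh(r/2T)$ must be read as its limit $1$ (since $r>0$) and $h(\xi,s)=s/\sqrt{\xi^2+s^2}$; here $\partial h/\partial s=\xi^2/r^3\le 1/\varepsilon$, so the same conclusion holds with constant $U_2(\hslash\omega_D-\varepsilon)/\varepsilon$. I expect this boundary case, and the verification that the derivative bound is genuinely uniform over all $s\ge 0$ rather than merely over $[\Delta_1(T),\Delta_2(T)]$, to be the only delicate bookkeeping; the core estimate is otherwise a routine application of the mean value theorem. Finally, since the Lipschitz constant is finite for each fixed $T\in[0,\tau_2]$ and $BV_T\subset V_T$ has already been established, the mapping $B:V_T\longrightarrow V_T$ is continuous with respect to the norm of $C([\varepsilon,\hslash\omega_D])$.
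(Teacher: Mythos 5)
Your argument is correct, and it reaches the same conclusion as the paper (Lipschitz continuity of $B$ on $V_T$, hence continuity) by a genuinely different decomposition. The paper telescopes the difference of the two integrands into three pieces --- changing $u$ to $v$ in the numerator, in the factor $(\xi^2+u^2)^{-1/2}$, and in the $\tanh$, one at a time --- and bounds each piece by $|u(T,\xi)-v(T,\xi)|/\xi$ using elementary inequalities such as $|\tanh a-\tanh b|\le|a-b|$ and $\bigl|\sqrt{\xi^2+u^2}-\sqrt{\xi^2+v^2}\bigr|\le|u-v|$; integrating $3/\xi$ gives the Lipschitz constant $3U_2\ln(\hslash\omega_D/\varepsilon)$, which is uniform in $T\in[0,\tau_2]$. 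You instead apply the mean value theorem to the composite $s\mapsto h(\xi,s)$ after computing $\partial h/\partial s$ exactly; your derivative formula is right and the resulting bound $1/\varepsilon+1/(2T)$ is valid, so for the lemma as stated (with $T$ fixed) your proof is complete, including the separate treatment of $T=0$. The only cost of your route is that your Lipschitz constant degenerates as $T\downarrow 0$, whereas the paper's does not; this is harmless here but would matter if one wanted an estimate uniform in $T$. You can in fact recover (and slightly improve) the paper's uniform constant within your own framework by noting that $\sinh(2\eta)\ge 2\eta$ gives
\[
\frac{1}{\,2T\cosh^{2}\bigl(r/(2T)\bigr)\,}\le\frac{\tanh\bigl(r/(2T)\bigr)}{r}\le\frac{1}{r},
\]
so that both summands of $\partial h/\partial s$ are bounded by $1/r\le 1/\xi$ and hence $0\le\partial h/\partial s\le 2/\xi$ uniformly in $s\ge 0$ and $T\ge 0$; integrating then yields the Lipschitz constant $2U_2\ln(\hslash\omega_D/\varepsilon)$, marginally sharper than the paper's $3U_2\ln(\hslash\omega_D/\varepsilon)$.
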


\begin{proof} \   Let $u(T,\,\cdot),\,v(T,\,\cdot) \in V_T$. Then
\begin{eqnarray*}
& & \left| \frac{u(T,\, \xi)}{\,\sqrt{\,\xi^2+u(T,\, \xi)^2\,}\,}\,
 \tanh \frac{\,\sqrt{\,\xi^2+u(T,\, \xi)^2\,}\,}{2T}
 -\frac{v(T,\, \xi)}{\,\sqrt{\,\xi^2+v(T,\, \xi)^2\,}\,}\,
 \tanh \frac{\,\sqrt{\,\xi^2+v(T,\, \xi)^2\,}\,}{2T} \right| \cr
\noalign{\vskip2mm}
& & \leq \frac{\,\left| u(T,\, \xi)-v(T,\, \xi)\right|\,}
 {\,\sqrt{\,\xi^2+u(T,\, \xi)^2\,}\,}
 +v(T,\, \xi)\left| \frac{1}{\,\sqrt{\,\xi^2+u(T,\, \xi)^2\,}\,}
    -\frac{1}{\,\sqrt{\,\xi^2+v(T,\, \xi)^2\,}\,} \right| \cr
\noalign{\vskip2mm}
& & \quad +\frac{v(T,\, \xi)}{\,\sqrt{\,\xi^2+v(T,\, \xi)^2\,}\,}\,
   \left| \tanh \frac{\,\sqrt{\,\xi^2+u(T,\, \xi)^2\,}\,}{2T}
   -\tanh \frac{\,\sqrt{\,\xi^2+v(T,\, \xi)^2\,}\,}{2T} \right| \cr
\noalign{\vskip2mm}
& & \leq 3\frac{\,\left| u(T,\, \xi)-v(T,\, \xi)\right|\,}{\xi}\,.
\end{eqnarray*}
Thus \  $\displaystyle{
\left\| \, Bu(T,\,\cdot)-Bv(T,\,\cdot)\,\right\| \leq 3\, U_2
\ln \frac{\,\hslash\omega_D\,}{\,\varepsilon\,}\cdot
\left\| \, u(T,\,\cdot)-v(T,\,\cdot)\,\right\| }$, \quad
where $\displaystyle{ \left\| \cdot \right\| }$ stands for the norm of the Banach space $C([\varepsilon,\,\hslash\omega_D])$.
\end{proof}

We now have the following.

\begin{lemma}\quad The mapping $\displaystyle{ B: \, V_T \longrightarrow V_T}$ is compact, i.e., the mapping $\displaystyle{ B: \, V_T \longrightarrow V_T}$ is continuous and transforms bounded sets into relatively compact sets.
\end{lemma}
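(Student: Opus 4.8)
The plan is to observe that the statement is a packaging of results already established in this section, so the proof is an assembly rather than a new computation. By definition, proving that $B\colon V_T \longrightarrow V_T$ is compact amounts to verifying two things separately: first, that $B$ is continuous with respect to the sup norm of $C([\varepsilon,\,\hslash\omega_D])$; and second, that $B$ maps every bounded subset of $V_T$ to a relatively compact set. I would treat these two requirements in turn, invoking the appropriate earlier lemma for each.

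For continuity, I would simply cite Lemma~\ref{lm:Bconti}, which supplies the global estimate
\[
\left\| \, Bu(T,\,\cdot)-Bv(T,\,\cdot)\,\right\| \leq 3\, U_2 \ln \frac{\,\hslash\omega_D\,}{\,\varepsilon\,}\cdot \left\| \, u(T,\,\cdot)-v(T,\,\cdot)\,\right\|
\]
for all $u(T,\,\cdot),\,v(T,\,\cdot)\in V_T$. Since the constant $3U_2\ln(\hslash\omega_D/\varepsilon)$ is finite and independent of the functions, this is in fact a Lipschitz bound, and continuity of $B$ on $V_T$ follows at once.

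For the second requirement, the key point is that $V_T$ is \emph{itself} a bounded set, as recorded in the earlier lemma asserting that $V_T$ is bounded, closed and convex. Consequently any bounded subset $S\subset V_T$ satisfies $B(S)\subset BV_T$. Now $BV_T$ was shown to be relatively compact: its equicontinuity is Lemma~\ref{lm:equic} (with $\delta$ independent of the function, as noted after Lemma~\ref{lm:aucd}), its uniform boundedness follows from the bound $Bu(T,\,x)\le\Delta_2(0)$ derived from Lemma~\ref{lm:d1aud2}, and the Ascoli--Arzel\`a theorem then yields relative compactness. Since a subset of a relatively compact set is relatively compact, $B(S)$ is relatively compact, which is the desired property.

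I do not expect a genuine obstacle here, because all the analytic content --- equicontinuity, the uniform bound, and the Lipschitz estimate --- has already been carried out in the preceding lemmas. The only subtlety worth flagging explicitly is the interpretation of ``bounded sets'': because the domain $V_T$ is itself bounded, the condition reduces to the relative compactness of the single set $BV_T$, so no separate argument for arbitrary unbounded inputs is needed. The proof therefore consists in stating these reductions and citing Lemmas~\ref{lm:equic}, \ref{lm:d1aud2} and~\ref{lm:Bconti} together with Ascoli--Arzel\`a.
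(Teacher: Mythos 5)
Your proposal is correct and matches the paper's approach: the paper states this lemma without a separate proof, treating it as the immediate assembly of Lemma~\ref{lm:Bconti} (continuity) with the relative compactness of $BV_T$ obtained from Lemma~\ref{lm:equic}, the uniform bound from Lemma~\ref{lm:d1aud2}, and the Ascoli--Arzel\`a theorem. Your observation that boundedness of $V_T$ itself reduces the second condition to the relative compactness of the single set $BV_T$ is exactly the intended reading.
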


See Zeidler \cite[pp.39--40]{zeidler} for (nonlinear) compact operators. The Schauder fixed-point theorem (see e.g. Zeidler \cite[p.61]{zeidler}) thus implies the following.

\begin{lemma}\label{lm:u0}\quad Let $T \in [0,\, \tau_2]$ be fixed. Then the mapping $\displaystyle{ B: \, V_T \longrightarrow V_T}$ has at least one fixed point $u_0(T,\,\cdot) \in V_T$, i.e.,
\[
u_0(T,\,\cdot)=Bu_0(T,\,\cdot)\,,\qquad u_0(T,\,\cdot) \in V_T.
\]
\end{lemma}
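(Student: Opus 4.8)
The plan is to invoke the Schauder fixed-point theorem directly, since the preceding lemmas have already assembled every hypothesis it requires. The version I would use reads: if $M$ is a nonempty, bounded, closed, and convex subset of a Banach space $X$, and $B: M \to M$ is a compact operator, then $B$ possesses at least one fixed point in $M$. Here I take $X = C([\varepsilon,\, \hslash\omega_D])$ and $M = V_T$, with $T \in [0,\, \tau_2]$ fixed throughout.

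First I would confirm that $V_T$ is an admissible domain. The (unlabelled) lemma just before Lemma \ref{lm:aucd} already records that $V_T$ is bounded, closed, and convex. It remains only to observe that $V_T$ is nonempty: by Lemma \ref{lm:tautau}(b) we have $\Delta_1(T) \le \Delta_2(T)$ for every $T \in [0,\, \tau_2]$, so the constant function $x \mapsto \Delta_1(T)$ satisfies $\Delta_1(T) \le \Delta_1(T) \le \Delta_2(T)$ and hence lies in $V_T$.

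Second I would verify the operator hypotheses. The inclusion $BV_T \subset V_T$ is exactly the content of the lemma asserting $BV_T \subset V_T$, itself obtained by combining Lemma \ref{lm:aucd} (which places $Bu(T,\,\cdot)$ in $C([\varepsilon,\, \hslash\omega_D])$) with Lemma \ref{lm:d1aud2} (which pins $Bu(T,\,\cdot)$ between $\Delta_1(T)$ and $\Delta_2(T)$); thus $B$ indeed maps $V_T$ into itself. Compactness of $B: V_T \to V_T$ — continuity together with the transformation of bounded sets into relatively compact ones — is precisely the lemma established immediately above, resting on Lemma \ref{lm:Bconti} for continuity and on Lemma \ref{lm:equic}, the uniform bound $Bu(T,\,x) \le \Delta_2(0)$, and the Ascoli--Arzel\`a theorem for relative compactness of $BV_T$.

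With all four hypotheses in hand, the Schauder fixed-point theorem produces a point $u_0(T,\,\cdot) \in V_T$ with $u_0(T,\,\cdot) = Bu_0(T,\,\cdot)$, which is exactly the assertion of the lemma. I do not expect a genuine obstacle at this stage: the analytic difficulty has already been discharged in the preparatory lemmas — the \emph{a priori} bounds of Lemma \ref{lm:d1aud2}, the equicontinuity of Lemma \ref{lm:equic}, and the Lipschitz estimate of Lemma \ref{lm:Bconti} — and the present step is a bookkeeping application of an off-the-shelf theorem. The one point worth flagging is the nonemptiness of $V_T$, the only required hypothesis not literally named in an immediately adjacent lemma; it rests on the inequality $\Delta_1(T) \le \Delta_2(T)$ supplied by Lemma \ref{lm:tautau}(b), which must hold uniformly for $T \in [0,\, \tau_2]$.
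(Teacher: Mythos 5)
Your proposal is correct and follows exactly the paper's route: the paper also deduces the lemma directly from the Schauder fixed-point theorem, using the preceding lemmas that $V_T$ is bounded, closed and convex and that $B:V_T\to V_T$ is compact. Your explicit check that $V_T$ is nonempty (via $\Delta_1(T)\le\Delta_2(T)$ from Lemma \ref{lm:tautau}) is a small point the paper leaves implicit, but it does not change the argument.
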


Let us prove the uniqueness of $u_0(T,\,\cdot) \in V_T$.

\begin{lemma}\label{lm:unique} \quad Let $T \in [0,\, \tau_2]$ be fixed. Then the mapping $\displaystyle{ B: \, V_T \longrightarrow V_T}$ has a unique fixed point $u_0(T,\,\cdot) \in V_T$.
\end{lemma}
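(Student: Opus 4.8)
The plan is to prove uniqueness from the special \emph{monotonicity and concavity} of the nonlinearity, not from a contraction: the Lipschitz estimate in Lemma~\ref{lm:Bconti} has constant $3U_2\ln(\hslash\omega_D/\varepsilon)$, which we are not entitled to assume is $<1$, so $B$ need not be a contraction on $V_T$ and uniqueness must be extracted from the structure of the kernel. Accordingly, I would first record two pointwise facts about the function
\[
G(\xi,s)=\frac{s}{\,\sqrt{\,\xi^2+s^2\,}\,}\,\tanh\frac{\,\sqrt{\,\xi^2+s^2\,}\,}{2T}
\]
that appears in \eqref{eq:mappingb}. Writing $r=\sqrt{\xi^2+s^2}$, a direct differentiation gives
\[
\frac{\partial G}{\partial s}(\xi,s)=\frac{\xi^2}{r^3}\tanh\frac{r}{2T}+\frac{s^2}{\,2T\,r^2\cosh^2(r/2T)\,}\;>\;0 ,
\]
so $s\mapsto G(\xi,s)$ is strictly increasing; and since $G(\xi,s)/s=\tanh(r/2T)/r$ with $y\mapsto\tanh(y)/y$ strictly decreasing and $r$ increasing in $s$, the ratio $G(\xi,s)/s$ is strictly decreasing. (At $T=0$ one reads $\tanh(\cdot)=1$ and the same two facts hold.) The first fact makes $B$ order preserving, $u\le v\Rightarrow Bu\le Bv$ pointwise, and the second makes $B$ strictly sublinear: $B(t\,u)>t\,Bu$ pointwise for every $t\in(0,1)$ and every $u$ that is positive on $[\varepsilon,\hslash\omega_D]$.

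Second, I would run the standard comparison argument for monotone sublinear operators to show that $B$ has at most one strictly positive fixed point. Suppose $u_0,u_1\in V_T$ are fixed points, each strictly positive on the compact interval $[\varepsilon,\hslash\omega_D]$ (hence each bounded below by a positive constant). Put $t_0=\sup\{\,t>0:\,u_0\ge t\,u_1\,\}$; by compactness $0<t_0<\infty$ and $u_0\ge t_0 u_1$. If $t_0<1$, then monotonicity and strict sublinearity yield
\[
u_0=Bu_0\ge B(t_0 u_1)>t_0\,Bu_1=t_0\,u_1
\]
pointwise, and since $u_0-t_0u_1$ is continuous and strictly positive on a compact set it is bounded below by a positive constant; this permits a slight increase of $t_0$ with $u_0\ge t_0 u_1$ still holding, contradicting the definition of $t_0$. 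Hence $t_0\ge1$, i.e. $u_0\ge u_1$, and by symmetry $u_1\ge u_0$, so $u_0=u_1$. Combined with the existence from Lemma~\ref{lm:u0}, this pins down the positive fixed point. I would complete the case analysis with the dichotomy that a fixed point is either strictly positive throughout or identically zero: if a fixed point $u$ satisfies $u(x_1)=0$, then $0=\int_\varepsilon^{\hslash\omega_D}U(x_1,\xi)\,G(\xi,u(\xi))\,d\xi$ with $U>0$ and $G(\xi,\cdot)\ge0$ vanishing only at $0$ forces $u\equiv0$ (this reproves Proposition~\ref{prp:uzerotzero} directly, avoiding circularity).

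The hard part will be the degenerate regime where the lower bound $\Delta_1(T)$ of $V_T$ vanishes, i.e. $\tau_1\le T\le\tau_2$. There the trivial function $u\equiv0$ also lies in $V_T$ and satisfies $B0=0$, so the concavity argument alone separates only the two possible \emph{positive} fixed points and does not by itself exclude the coexistence of a positive solution with the zero solution. For $T<\tau_1$ this obstacle is absent: every element of $V_T$, hence every fixed point, is bounded below by $\Delta_1(T)>0$, so $0\notin V_T$ and the positive-fixed-point uniqueness above is exactly what is needed. For $\tau_1\le T\le\tau_2$ I would argue case by case using the dichotomy, so that the whole question reduces to ruling out that a strictly positive fixed point and the zero fixed point occur for the \emph{same} $T$; I expect this to be the genuine crux, and it is tied to the definition of $T_c$, since the linearization of $B$ at $0$ has integral kernel $U(x,\xi)\tanh(\xi/2T)/\xi$ whose spectral radius crosses $1$ precisely at $T_c$. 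The cleanest way to close this is to show that the existence of a nonzero fixed point is confined to $T<T_c$, so that consistency of the branch with Definition~\ref{dfn:tcgeneral} prevents coexistence and renders the element of $V_T$ solving \eqref{eq:gapequation} unique for each fixed $T\in[0,\tau_2]$.
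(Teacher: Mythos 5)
Your proposal is correct and follows essentially the same route as the paper: the proof there (modeled on Amann's Theorem 24.2) uses exactly the monotonicity and strict sublinearity of $s\mapsto s\,(\xi^2+s^2)^{-1/2}\tanh(\sqrt{\xi^2+s^2}/(2T))$ via a largest-$t$ comparison between two positive fixed points, with the contradiction extracted at the touching point $x_0$ rather than by bumping $t_0$ upward, and it likewise invokes Proposition~\ref{prp:uzerotzero} for the zero/positive dichotomy. The ``crux'' you flag --- ruling out coexistence of the zero fixed point with a positive one for $\tau_1\le T\le\tau_2$ --- is not actually resolved in the paper either: a remark following the proof of Proposition~\ref{prp:uzerotzero} simply declares that the fixed point $0\in V_T$ is ``disregarded,'' so your assessment of where the genuine difficulty lies matches the true state of the argument.
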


\begin{proof}
We give a proof similar to that of Thereom 24.2 given by Amann \cite{amann}. Let $v_0(T,\,\cdot) \in V_T$ be another fixed point of $B$, i.e., $v_0(T,\,\cdot)=Bv_0(T,\,\cdot)$. We deal with the case where $u_0(T,\, x)>0$ and $v_0(T,\, x)>0$ at all $x \in [\varepsilon,\,\hslash\omega_D]$ (see Proposition \ref{prp:uzerotzero}).

\textit{Step 1. The case where $\{ x \in [\varepsilon,\,\hslash\omega_D]: u_0(T,\, x) \geq v_0(T,\, x) \} \not= \emptyset$ and $\{ x \in [\varepsilon,\,\hslash\omega_D]: u_0(T,\, x) < v_0(T,\, x) \} \not= \emptyset$.}

Then there are a number $t$ \  $(0<t<1)$ and a point $x_0 \in [\varepsilon,\,\hslash\omega_D]$ such that
\begin{equation}\label{eq:T_0x_0}
u_0(T,\, x)\geq t\,v_0(T,\, x) \quad (x \in [\varepsilon,\,\hslash\omega_D]) \quad \mbox{and} \quad u_0(T,\, x_0)=t\, v_0(T,\, x_0).
\end{equation}
Hence
\begin{eqnarray*}
u_0(T,\, x_0) &=& \int_{\varepsilon}^{\hslash\omega_D}
 \frac{U(x_0,\,\xi)\, u_0(T,\, \xi)}{\,\sqrt{\,\xi^2+u_0(T,\, \xi)^2\,}\,}
 \,\tanh \frac{\,\sqrt{\,\xi^2+u_0(T,\, \xi)^2\,}\,}{2T}\, d\xi \cr
& & \cr
&\geq& \int_{\varepsilon}^{\hslash\omega_D}
 \frac{U(x_0,\,\xi)\, t\,v_0(T,\, \xi)}{\,\sqrt{\,\xi^2+t^2\,v_0(T,\, \xi)^2 \,}\,}\,
 \tanh \frac{\,\sqrt{\,\xi^2+t^2\,v_0(T,\, \xi)^2\,}\,}{2T}\, d\xi \cr
& & \cr
&>& t\,\int_{\varepsilon}^{\hslash\omega_D}
\frac{U(x_0,\,\xi)\, v_0(T,\, \xi)}{\,\sqrt{\,\xi^2+v_0(T,\, \xi)^2\,}\,}\,\tanh \frac{\,\sqrt{\,\xi^2+v_0(T,\, \xi)^2\,}\,}{2T}\, d\xi \cr
& & \cr
&=& t\,v_0(T,\, x_0),
\end{eqnarray*}
which contradicts \eqref{eq:T_0x_0}.

\textit{Step 2. The case where $u_0(T,\, x) \leq v_0(T,\, x)$ at all $x \in [\varepsilon,\,\hslash\omega_D]$.}

We again have \eqref{eq:T_0x_0}. Hence the same reasoning applies.

Thus $u_0(T,\,\cdot)=v_0(T,\,\cdot)$.
\end{proof}

The proof of Lemma \ref{lm:unique} is based on the condition that $U(\cdot,\,\cdot)$ is nonnegative. The proof of Theorem \ref{thm:gap} is complete.

We assumed also, in the proof of Lemma \ref{lm:unique}, that $u_0(T,\, x)>0$ and $v_0(T,\, x)>0$ at all $x \in [\varepsilon,\,\hslash\omega_D]$. We now show that if there is a point $x_1 \in [\varepsilon,\,\hslash\omega_D]$ satisfying $u_0(T,\,x_1)=0$, then $u_0(T,\,x)=0$ at all $x \in [\varepsilon,\,\hslash\omega_D]$ (see Proposition \ref{prp:uzerotzero}).

\noindent \textit{Proof of Proposition \ref{prp:uzerotzero}}.

Since $u_0(T,\,x_1)=0$, Theorem \ref{thm:gap} implies
\[
\int_{\varepsilon}^{\hslash\omega_D}
 \frac{U(x_1,\,\xi)\, u_0(T,\, \xi)}{\,\sqrt{\,\xi^2+u_0(T,\, \xi)^2\,}\,}
 \,\tanh \frac{\,\sqrt{\,\xi^2+u_0(T,\, \xi)^2\,}\,}{2T}\, d\xi=0.
\]
Since the integrand is nonnegative, it follows that $u_0(T,\, \xi)=0$ at all $\xi \in [\varepsilon,\,\hslash\omega_D]$.

\begin{remark}\quad Let $\tau_1 \leq T < \tau_2$ and fix $T$. Clearly, $0 \in V_T$ is a fixed point of $B$. If, for such a $T$, there are the two fixed points $u_0(T,\,\cdot) \in V_T$ mentioned in Lemma \ref{lm:unique} and $0 \in V_T$, then the fixed point $0 \in V_T$ is disregarded.
\end{remark}

Let us give a proof of Proposition \ref{prp:u0delta}.

\noindent \textit{Proof of Proposition \ref{prp:u0delta}}.

By Theorem \ref{thm:gap},
\[
u_0(T,\, x)=U_1 \int_{\varepsilon}^{\hslash\omega_D}
 \frac{u_0(T,\, \xi)}{\,\sqrt{\,\xi^2+u_0(T,\, \xi)^2\,}\,}
 \,\tanh \frac{\,\sqrt{\,\xi^2+u_0(T,\, \xi)^2\,}\,}{2T}\, d\xi, \qquad
 x \in D.
\]
Hence $u_0(T,\, x)$ does not depend on $x$. Therefore we denote $u_0(T,\,\cdot)$ by $u_0(T)$. Then
\[
u_0(T)\left\{ 1-U_1\int_{\varepsilon}^{\hslash\omega_D}
 \frac{1}{\,\sqrt{\,\xi^2+u_0(T)^2\,}\,}
 \,\tanh \frac{\,\sqrt{\,\xi^2+u_0(T)^2\,}\,}{2T}\, d\xi \right\}=0.
\]
Since $u_0(T) \not= 0$, it follows from \eqref{eq:smplgapequation} and Proposition \ref{prp:simplegap} that $u_0(T)=\Delta_1(T)$ and $T_c=\tau_1$.

%%%%%%%%%%%%%%%%%%%%%%%%%%%%%%%%  4  %%%%%%%%%%%%%%
\section{Proof of Proposition \ref{prp:uauu0}}

Let $A$ be as in \eqref{eq:mapping}. A straightforward calculation gives the following.

\begin{lemma} \   Let $u\in W$. Then $\displaystyle{ Au \in C([0,\, T_c] \times [\varepsilon,\,\hslash\omega_D]) \cap C^2( (0,\, T_c) \times [\varepsilon,\,\hslash\omega_D] ) }$.
\end{lemma}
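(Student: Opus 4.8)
The plan is to treat the two membership claims separately: joint continuity of $Au$ on the closed rectangle $[0,\,T_c]\times[\varepsilon,\,\hslash\omega_D]$, and then $C^2$-regularity of $Au$ on the open strip $(0,\,T_c)\times[\varepsilon,\,\hslash\omega_D]$. Throughout I would write the integrand of \eqref{eq:mapping} as
\[
F(T,\,x,\,\xi)=\frac{U(x,\,\xi)\,u(T,\,\xi)}{\,\sqrt{\,\xi^2+u(T,\,\xi)^2\,}\,}\,\tanh\frac{\,\sqrt{\,\xi^2+u(T,\,\xi)^2\,}\,}{2T}\,.
\]
The structural facts I would use repeatedly are that $\sqrt{\xi^2+u^2}\ge\xi\ge\varepsilon>0$, so the denominator is bounded away from zero; that $0\le u/\sqrt{\xi^2+u^2}\le 1$, $0\le\tanh(\cdot)\le 1$ and $0<U\le U_2$ give $0\le F\le U_2$ uniformly; and that $x$ enters $F$ only through the factor $U(x,\,\xi)$.

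For the continuity, I would first note that on any set $[a,\,T_c]\times[\varepsilon,\,\hslash\omega_D]$ with $a>0$ the function $F$ is jointly continuous in $(T,\,x,\,\xi)$ (because $u\in C([0,\,T_c]\times[\varepsilon,\,\hslash\omega_D])$, $U\in C^2$ by \eqref{eq:condition}, and the denominator stays positive), hence uniformly continuous there; since the $\xi$-domain is a fixed compact interval, $Au$ is continuous on $(0,\,T_c]\times[\varepsilon,\,\hslash\omega_D]$. The only genuine point is continuity at $T=0$. Here I would exploit that $\sqrt{\xi^2+u^2}\ge\varepsilon$ forces $\tanh\bigl(\sqrt{\xi^2+u^2}/(2T)\bigr)\ge\tanh\bigl(\varepsilon/(2T)\bigr)\to 1$ as $T\downarrow0$, uniformly in $\xi$ and $x$, so by dominated convergence
\[
\lim_{T\downarrow0}Au(T,\,x)=\int_{\varepsilon}^{\hslash\omega_D}\frac{U(x,\,\xi)\,u(0,\,\xi)}{\,\sqrt{\,\xi^2+u(0,\,\xi)^2\,}\,}\,d\xi=:Au(0,\,x).
\]
To upgrade this to joint continuity at a point $(0,\,x_0)$, I would split $Au(T,\,x)-Au(0,\,x_0)$ into $\bigl(Au(T,\,x)-Au(T,\,x_0)\bigr)+\bigl(Au(T,\,x_0)-Au(0,\,x_0)\bigr)$; the first difference is bounded by $\int_\varepsilon^{\hslash\omega_D}|U(x,\,\xi)-U(x_0,\,\xi)|\,d\xi$ and tends to $0$ uniformly in $T$ by uniform continuity of $U$, while the second tends to $0$ by the limit just displayed.

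For the $C^2$-regularity I would fix an arbitrary compact subinterval $[a,\,b]\subset(0,\,T_c)$ and differentiate under the integral sign. On $[a,\,b]\times[\varepsilon,\,\hslash\omega_D]$ the map $(T,\,x)\mapsto F(T,\,x,\,\xi)$ is $C^2$: $U(x,\,\xi)$ is $C^2$ in $x$ by \eqref{eq:condition}, $u$ is $C^2$ in $T$ since $u\in W\subset C^2\bigl((0,\,T_c)\times[\varepsilon,\,\hslash\omega_D]\bigr)$, the radical $\sqrt{\xi^2+u^2}$ is $C^2$ because it stays $\ge\varepsilon>0$, and $\tanh$ is smooth with argument $\sqrt{\xi^2+u^2}/(2T)$ that is $C^2$ for $T\ge a>0$. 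The resulting partials $\partial_T F,\,\partial_x F,\,\partial_T^2F,\,\partial_{T}\partial_x F,\,\partial_x^2F$ are again continuous in $(T,\,x,\,\xi)$ and, on the compact set, bounded by a constant independent of $\xi$. The standard theorem on differentiation of a parameter integral over the fixed compact interval $[\varepsilon,\,\hslash\omega_D]$ then gives that $Au$ is $C^2$ on $[a,\,b]\times[\varepsilon,\,\hslash\omega_D]$, with the derivatives obtained by differentiating $F$ under the integral sign. As $[a,\,b]$ is arbitrary, $Au\in C^2\bigl((0,\,T_c)\times[\varepsilon,\,\hslash\omega_D]\bigr)$.

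I expect the continuity at $T=0$ to be the only real obstacle: it is precisely where the saturation of $\tanh$ must be controlled, and it is the reason the cut-off $\varepsilon>0$ is indispensable, since it keeps $\sqrt{\xi^2+u^2}$ away from $0$ and makes the convergence $\tanh\to1$ uniform. The $C^2$ part is a routine application of the chain rule combined with differentiation under the integral sign, the denominator being bounded below and $T$ being bounded away from $0$ on each compact subinterval removing all potential singularities; this is why the statement can be asserted as following from a \emph{straightforward} calculation.
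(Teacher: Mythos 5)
Your proof is correct; the paper itself offers no argument here (it simply asserts ``a straightforward calculation gives the following''), and your write-up is a sound realization of exactly that intended calculation --- joint continuity via uniform convergence of $\tanh\bigl(\sqrt{\xi^2+u^2}/(2T)\bigr)\to 1$ at $T=0$ (where $\sqrt{\xi^2+u^2}\ge\varepsilon$ is indeed the essential point), plus differentiation under the integral sign on compact subsets of $(0,\,T_c)\times[\varepsilon,\,\hslash\omega_D]$ using $U\in C^2$ and $u\in C^2$. Nothing is missing.
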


A proof similar to that of Lemma \ref{lm:d1aud2} immediately gives the following.
\begin{lemma} \   Let $u\in W$. Then
\[
\Delta_1(T) \leq Au(T,\, x) \leq \Delta_2(T) \quad \mbox{at} \quad
(T,\,x) \in [0,\, T_c] \times [\varepsilon,\,\hslash\omega_D].
\]
\end{lemma}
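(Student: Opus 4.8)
The plan is to repeat the argument of Lemma~\ref{lm:d1aud2} almost verbatim, the only changes being that $B$ is replaced by $A$ (but the two mappings carry the identical integral kernel; compare \eqref{eq:mapping} with \eqref{eq:mappingb}) and that membership $u \in W$ now supplies the pointwise bound $\Delta_1(T) \le u(T,\xi) \le \Delta_2(T)$ for every $\xi \in [\varepsilon,\hslash\omega_D]$ directly from the definition \eqref{eq:spacew} of $W$. So the whole content is to transport the estimate from the spatial slices $V_T$ to the joint domain $W$.

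First I would fix $(T,x) \in [0,T_c] \times [\varepsilon,\hslash\omega_D]$ and record the monotonicity that drives everything: for each fixed $\xi$ and each $T>0$, the map
\[
s \longmapsto \frac{s}{\,\sqrt{\xi^2+s^2}\,}\,\tanh\frac{\sqrt{\xi^2+s^2}}{2T}
\]
is nondecreasing on $[0,\infty)$, being the product of the two nonnegative nondecreasing factors $s/\sqrt{\xi^2+s^2}$ and $\tanh\!\big(\sqrt{\xi^2+s^2}/(2T)\big)$ (the case $T=0$ is covered by the convention $\tanh(+\infty)=1$, under which the second factor is identically $1$). Evaluating this map at $s=u(T,\xi)$ and comparing against $s=\Delta_2(T)$, together with $U(x,\xi)\le U_2$, yields
\[
Au(T,x) \le \int_\varepsilon^{\hslash\omega_D}\frac{U_2\,\Delta_2(T)}{\sqrt{\xi^2+\Delta_2(T)^2}}\tanh\frac{\sqrt{\xi^2+\Delta_2(T)^2}}{2T}\,d\xi = \Delta_2(T),
\]
the final equality being exactly \eqref{eq:smplgapequation2}. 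Symmetrically, comparing against $s=\Delta_1(T)$ and using $U(x,\xi)\ge U_1$ together with \eqref{eq:smplgapequation} gives $Au(T,x)\ge\Delta_1(T)$. Since the kernel of $A$ is nonnegative, all these inequalities are preserved under integration.

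The one point requiring a word of care is the temperature range, and this is the closest thing to an obstacle, though a mild one. By Remark~\ref{rmk:tc1tctc2} we have $\tau_1 \le T_c \le \tau_2$, so for $T \in (\tau_1, T_c]$ the identity \eqref{eq:smplgapequation} no longer determines $\Delta_1(T)$; indeed $\Delta_1(T)=0$ there. In that regime the lower bound $Au(T,x)\ge 0=\Delta_1(T)$ is immediate from the nonnegativity of the integrand in \eqref{eq:mapping}, so no genuine difficulty arises, and the simplified gap equation \eqref{eq:smplgapequation} is actually invoked only on $[0,\tau_1]$, where $\Delta_1(T)>0$. An analogous remark disposes of the borderline $T=\tau_2$ (possible only when $T_c=\tau_2$), where $\Delta_2(T)=0$ forces $u(T,\cdot)\equiv 0$ and hence $Au(T,\cdot)\equiv 0$. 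As every estimate above is uniform in $(T,x)$, the two bounds hold simultaneously at all $(T,x)\in[0,T_c]\times[\varepsilon,\hslash\omega_D]$, which is the assertion.
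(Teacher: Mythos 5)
Your proof is correct and is essentially the paper's own argument: the paper disposes of this lemma by stating that ``a proof similar to that of Lemma~\ref{lm:d1aud2} immediately gives'' the result, and you have simply written out that adaptation, with the bounds on $u(T,\xi)$ now supplied by the definition of $W$ rather than of $V_T$. Your explicit treatment of the monotonicity of $s \mapsto \frac{s}{\sqrt{\xi^2+s^2}}\tanh\frac{\sqrt{\xi^2+s^2}}{2T}$ and of the regime $T\in(\tau_1,T_c]$ where $\Delta_1(T)=0$ is a careful filling-in of details the paper leaves implicit, not a different route.
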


We next show that $Au$ \   $(u \in W)$ satisfies condition (C) above. For $u \in W$, let $v$ be as in condition (C). Set
\begin{equation}\label{eq:functionF}
F(x)=\left( \int_{\varepsilon}^{\hslash\omega_D}
\frac{\, U(x,\,\xi)\,\sqrt{v(\xi)}\,}{\xi}
\tanh \frac{\xi}{\,2T_c\,}\, d\xi \right)^2 \quad (>0), \quad
\varepsilon \leq x \leq \hslash\omega_D\,.
\end{equation}

A straightforward calculation gives the following.

\begin{lemma}
Let $F$ be as in \eqref{eq:functionF}. Then $F \in C([\varepsilon,\,\hslash\omega_D])$.
\end{lemma}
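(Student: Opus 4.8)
The plan is to observe that $F$ in \eqref{eq:functionF} is the square of the simpler function
$G(x)=\int_{\varepsilon}^{\hslash\omega_D}\frac{U(x,\,\xi)\,\sqrt{v(\xi)}}{\xi}\tanh\frac{\xi}{2T_c}\,d\xi$, so that $F=G^2$. Since the square of a continuous real-valued function is again continuous, it suffices to prove that $G\in C([\varepsilon,\,\hslash\omega_D])$, and the conclusion $F\in C([\varepsilon,\,\hslash\omega_D])$ follows immediately.

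To handle $G$, I would first record that its integrand is well-behaved in $\xi$. By (C1) we have $v(\xi)>0$ on the compact interval $[\varepsilon,\,\hslash\omega_D]$, and since $v\in C([\varepsilon,\,\hslash\omega_D])$ the map $\xi\mapsto\sqrt{v(\xi)}$ is continuous, hence bounded, say by $M=\max_{\xi}\sqrt{v(\xi)}$. The remaining factor $\xi^{-1}\tanh\frac{\xi}{2T_c}$ is continuous and bounded by $1/\varepsilon$ on $[\varepsilon,\,\hslash\omega_D]$. Continuity of $G$ then follows from exactly the Lipschitz estimate used in the proof of Lemma \ref{lm:aucd}: bounding $\sqrt{v(\xi)}\le M$ and $\xi^{-1}\tanh\frac{\xi}{2T_c}\le 1/\varepsilon$ gives $\left|G(x)-G(x_0)\right|\le\frac{M}{\varepsilon}\int_{\varepsilon}^{\hslash\omega_D}\left|U(x,\,\xi)-U(x_0,\,\xi)\right|\,d\xi$, and invoking the mean value theorem together with the bound on $\partial U/\partial x$ supplied by $U(\cdot,\,\cdot)\in C^2([\varepsilon,\,\hslash\omega_D]^2)$ from \eqref{eq:condition} yields $\left|G(x)-G(x_0)\right|\le C\,\left|x-x_0\right|$ for a constant $C$ independent of $x$ and $x_0$. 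Thus $G$ is (Lipschitz) continuous, and so $F=G^2$ is continuous.

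I do not expect a genuine obstacle here, consistent with the paper's remark that a straightforward calculation suffices. The only two points meriting care are that the square root $\sqrt{v}$ is legitimate and continuous, which rests on the strict positivity (C1) rather than mere integrability, and that the estimate must use the boundedness of $\partial U/\partial x$ coming from the $C^2$ hypothesis and not just the continuity of $U$; beyond these, the argument is the same routine parameter-integral bound already established for the mapping $B$.
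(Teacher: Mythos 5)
Your proof is correct, and it is precisely the ``straightforward calculation'' the paper omits: writing $F$ as the square of the parameter integral and reusing the Lipschitz estimate from the proof of Lemma \ref{lm:aucd} (boundedness of $\sqrt{v}$ and of $\xi^{-1}\tanh(\xi/(2T_c))$, plus the bound on $\partial U/\partial x$ from \eqref{eq:condition}) is exactly the intended route. One small nitpick: continuity of $\sqrt{v}$ needs only $v\geq 0$ and $v$ continuous, since $t\mapsto\sqrt{t}$ is continuous on $[0,\infty)$, so (C1) is not actually required at this step (it matters later, where $1/\sqrt{v}$ appears); also, your local name $G$ for the inner integral collides with the paper's $G$ in \eqref{eq:functionG}, so a different letter would be safer.
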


\begin{lemma}
Let $F$ be as in \eqref{eq:functionF}. Then, for $\varepsilon_1>0$, there is a $\delta>0$ such that $\left| T_c-T \right|<\delta$ implies
\[
\left| F(x)-\frac{\,\{Au(T,\, x)\}^2\,}{\,T_c-T\,} \right|<T_c \,
\varepsilon_1\,,
\]
where $\delta$ does not depend on $x$ \  $(\varepsilon \leq x \leq \hslash\omega_D)$.
\end{lemma}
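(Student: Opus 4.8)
The plan is to establish the uniform (in $x$) convergence $\{Au(T,x)\}^2/(T_c-T)\to F(x)$ as $T\uparrow T_c$, from which the stated $\varepsilon_1$--$\delta$ estimate follows at once by applying it with target tolerance $T_c\varepsilon_1$. Write $R(T,\xi)=\sqrt{\xi^2+u(T,\xi)^2}$ and $\phi(r,T)=\frac{1}{r}\tanh\frac{r}{2T}$, so that
\[
\frac{Au(T,x)}{\sqrt{\,T_c-T\,}}=\int_\varepsilon^{\hslash\omega_D}U(x,\xi)\,\frac{u(T,\xi)}{\sqrt{\,T_c-T\,}}\,\phi\bigl(R(T,\xi),T\bigr)\,d\xi,
\qquad
\sqrt{F(x)}=\int_\varepsilon^{\hslash\omega_D}U(x,\xi)\,\sqrt{v(\xi)}\,\phi(\xi,T_c)\,d\xi.
\]
First I would square: with $G=Au/\sqrt{T_c-T}$ one has $G^2-F=(G-\sqrt{F})(G+\sqrt{F})$, so it suffices to prove $G\to\sqrt{F}$ uniformly together with a uniform bound on $G+\sqrt{F}$. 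The latter is immediate once $G$ is shown bounded, since $U\le U_2$, the integration interval is finite, and the integrand is regular because $\xi\ge\varepsilon>0$ leaves no singularity to control.

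The core is therefore the uniform estimate of the integrand
\[
u(T,\xi)(T_c-T)^{-1/2}\,\phi(R(T,\xi),T)-\sqrt{v(\xi)}\,\phi(\xi,T_c),
\]
which I would split as $p\bigl[\phi(R,T)-\phi(\xi,T_c)\bigr]+(p-\sqrt{v})\,\phi(\xi,T_c)$ with $p(T,\xi)=u(T,\xi)(T_c-T)^{-1/2}\ge 0$. Three ingredients drive each piece to zero uniformly. (i) From (C2), $\bigl|v(\xi)-p(T,\xi)^2\bigr|<T_c\varepsilon_1$, and since $v\in C([\varepsilon,\hslash\omega_D])$ is strictly positive by (C1) it has a positive minimum $m_v$; writing $|p-\sqrt{v}|=|p^2-v|/(p+\sqrt{v})\le |p^2-v|/\sqrt{m_v}$ yields $p\to\sqrt v$ uniformly, and in particular $p$ is uniformly bounded for $T$ near $T_c$. (ii) Since $u(T,\xi)^2=p^2(T_c-T)\to 0$ uniformly, $|R-\xi|=u^2/(R+\xi)\le u^2/(2\varepsilon)\to 0$ uniformly, so $(R,T)\to(\xi,T_c)$ uniformly. (iii) The function $\phi$ is continuous, hence uniformly continuous, on the compact set $\{\,\varepsilon\le r\le \hslash\omega_D+1,\ T_c/2\le T\le 2T_c\,\}$, which contains both $(R(T,\xi),T)$ and $(\xi,T_c)$ once $T$ is close enough to $T_c$; thus $\phi(R,T)-\phi(\xi,T_c)\to 0$ uniformly, while $\phi(\xi,T_c)$ stays bounded on $[\varepsilon,\hslash\omega_D]$. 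Combining (i)--(iii), the integrand tends to $0$ uniformly in both $\xi$ and $x$, and multiplying by $U\le U_2$ and integrating over the finite interval gives $G(T,x)\to\sqrt{F(x)}$ uniformly in $x$.

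Finally, the uniform boundedness of $G$ and of $\sqrt F$ (the latter because $\sqrt v$ and $\phi(\cdot,T_c)$ are bounded and $U\le U_2$) upgrades $G\to\sqrt F$ to $G^2\to F$ uniformly, which is the asserted inequality for any prescribed tolerance, in particular $T_c\varepsilon_1$. I expect the main obstacle to be bookkeeping rather than anything conceptual: one must exhibit a single $\delta$, independent of $x$, that simultaneously renders (C2) small, forces $(R,T)$ into the compact neighbourhood of $(\xi,T_c)$ on which $\phi$ is uniformly continuous, and controls the three error pieces at once. The genuinely quantitative step is (iii), converting the uniform continuity of $\phi$ into an explicit $\delta$; everything else reduces to the elementary identity $a^2-b^2=(a-b)(a+b)$ applied twice---once to pass from $p^2$ to $p$, once to pass from $G$ to $G^2$---together with the fact that all denominators remain bounded below thanks to $\xi\ge\varepsilon>0$.
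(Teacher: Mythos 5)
Your proposal is correct and follows essentially the same route as the paper: both factor $F-\{Au\}^2/(T_c-T)$ as a product of a bounded sum and a difference of square roots, then split the integrand difference into a piece controlled by (C2) via $|p-\sqrt v|=|p^2-v|/(p+\sqrt v)$ (using (C1) to bound $v$ below) and pieces coming from replacing $(\xi,T_c)$ by $(\sqrt{\xi^2+u^2},T)$ in the kernel. The only cosmetic difference is that the paper bounds the latter pieces by two explicit elementary estimates ($I_2$ and $I_3$) while you invoke uniform continuity of $\phi$ on a compact set, which is equally valid here since $\xi\geq\varepsilon>0$.
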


\begin{proof} \   Set $\displaystyle{M=\sup_{\varepsilon \leq \xi \leq \hslash\omega_D} v(\xi)}$. Let $\varepsilon_1$ and $\delta$ be as in (C2) of condition (C), and let $0<\varepsilon_1<1$. Then,
\begin{eqnarray*}
& & \; \left| F(x)-\frac{\,\{Au(T,\, x)\}^2\,}{\,T_c-T\,} \right| \cr
&\leq& \left| \int_{\varepsilon}^{\hslash\omega_D}
\frac{\, U_2\,\sqrt{v(\xi)}\,}{\xi}\, d\xi
+\int_{\varepsilon}^{\hslash\omega_D}
\frac{\, U_2\,}{\xi}\sqrt{\frac{\,u(T,\,\xi)^2\,}{\,T_c-T\,}}
\, d\xi \right| \cdot \left| I_1(x)+I_2(x)+I_3(x) \right| \cr
&\leq& 2U_2\sqrt{M+T_c}\,\ln\frac{\,\hslash\omega_D\,}{\varepsilon} \cdot
\left| I_1(x)+I_2(x)+I_3(x) \right|,
\end{eqnarray*}
where
\begin{eqnarray*}
I_1(x) &=& \int_{\varepsilon}^{\hslash\omega_D}
\frac{\,U(x,\,\xi)\,}{\xi}\left(
\sqrt{v(\xi)}-\sqrt{ \frac{\,u(T,\,\xi)^2\,}{\,T_c-T\,} }
\right) \tanh \frac{\xi}{\,2T_c\,}\,d\xi,\cr
I_2(x) &=& \int_{\varepsilon}^{\hslash\omega_D}
U(x,\,\xi)\sqrt{ \frac{\,u(T,\,\xi)^2\,}{\,T_c-T\,} } \left(
\frac{1}{\,\xi\,}-\frac{1}{\,\sqrt{\xi^2+u(T,\,\xi)^2}\,}\right)
\tanh \frac{\xi}{\,2T_c\,}\,d\xi,\cr
I_3(x) &=& \int_{\varepsilon}^{\hslash\omega_D}
\frac{U(x,\,\xi)}{\,\sqrt{\xi^2+u(T,\,\xi)^2}\,}
\sqrt{ \frac{\,u(T,\,\xi)^2\,}{\,T_c-T\,} } \left(
\tanh \frac{\xi}{\,2T_c\,}
-\tanh \frac{\,\sqrt{\xi^2+u(T,\,\xi)^2}\,}{\,2T\,} \right)\,d\xi.
\end{eqnarray*}
Let $T>T_c/2$. Then, by (C2), $\left| T_c-T \right|<\delta$ implies
\[
\left| I_1(x) \right| \leq \varepsilon_1\frac{\,U_2\,}{2}
\int_{\varepsilon}^{\hslash\omega_D} \frac{d\xi}{\,\sqrt{v(\xi)}\,}.
\]
Note that $\delta$ does not depend on $x$ \  $(\varepsilon \leq x \leq \hslash\omega_D)$. Since
\[
u(T,\,\xi)^2=\frac{\, u(T,\,\xi)^2\,}{T_c-T}\left( T_c-T \right),
\]
$u(T,\,\xi)$ tends to 0 uniformly with respect to $\xi$ as $T \to T_c$ by (C2). Therefore, $\left| T_c-T \right|<\delta$ implies $u(T,\,\xi) < T_c \, \varepsilon_1$. Recalling $T>T_c/2$ and (C2), we can show similarly that
\[
\left| I_2(x) \right| \leq \varepsilon_1\frac{\,U_2\, T_c\sqrt{M+T_c}\,}{\varepsilon} \quad \hbox{and} \quad
\left| I_3(x) \right| \leq \varepsilon_1\, U_2\left( \frac{\,\sqrt{M+T_c}\,}{2}\ln\frac{\,\hslash\omega_D\,}{\varepsilon}+\frac{\, \hslash\omega_D \,}{\, \sqrt{2T_c} \,} \right).
\]
Note also that $\varepsilon >0$ is fixed.
\end{proof}

A straightforward calculation similar to the above gives the following.

\begin{lemma}
Let $F$ be as in \eqref{eq:functionF}. Then, for $\varepsilon_1>0$, there is a $\delta>0$ such that $\left| T_c-T \right|<\delta$ implies
\[
\left| F(x)+2\, Au(T,\, x)\frac{\partial Au}{\,\partial T\,}(T,\, x)
\right|<T_c \, \varepsilon_1\,,
\]
where $\delta$ does not depend on $x$ \  $(\varepsilon \leq x \leq \hslash\omega_D)$.
\end{lemma}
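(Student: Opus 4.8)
The plan is to mirror the proof of the preceding lemma, replacing the first inequality of (C2) by the second one. First I would differentiate $Au(T,x)$ under the integral sign. Writing $s=\sqrt{\xi^2+u(T,\xi)^2}$ and $K(T,\xi)=\dfrac{u(T,\xi)}{s}\tanh\dfrac{s}{2T}$, a direct computation using $\partial s/\partial T=\frac{u\,u_T}{s}$ gives
\[
\frac{\partial K}{\partial T}(T,\xi)=\frac{\partial u}{\partial T}(T,\xi)\left[\frac{\xi^2}{s^3}\tanh\frac{s}{2T}+\frac{u(T,\xi)^2}{2Ts^2}\,\frac{1}{\cosh^2\frac{s}{2T}}\right]-\frac{u(T,\xi)}{2T^2}\,\frac{1}{\cosh^2\frac{s}{2T}},
\]
so that $\dfrac{\partial Au}{\partial T}(T,x)=\displaystyle\int_\varepsilon^{\hslash\omega_D}U(x,\xi)\,\frac{\partial K}{\partial T}(T,\xi)\,d\xi$. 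The aim is to show that, uniformly in $x$, the quantity $2\,Au(T,x)\,\dfrac{\partial Au}{\partial T}(T,x)$ approaches $-F(x)$ as $T\to T_c$.

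Second, I would factor the target. Setting $a(\xi)=\dfrac{\sqrt{v(\xi)}}{\xi}\tanh\dfrac{\xi}{2T_c}$, we have $F(x)=\left(\int_\varepsilon^{\hslash\omega_D}U(x,\xi)\,a(\xi)\,d\xi\right)^2$, and it suffices to prove two uniform asymptotics: (i) $Au(T,x)=\sqrt{T_c-T}\left(\int U(x,\xi)a(\xi)\,d\xi+o(1)\right)$, and (ii) $\dfrac{\partial Au}{\partial T}(T,x)=\dfrac{-1}{2\sqrt{T_c-T}}\left(\int U(x,\xi)a(\xi)\,d\xi+o(1)\right)$, the error terms tending to $0$ uniformly in $x\in[\varepsilon,\hslash\omega_D]$. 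Multiplying (i) and (ii) yields $2\,Au\,\dfrac{\partial Au}{\partial T}=-\left(\int U a\,d\xi\right)^2+o(1)=-F(x)+o(1)$, which is the assertion once the $o(1)$ is made quantitative as $T_c\varepsilon_1$ (a harmless rescaling of $\varepsilon_1$, since $\int U a\,d\xi$ is bounded). Statement (i) is essentially the preceding lemma: as $Au\ge0$ and $\int U a\,d\xi\ge U_1\int a\,d\xi>0$, taking square roots in $|F(x)-\{Au\}^2/(T_c-T)|<T_c\varepsilon_1$ gives (i) uniformly.

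For (ii) I would follow the $I_1+I_2+I_3$ scheme of the preceding lemma. Since $u(T,\xi)\to0$ uniformly (already exploited there), the bracket multiplying $\partial u/\partial T$ tends to $\frac1\xi\tanh\frac{\xi}{2T_c}$ uniformly and the last term $\dfrac{u}{2T^2}\dfrac1{\cosh^2\frac{s}{2T}}$ is $O(\sqrt{T_c-T})$; hence the task reduces to comparing $\displaystyle\int U(x,\xi)\,\frac{\partial u}{\partial T}(T,\xi)\,\frac1\xi\tanh\frac{\xi}{2T_c}\,d\xi$ with $\dfrac{-1}{2\sqrt{T_c-T}}\int U a\,d\xi$. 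Writing $2\sqrt{T_c-T}\,\dfrac{\partial u}{\partial T}=\dfrac{\sqrt{T_c-T}}{u}\bigl(2u\,\partial u/\partial T\bigr)$ and inserting $v$, the second inequality of (C2), $|v+2u\,\partial u/\partial T|<T_c\varepsilon_1$, together with the first, $|v-u^2/(T_c-T)|<T_c\varepsilon_1$, shows that $2\sqrt{T_c-T}\,\partial u/\partial T$ is uniformly close to $-\sqrt v$; multiplying by $\frac1\xi\tanh\frac{\xi}{2T_c}$ and by $U$ and integrating gives (ii). As in the preceding lemma, each comparison is turned into integrals bounded by a fixed constant built from $U_2$, $\ln(\hslash\omega_D/\varepsilon)$, $\sup v$ and $\inf v$, times $\varepsilon_1$, with $\delta$ inherited from (C2) and hence independent of $x$.

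The main obstacle is the singular factor $\partial u/\partial T$, which behaves like $(T_c-T)^{-1/2}$ as $T\to T_c$. It must always be paired with a factor vanishing at the compensating rate (namely $u$, or $\sqrt{T_c-T}$), and the quotients $\sqrt{T_c-T}/u$ must be kept bounded. This is where I would use that $v$, being continuous and strictly positive on the compact interval $[\varepsilon,\hslash\omega_D]$ by (C1), has a positive lower bound, so that $u(T,\xi)\ge c\sqrt{T_c-T}$ near $T_c$ for a constant $c>0$ independent of $\xi$. Guaranteeing that every estimate is uniform in $x$—via the $x$-independent $\delta$ of (C2) and the bounds $U_1\le U\le U_2$ with $U\in C^2$—is the remaining bookkeeping.
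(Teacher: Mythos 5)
Your argument is correct and is essentially the calculation the paper has in mind: the paper omits the proof of this lemma entirely (``a straightforward calculation similar to the above''), and your factorization of $2\,Au\,\frac{\partial Au}{\partial T}$ into the two uniform asymptotics for $Au$ and $\frac{\partial Au}{\partial T}$, each controlled by the $x$-independent $\delta$ of (C2) together with the lower bound $\inf v>0$ supplied by (C1), is exactly the $I_1+I_2+I_3$-style estimate of the preceding lemma that it gestures at. In particular you correctly isolate the one delicate point, namely pairing the $(T_c-T)^{-1/2}$ singularity of $\partial u/\partial T$ with the compensating bound $u\geq c\sqrt{T_c-T}$.
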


For $u \in W$, let $v$ and $w$ be as in condition (C). For $\varepsilon \leq x \leq \hslash\omega_D$, set
\begin{eqnarray}\label{eq:functionG}
& &G(x)=\int_{\varepsilon}^{\hslash\omega_D}
\frac{\, U(x,\,\xi)\,\sqrt{v(\xi)}\,}{\xi}
\tanh \frac{\xi}{\,2T_c\,}\, d\xi
\int_{\varepsilon}^{\hslash\omega_D} U(x,\,\eta) \times \\
&\times& \left\{ \left( \frac{w(\eta)}{\,\eta\sqrt{v(\eta)}\,}-
\frac{\,2\sqrt{v(\eta)^3}\,}{\eta^3} \right)
\tanh \frac{\eta}{\,2T_c\,}+\frac{\sqrt{v(\eta)}}{\,\cosh^2
\displaystyle{ \frac{\eta}{\,2T_c\,}\,} }
\left( \frac{v(\eta)}{\,T_c\,\eta^2\,}+\frac{2}{\,T_c^2\,} \right)
\right\}\, d\eta.\nonumber
\end{eqnarray}

A straightforward calculation similar to the above gives the following.

\begin{lemma}\   Let $G$ be as in \eqref{eq:functionG}. Then $G \in C([\varepsilon,\,\hslash\omega_D])$.
\end{lemma}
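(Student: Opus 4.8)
The plan is to exploit the product structure of $G$. Writing $G(x)=P(x)\,Q(x)$, where
\[
P(x)=\int_{\varepsilon}^{\hslash\omega_D}
\frac{\,U(x,\,\xi)\,\sqrt{v(\xi)}\,}{\xi}
\tanh \frac{\xi}{\,2T_c\,}\, d\xi
\]
is the first integral factor in \eqref{eq:functionG} and $Q(x)$ denotes the second integral factor, it suffices to prove that each of $P$ and $Q$ is a continuous and bounded function of $x$ on $[\varepsilon,\,\hslash\omega_D]$; a product of two such functions is again continuous.

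For the factor $P$ I would simply note that $P(x)=\sqrt{F(x)}$ with $F$ as in \eqref{eq:functionF}, and that $F>0$. Since the preceding lemma already gives $F\in C([\varepsilon,\,\hslash\omega_D])$ and $F$, being positive and continuous on a compact interval, is bounded below by a positive constant, the composition $P=\sqrt{F}$ is continuous as well. (Alternatively one may estimate $P$ directly, exactly as in the proof of Lemma \ref{lm:aucd}, using the uniform bound on $\partial U/\partial x$ supplied by $U(\cdot,\,\cdot)\in C^2([\varepsilon,\,\hslash\omega_D]^2)$ in \eqref{eq:condition}.)

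For the factor $Q$ the key observation is that the curly-bracket expression in \eqref{eq:functionG} depends on the integration variable $\eta$ only, not on $x$; call it $h(\eta)$. By (C1) we have $v>0$ on the compact interval $[\varepsilon,\,\hslash\omega_D]$, so $v$ is bounded below by a positive constant and above by a finite one, whence $1/\sqrt{v}$, $\sqrt{v}$ and $\sqrt{v^3}$ are all bounded; $w$ is bounded, being continuous on a compact interval; and the factors $1/\eta$, $1/\eta^3$, $\tanh(\eta/2T_c)$ and $1/\cosh^2(\eta/2T_c)$ are continuous on $[\varepsilon,\,\hslash\omega_D]$ precisely because $\varepsilon>0$ keeps $\eta$ away from the origin. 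Thus $h\in C([\varepsilon,\,\hslash\omega_D])$ is bounded, and $Q(x)=\int_{\varepsilon}^{\hslash\omega_D}U(x,\,\eta)\,h(\eta)\,d\eta$. Continuity of $Q$ then follows from the uniform continuity of $U$: for $\left|x-x_0\right|$ small,
\[
\left|Q(x)-Q(x_0)\right|\le\int_{\varepsilon}^{\hslash\omega_D}
\left|U(x,\,\eta)-U(x_0,\,\eta)\right|\,\left|h(\eta)\right|\,d\eta,
\]
which is made arbitrarily small via the same bound on $\partial U/\partial x$ used above.

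Finally, both $P$ and $Q$ are bounded on $[\varepsilon,\,\hslash\omega_D]$ (their integrands are bounded and the interval has finite length), so $G=P\,Q$ is continuous. I do not anticipate any genuine obstacle; the only point requiring care is that the reciprocal-power factors inside the bracket remain bounded, which rests entirely on $v$ being bounded away from $0$ by (C1) and on $\varepsilon>0$.
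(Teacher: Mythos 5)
Your proposal is correct and follows essentially the route the paper intends: the paper dismisses this lemma as ``a straightforward calculation similar to the above,'' meaning the same uniform-continuity-in-$x$ argument via the bound on $\partial U/\partial x$ used in Lemma \ref{lm:aucd}, and your verification that the bracketed integrand is a fixed continuous bounded function of $\eta$ (using (C1) to keep $1/\sqrt{v}$ bounded and $\varepsilon>0$ to keep $1/\eta$, $1/\eta^3$ bounded) is exactly the content of that calculation.
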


\begin{lemma}\   Let $G$ be as in \eqref{eq:functionG}. Set $H(T,\, x)=\{ Au(T,\, x)\}^2$. Then, for $\varepsilon_1>0$, there is a $\delta>0$
such that $\left| T_c-T \right|<\delta$ implies
\[
\left| \frac{\, G(x)\,}{2}+\frac{\, H(T,\, x)+(T_c-T)\,\displaystyle{
\frac{\,\partial H\,}{\,\partial T\,} }(T,\, x)\,}{\,(T_c-T)^2\,} \right|<\varepsilon_1 \quad \mbox{and} \quad
\left| G(x)-\frac{\,\partial^2 H\,}{\,\partial T^2\,}(T,\, x) \right|<\varepsilon_1\,,
\]
where $\delta$ does not depend on $x$ \  $(\varepsilon \leq x \leq \hslash\omega_D)$.
\end{lemma}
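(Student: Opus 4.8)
The plan is to verify that $Au$ satisfies condition (C3) with $w$ replaced by $G$, i.e.\ to prove the two displayed estimates for $H=\{Au\}^2$. The first is equivalent to $\partial_T\bigl(H/(T_c-T)\bigr)\to -G(x)/2$ and the second to $\partial_T^2 H\to G(x)$, both as $T\uparrow T_c$ and uniformly in $x$. The analytic heart of the matter is a second-order expansion of $Au$ in half-integer powers of $s:=T_c-T$. Since $u\in W$, conditions (C2) and (C3) for $u$ give, uniformly in $\xi$, that $u(T,\xi)^2=v(\xi)\,s+\tfrac12 w(\xi)\,s^2+o(s^2)$, and hence $u(T,\xi)=\sqrt{v(\xi)}\,s^{1/2}+\tfrac{w(\xi)}{4\sqrt{v(\xi)}}\,s^{3/2}+o(s^{3/2})$; in particular $u(T_c,\xi)=0$. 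First I would insert this into the integrand of $Au$ and Taylor-expand the factors $u/\sqrt{\xi^2+u^2}$ and $\tanh\bigl(\sqrt{\xi^2+u^2}/2T\bigr)$ about $u=0$, $T=T_c$, writing $\tanh(a+\theta)=\tanh a+\theta\,\cosh^{-2}a+O(\theta^2)$ with $a=\xi/(2T_c)$. The order-$s^{1/2}$ term integrates to $\int_\varepsilon^{\hslash\omega_D}U(x,\xi)\,\xi^{-1}\sqrt{v(\xi)}\tanh(\xi/2T_c)\,d\xi=\sqrt{F(x)}$, and the order-$s^{3/2}$ term integrates to exactly $\tfrac14$ of the second integral in \eqref{eq:functionG}. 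Since the first factor of $G$ in \eqref{eq:functionG} is $\sqrt{F(x)}$ by \eqref{eq:functionF}, this gives $Au=\sqrt{F}\,s^{1/2}+Q\,s^{3/2}+o(s^{3/2})$ with $G=4\sqrt{F}\,Q$, so that $H=\{Au\}^2=F\,s+\tfrac12 G\,s^2+o(s^2)$, which is precisely what the two estimates assert.

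To make this rigorous I would avoid differentiating the expansion termwise and instead estimate the two target quantities directly, in the style of the preceding (C2) lemma for $Au$. Using $\partial_T H=2\,Au\,\partial_T Au$ and $\partial_T^2 H=2(\partial_T Au)^2+2\,Au\,\partial_T^2 Au$, and differentiating $Au$ under the integral sign, I would subtract the corresponding $G$-expressions and split each integrand difference into a finite sum $I_1+I_2+\cdots$ by replacing one factor at a time (the ratio $u/\sqrt{\xi^2+u^2}$, the argument of the hyperbolic functions, the explicit outer $T$, and the $u$-derivatives) by its limiting value at $T=T_c$. Each $I_j$ is then bounded by $\varepsilon_1$ times a constant depending only on $U_2$, $\varepsilon$, $\hslash\omega_D$, $T_c$ and $M=\sup_\xi v(\xi)$, using (C2) and (C3) for $u$ together with the boundedness of $U$, $v$, $w$ and the uniform continuity of $\tanh$ and $\cosh^{-2}$ on $[\varepsilon/(2T_c),\hslash\omega_D/(2T_c)]$. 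Because the $\delta$'s supplied by (C2)/(C3) are already uniform in $\xi$ and $U$ is bounded with bounded derivatives, the resulting $\delta$ does not depend on $x$, which yields the asserted uniformity.

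The hard part will be the second estimate. The individual derivatives diverge as $T\uparrow T_c$: since $u\to0$ while $\partial_T(u^2)=2u\,\partial_T u$ stays bounded, one has $\partial_T u=O(s^{-1/2})$, hence $\partial_T Au=O(s^{-1/2})$ and $(\partial_T Au)^2=O(s^{-1})$. The finiteness of $\partial_T^2 H$ therefore rests on a cancellation between the standalone term $2(\partial_T Au)^2$ and the divergent part of $2\,Au\,\partial_T^2 Au$. The key is that (C3) controls the combination $\partial_T^2(u^2)=2(\partial_T u)^2+2u\,\partial_T^2 u$, not $\partial_T u$ or $\partial_T^2 u$ separately; accordingly every estimate must keep the $u$-derivatives packaged as $u\,\partial_T u=\tfrac12\partial_T(u^2)$ and $(\partial_T u)^2+u\,\partial_T^2 u=\tfrac12\partial_T^2(u^2)$, and the algebra must be organized so that the $O(s^{-1})$ contributions cancel before any individual bound is applied. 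By contrast, in the first estimate the derivative enters only through the bounded product $Au\,\partial_T Au=\tfrac12\partial_T H$, so no such cancellation is needed and the argument is of exactly the same type as for (C2). Once the terms are grouped in this way, the remaining estimates are routine.
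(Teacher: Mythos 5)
Your proposal is correct and follows essentially the same route the paper intends: the paper gives no proof of this lemma beyond the phrase ``a straightforward calculation similar to the above,'' i.e.\ the same term-by-term splitting and uniform estimates used for the preceding (C2)-type lemma for $Au$, which is exactly your second paragraph. Your formal expansion checks out (the coefficient of $(T_c-T)^{3/2}$ in $Au$ is indeed one quarter of the second integral in \eqref{eq:functionG}, so $H=F\,(T_c-T)+\tfrac12\,G\,(T_c-T)^2+o\bigl((T_c-T)^2\bigr)$), and your identification of the cancellation of the $O\bigl((T_c-T)^{-1}\bigr)$ parts of $2(\partial_T Au)^2$ and $2\,Au\,\partial_T^2 Au$ --- a point the paper never mentions, and where the compensating factor of $\sqrt{T_c-T}$ comes from the other integral in the product rather than from within the same integrand --- is precisely the delicate step that the paper's ``straightforward calculation'' glosses over.
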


We therefore have $AW \subset W$. A proof similar to that of Lemma \ref{lm:Bconti} immediately gives the following.
\begin{lemma}\   The mapping $\displaystyle{
A: \, W \longrightarrow W}$ is continuous with respect to the norm of the Banach space $C([0,\, T_c] \times [\varepsilon,\,\hslash\omega_D])$.
\end{lemma}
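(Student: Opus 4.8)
The plan is to reproduce the Lipschitz estimate from the proof of Lemma~\ref{lm:Bconti}, the only new ingredient being that the supremum defining the norm now runs over the pair $(T,x)$ rather than over $x$ alone. Since $AW\subset W$ has already been established, it remains only to verify continuity with respect to the norm $\|\cdot\|$ of $C([0,\,T_c]\times[\varepsilon,\,\hslash\omega_D])$.

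First I would fix $u,\,v\in W$ together with an arbitrary point $(T,x)\in[0,\,T_c]\times[\varepsilon,\,\hslash\omega_D]$, and write $Au(T,x)-Av(T,x)$ as the $\xi$-integral of $U(x,\xi)$ against the difference of the integrands of \eqref{eq:mapping} evaluated at $u(T,\xi)$ and at $v(T,\xi)$. Next I would apply verbatim the pointwise bound derived inside the proof of Lemma~\ref{lm:Bconti}, namely that this difference of integrands is at most $3\,|u(T,\xi)-v(T,\xi)|/\xi$. Bounding $U(x,\xi)\le U_2$ via \eqref{eq:condition} and $|u(T,\xi)-v(T,\xi)|\le\|u-v\|$, integration over $\xi$ yields $|Au(T,x)-Av(T,x)|\le 3U_2\ln(\hslash\omega_D/\varepsilon)\,\|u-v\|$; taking the supremum over all $(T,x)$ then gives the global Lipschitz estimate $\|Au-Av\|\le 3U_2\ln(\hslash\omega_D/\varepsilon)\,\|u-v\|$, from which continuity follows at once.

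The step I expect to require the most care --- and the only genuinely new point relative to Lemma~\ref{lm:Bconti}, where $T$ was held fixed --- is checking that the constant $3$ in the pointwise bound is uniform in $T$ across the whole interval $[0,\,T_c]$, in particular as $T\downarrow 0$. The delicate contribution is the $\tanh$-difference, whose control involves a factor $1/(2T)$ arising from differentiating $\tanh(\sqrt{\xi^2+s^2}/(2T))$ in $s$. The apparent singularity is harmless because $\tfrac{1}{2T}\operatorname{sech}^2(s/(2T))=\tfrac{1}{s}\cdot(s/(2T))\operatorname{sech}^2(s/(2T))$ and the function $y\mapsto y\operatorname{sech}^2 y$ is bounded on $[0,\infty)$; since $s=\sqrt{\xi^2+u(T,\xi)^2}\ge\xi>0$, this term stays bounded by a multiple of $1/\xi$ uniformly in $T$. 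Once this uniformity is in hand, the estimate is genuinely $T$-independent and the passage from the one-variable sup-norm of Lemma~\ref{lm:Bconti} to the two-variable sup-norm is immediate, so I anticipate no further difficulty.
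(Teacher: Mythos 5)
Your proposal is correct and follows essentially the same route as the paper, which disposes of this lemma with the single remark that ``a proof similar to that of Lemma~\ref{lm:Bconti} immediately gives'' the result, i.e.\ the same Lipschitz estimate $\|Au-Av\|\le 3U_2\ln(\hslash\omega_D/\varepsilon)\|u-v\|$ now taken in the two-variable sup-norm. Your additional check that the constant $3$ is uniform in $T$ down to $T=0$ (via the boundedness of $y\mapsto y\operatorname{sech}^2 y$) is exactly the point the paper leaves implicit, and you resolve it correctly.
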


We have thus proved (a) of Proposition \ref{prp:uauu0}. Part (b) follows immediately from (a).

%%%%%%%%%%%%%%%%%%%%%%%%%%%%%%%%  5  %%%%%%%%%%%%%%%%%%%%%%%
\section{Proof of Theorem \ref{thm:phasetransition}}

Let $T_1>0$ be arbitrary. 

\begin{lemma}\label{lm:deltat}
Let the function $\Psi$ be as in \eqref{eq:delta}. Then the function $\Psi$ is differentiable on $(T_1\,,\,T_c]$, and $(\partial \Psi/\partial T)(T_c)=0$.
\end{lemma}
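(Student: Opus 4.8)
The plan is to write $\Psi(T)=N_0\int_{\varepsilon}^{\hslash\omega_D}P\bigl(T,u(T,x),x\bigr)\,dx$, where $P$ collects the three integrands of \eqref{eq:delta}, and to differentiate under the integral sign. On any compact subinterval of $(T_1,T_c)$ the integrand and its $T$-derivative are jointly continuous in $(T,x)$, because $u\in C^2\bigl((0,T_c)\times[\varepsilon,\hslash\omega_D]\bigr)$ and $T$ stays bounded away from $0$; hence they are bounded, differentiation under the integral is justified, and $\Psi$ is $C^1$ on the open interval $(T_1,T_c)$ with
\[
\Psi'(T)=N_0\int_{\varepsilon}^{\hslash\omega_D}\Bigl(\frac{\partial P}{\partial T}+\frac{\partial P}{\partial u}\,\frac{\partial u}{\partial T}\Bigr)\,dx.
\]
Differentiability at the right endpoint $T_c$ then reduces, via the standard derivative--limit theorem (continuity of $\Psi$ on $(T_1,T_c]$ together with a finite $\lim_{T\uparrow T_c}\Psi'(T)=\ell$ forces the left derivative at $T_c$ to equal $\ell$), to showing that $\Psi'(T)\to 0$ as $T\uparrow T_c$.

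The heart of the matter is the chain-rule integrand. First I would carry out the (routine but essential) algebra showing that, with $E=\sqrt{x^2+u^2}$, almost everything in $\partial P/\partial u$ cancels, leaving
\[
\frac{\partial P}{\partial u}=-\frac{u^3}{E^3}\tanh\frac{E}{2T}+\frac{u^3}{2TE^2\cosh^2(E/2T)},
\]
so that $\partial P/\partial u=O(u^3)$ uniformly in $x$ (here $E\ge x\ge\varepsilon>0$ keeps all denominators bounded below). Second, a direct computation of the explicit derivative $\partial P/\partial T$ (holding $u$ fixed) shows that every summand carries a factor vanishing when $u=0$, i.e.\ when $E=x$; in particular $\partial P/\partial T=0$ wherever $u(T,x)=0$.

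To pass to the limit $T\uparrow T_c$ I would invoke Remark \ref{rmk:utc}, which gives $u(T_c,x)=0$ for all $x$, hence $E\to x$ and $\partial P/\partial T\to 0$ uniformly in $x$. For the remaining term I would write $\dfrac{\partial P}{\partial u}\,\dfrac{\partial u}{\partial T}$ as a bounded multiple of $u^2\cdot\bigl(u\,\partial u/\partial T\bigr)$: condition (C2) bounds $u\,(\partial u/\partial T)$ through $|v+2u\,\partial u/\partial T|<T_c\varepsilon_1$ and the continuity of $v$ on the compact set $[\varepsilon,\hslash\omega_D]$, while the same condition forces $u(T,x)^2\approx v(x)(T_c-T)\to 0$ uniformly in $x$. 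Thus $(\partial P/\partial u)(\partial u/\partial T)\to 0$ uniformly, its integral tends to $0$, and $\Psi'(T)\to 0$.

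The main obstacle is precisely this endpoint behavior: $\partial u/\partial T$ blows up like $(T_c-T)^{-1/2}$ as $T\uparrow T_c$ (consistent with $\Delta_1'(T)\to-\infty$ in Proposition \ref{prp:simplegap}), so the second integrand is an indeterminate $0\cdot\infty$ form that naive domination cannot control. It is exactly the cancellation producing the $u^3$ factor, together with the quantitative bounds of condition (C2), that tames it, and securing the uniformity in $x$ of all these estimates — so that the limit survives the $x$-integration — is the delicate step. Verifying continuity of $\Psi$ at $T_c$ (again by dominated convergence, using $u(T_c,\cdot)=0$) then completes the application of the derivative--limit theorem and yields $(\partial\Psi/\partial T)(T_c)=0$.
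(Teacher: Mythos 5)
Your proof is correct, but it reaches the endpoint statement by a genuinely different route than the paper. The paper never passes through $\lim_{T\uparrow T_c}\Psi'(T)$: after dismissing differentiability on the open interval as a straightforward calculation, it evaluates the difference quotient $-\Psi(T)/(T_c-T)$ directly, splits it into the three terms of \eqref{eq:delta}, uses (C2) to get uniform domination such as $\bigl|\sqrt{x^2+u^2}-x\bigr|/(T_c-T)\le (v(x)+T_c)/x$, and lets dominated convergence produce three limits (essentially $N_0\int v/x\,dx$, $-N_0\int (v/x)\tanh\frac{x}{2T_c}\,dx$ and $-N_0\int (v/x)\bigl(1-\tanh\frac{x}{2T_c}\bigr)\,dx$) whose sum cancels to $0$. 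You instead compute $\Psi'$ on $(T_1,T_c)$ by the chain rule, exploit the cancellation reducing $\partial P/\partial u$ to the $O(u^3)$ expression $-\frac{u^3}{E^3}\tanh\frac{E}{2T}+\frac{u^3}{2TE^2\cosh^2(E/2T)}$ (I checked this: the $\pm\frac{2u}{E}$ and $\pm\frac{2u}{E}\tanh\frac{E}{2T}$ contributions from the three summands do cancel), pair $u^2\to 0$ with the (C2) bound on $u\,\partial u/\partial T$, and finish with the derivative--limit corollary of the mean value theorem. Your route buys an explicit formula for $\Psi'$ and makes the mechanism of the vanishing at $T_c$ (the cubic cancellation taming the $0\cdot\infty$ form) transparent; the paper's route never needs to control $\partial u/\partial T$ in this lemma at all, only $u^2/(T_c-T)\to v$, at the cost of a less illuminating cancellation among three integrals. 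Both arguments hinge on the uniformity in $x$ of the (C2) estimates. The one hypothesis of the derivative--limit theorem you must not leave implicit is left-continuity of $\Psi$ at $T_c$ with $\Psi(T_c)=0$; you do note it, and it is exactly where $u(T_c,\cdot)=0$ (Remark \ref{rmk:utc}) enters for you, just as it does at the start of the paper's proof.
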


\begin{proof}
A straightforward calculation gives that $\Psi$ is differentiable on $(T_1\,,\,T_c)$. So it suffices to show that $\Psi$ is differentiable at $T=T_c$ and $(\partial \Psi/\partial T)(T_c)=0$. Note that $\Psi(T_c)=0$ since $u(T_c\,,\,x)=0$ at all $x \in [\varepsilon,\, \hslash\omega_D]$ (see Remark \ref{rmk:utc}). Then, for $T<T_c$,
\begin{eqnarray}\label{eq:deltadifferentiable}
\,\quad -\frac{\Psi(T)}{\,T_c-T\,}&=&
\frac{2N_0}{\,T_c-T\,}
\int_{\displaystyle{\varepsilon}}^{\displaystyle{\hslash\omega_D}}
 \left\{ \sqrt{x^2+u(T,\,x)^2}-x\right\}\,dx  \\ \nonumber
& &-\frac{N_0}{\,T_c-T\,}
\int_{\displaystyle{\varepsilon}}^{\displaystyle{\hslash\omega_D}}
\frac{u(T,\, x)^2}{\,\sqrt{\,x^2+u(T,\, x)^2\,}\,}\,
\tanh \frac{\,\sqrt{\,x^2+u(T,\, x)^2\,}\,}{2T}\, dx \\ \nonumber
& &+\frac{4N_0T}{\,T_c-T\,}
\int_{\displaystyle{\varepsilon}}^{\displaystyle{\hslash\omega_D}}
\ln \frac{1+e^{-\displaystyle{\sqrt{x^2+u(T,\, x)^2}/T}}}
{1+e^{-\displaystyle{x/T}}} \,dx.
\end{eqnarray}
By (C2) in condition (C),
\[
\left| \frac{\,\sqrt{x^2+u(T,\,x)^2}-x\,}{\,T_c-T\,} \right| \leq
\frac{\,v(x)+T_c\,}{x},
\]
where $\varepsilon_1<1$ is assumed. Hence, the Lebesgue dominated convergence theorem implies that as $T \uparrow T_c$ , the first term on the right side of \eqref{eq:deltadifferentiable} tends to
\[
\lim_{T \uparrow T_c} \frac{2N_0}{\,T_c-T\,}
\int_{\displaystyle{\varepsilon}}^{\displaystyle{\hslash\omega_D}}
\left\{ \sqrt{x^2+u(T,\,x)^2}-x\right\}\,dx=
N_0 \int_{\displaystyle{\varepsilon}}^{\displaystyle{\hslash\omega_D}}
\frac{\, v(x)\,}{x}\,dx.
\]
Similarly we can deal with the second and the third terms. Thus $\Psi$ is differentiable at $T=T_c$ and
\[
-\lim_{T \uparrow T_c} \frac{\Psi(T)}{\,T_c-T\,}=0.
\]
\end{proof}

A straightforward calculation gives the following.

\begin{lemma}\label{lm:gproperty}
Let $g$ be as in \eqref{eq:fng}. Then the function $g$ is of class $C^1$
on $[0,\,\infty)$ and satisfies
\[
g(\eta)<0 \quad (\eta \geq 0),\qquad g'(0)=0,\qquad
\lim_{\eta\to\infty}g(\eta)=\lim_{\eta\to\infty}g'(\eta)=0.
\]
\end{lemma}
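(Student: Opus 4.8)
The plan is to handle the regularity, the sign, and the decay at infinity separately. On $(0,\infty)$ the function $g$ of \eqref{eq:fng} is a quotient of $C^\infty$ functions whose denominator $\eta^2\cosh^2\eta$ never vanishes, so $g\in C^\infty(0,\infty)$ and the only point that requires care is $\eta=0$. Writing $h(\eta)=\cosh^{-2}\eta-\eta^{-1}\tanh\eta$, so that $g(\eta)=h(\eta)/\eta^2$ for $\eta>0$, I would expand about $0$ using
\[
\frac{1}{\cosh^2\eta}=1-\eta^2+\frac{2}{3}\eta^4+O(\eta^6),\qquad
\frac{\tanh\eta}{\eta}=1-\frac{1}{3}\eta^2+\frac{2}{15}\eta^4+O(\eta^6),
\]
which give $h(\eta)=-\frac{2}{3}\eta^2+\frac{8}{15}\eta^4+O(\eta^6)$ and hence
\[
g(\eta)=-\frac{2}{3}+\frac{8}{15}\eta^2+O(\eta^4).
\]

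Both $\cosh^{-2}\eta$ and $\eta^{-1}\tanh\eta$ are even and real-analytic near $0$, so $h$ is even, real-analytic, and vanishes to second order; dividing by $\eta^2$ therefore produces a power series for $g$ convergent near $0$, so $g$ is real-analytic there — in particular of class $C^1$ — and since only even powers of $\eta$ occur we get $g'(0)=0$. The expansion above also identifies the value $g(0)=-2/3$, consistent with \eqref{eq:fng}, shows $g(\eta)\to-2/3$ as $\eta\to0^+$, and gives $g'(\eta)=\frac{16}{15}\eta+O(\eta^3)\to0=g'(0)$. Thus $g\in C^1([0,\infty))$ with $g'(0)=0$.

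For the sign I would clear denominators: for $\eta>0$ the factor $\eta^2\cosh^2\eta$ is positive, so $g(\eta)<0$ is equivalent to $\eta-\sinh\eta\,\cosh\eta<0$, i.e. to $2\eta<\sinh(2\eta)$, which holds for every $\eta>0$ since $\sinh s>s$ for $s>0$. Together with $g(0)=-2/3<0$ this yields $g(\eta)<0$ for all $\eta\ge0$.

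For the behavior at infinity I would use the form $g(\eta)=\dfrac{1}{\eta^2\cosh^2\eta}-\dfrac{\tanh\eta}{\eta^3}$; both summands tend to $0$ as $\eta\to\infty$ (the first exponentially, the second like $\eta^{-3}$), so $g(\eta)\to0$. Differentiating term by term yields finitely many summands, each a product of a bounded hyperbolic factor (such as $\tanh\eta$, $1/\cosh^2\eta$, or $\sinh\eta/\cosh^3\eta$) with a negative power of $\eta$, so each tends to $0$ and $g'(\eta)\to0$. The main obstacle is the $C^1$ claim at $\eta=0$: the defining formula is a genuine $0/0$ form there, and the point is that the double zero of the numerator $h$ exactly cancels the $\eta^2$ in the denominator — this must be carried through the $\eta^4$ term of $h$ so as to control not only $g(0)$ and $g'(0)$ but also the limit of $g'$ needed for continuity of the derivative. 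The sign inequality and the decay at infinity are then routine.
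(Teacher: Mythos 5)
Your proof is correct, and it simply supplies the details of what the paper dismisses with ``A straightforward calculation gives the following'': the Taylor expansions at $0$ (with the correct coefficients $-\tfrac23+\tfrac{8}{15}\eta^2+O(\eta^4)$), the reduction of the sign claim to $2\eta<\sinh 2\eta$, and the termwise decay at infinity are exactly the intended routine computation. No discrepancy with the paper's (omitted) argument.
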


\begin{lemma}\label{lm:deltac2}
Let $\Psi$ be as in Lemma \ref{lm:deltat}. Then $\partial \Psi/\partial T$ is differentiable on $(T_1\,,\,T_c]$, and
\[
\frac{\,\partial^2\Psi\,}{\,\partial T^2\,}(T_c)=\frac{N_0}{\,8T_c^2\,}
\int_{\varepsilon/(2T_c)}^{\hslash\omega_D/(2T_c)}
v(2T_c\,\eta)^2 \, g(\eta)\,d\eta \quad (<0).
\]
\end{lemma}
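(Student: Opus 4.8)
The plan is to leverage the two facts already supplied by Lemma~\ref{lm:deltat}, namely $\Psi(T_c)=0$ and $(\partial\Psi/\partial T)(T_c)=0$, so that the entire content of the present lemma is the second-order behaviour of $\Psi$ as $T\uparrow T_c$. On the open interval $(T_1,\,T_c)$ the integrands in \eqref{eq:delta} are $C^2$ in $T$, since $u\in W$ is of class $C^2$ on $(0,\,T_c)\times[\varepsilon,\,\hslash\omega_D]$ with $u(T,x)>0$ there; hence $\Psi$ is twice differentiable on $(T_1,\,T_c)$ by differentiation under the integral sign, which is routine. For the endpoint I would compute $\lim_{T\uparrow T_c}(\partial^2\Psi/\partial T^2)(T)$ and show it is finite; then, from the identity $(\partial\Psi/\partial T)(T)=-\int_T^{T_c}(\partial^2\Psi/\partial T^2)(s)\,ds$ (valid because $(\partial\Psi/\partial T)(T_c)=0$), the fundamental theorem of calculus identifies this limit with $(\partial^2\Psi/\partial T^2)(T_c)$ and yields the required left-differentiability of $\partial\Psi/\partial T$ at $T_c$. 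Its value is most transparently evaluated through the equivalent Taylor limit $2\lim_{T\uparrow T_c}\Psi(T)/(T_c-T)^2$, which agrees with $\lim_{T\uparrow T_c}(\partial^2\Psi/\partial T^2)(T)$ once the latter is known to be finite.

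The key structural step is to view $\Psi$ as a functional of $f=u^2$ and to expand each of the three integrals in \eqref{eq:delta} in powers of $f$ about $f=0$, legitimate because (C2) forces $u\to0$ uniformly in $x$ as $T\uparrow T_c$. I expect the terms \emph{linear} in $f$ to cancel identically across the three integrals: using $\sqrt{x^2+f}-x=f/(2x)+O(f^2)$ in the first, $E=\sqrt{x^2+f}\to x$ in the second, and expanding the logarithm in the third together with $1/(e^{x/T}+1)=\tfrac12(1-\tanh(x/2T))$, the three $O(f)$ contributions sum to zero. This is the stationarity (gap-equation) cancellation and it is the crux of the argument: it shows $\Psi$ is genuinely second order in $f$, and, since the quadratic correction $\tfrac12 w\,(T_c-T)^2$ to $f$ predicted by (C3) multiplies only this vanishing linear coefficient, the function $w$ drops out entirely. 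This is exactly why the final formula involves $v$ alone, through $f/(T_c-T)\to v$ and hence $f^2/(T_c-T)^2\to v^2$.

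It then remains to collect the quadratic-in-$f$ parts. Expanding $\sqrt{x^2+f}-x$, $ (f/E)\tanh(E/2T)$ and the logarithm each to order $f^2$, I expect the quadratic part of $\Psi$ to take the schematic shape $\int_\varepsilon^{\hslash\omega_D}(f^2/x^3)\,\Xi(x,T)\,dx$ with $\Xi$ a bounded continuous bracket. Dividing by $(T_c-T)^2$ and passing to the limit by the Lebesgue dominated convergence theorem exactly as in the proof of Lemma~\ref{lm:deltat} (the dominating functions coming from the uniform-in-$x$ estimates in condition (C) and the continuity of $U$, $v$), I replace $f^2/(T_c-T)^2$ by $v(x)^2$ and $\Xi(x,T)$ by $\Xi(x,T_c)$. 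Setting $y=x/(2T_c)$ and simplifying $\Xi(x,T_c)$ with $1/(e^{2y}+1)=\tfrac12(1-\tanh y)$ and $e^{2y}/(e^{2y}+1)^2=\tfrac14\cosh^{-2}y$, I expect the bracket to collapse to $\tfrac12\bigl(y\cosh^{-2}y-\tanh y\bigr)=\tfrac12\,y^3 g(y)$, matching the definition \eqref{eq:fng} of $g$. The change of variables $x=2T_c\eta$ then produces precisely $(\partial^2\Psi/\partial T^2)(T_c)=\tfrac{N_0}{8T_c^2}\int_{\varepsilon/(2T_c)}^{\hslash\omega_D/(2T_c)}v(2T_c\eta)^2 g(\eta)\,d\eta$, and the sign is immediate: $g(\eta)<0$ by Lemma~\ref{lm:gproperty}, $v>0$ by (C1), and $N_0>0$, so the integral is strictly negative.

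The main obstacle I anticipate is twofold. First is the rigorous interchange of limit and integral near $T_c$, where $u\to0$ and several integrands are borderline singular in $x$ near $x=\varepsilon$, so the dominating functions must be produced carefully from the uniform bounds built into condition (C). Second is organising the lengthy expansion so that the linear-in-$f$ cancellation is displayed cleanly, since it is exactly this cancellation that both eliminates $w$ and guarantees that what survives is the single convergent $v^2$-integral. The elementary algebra carrying $\Xi(x,T_c)$ to $\tfrac12\,y^3 g(y)$ is short but must be executed with care to recover the precise normalisation in \eqref{eq:fng}.
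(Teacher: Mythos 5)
Your proposal is correct in its computational core, but it takes a genuinely different route from the paper. The paper never expands $\Psi$ itself: it differentiates $\Psi$ once in $T$ for $T<T_c$ (where, after the same linear-order cancellation you identify, $\partial\Psi/\partial T$ collapses to a single integral whose integrand is $u^2\cdot u\,u_T\cdot(x^2+u^2)^{-1}\bigl(\tfrac{1}{2T}\cosh^{-2}\tfrac{E}{2T}-\tfrac{1}{E}\tanh\tfrac{E}{2T}\bigr)$ with $E=\sqrt{x^2+u(T,x)^2}$), then forms the difference quotient $-\Psi_T(T)/(T_c-T)$ and passes to the limit by dominated convergence and (C2). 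Since $(\partial\Psi/\partial T)(T_c)=0$ by Lemma~\ref{lm:deltat}, that difference quotient \emph{is} the one-sided second derivative at $T_c$, so the paper needs no auxiliary Taylor or mean-value argument. You instead expand $\Psi$ to second order in $f=u^2$ and evaluate $2\Psi(T)/(T_c-T)^2$. Your two key claims check out: the $O(f)$ contributions of the three integrals cancel identically (using $1/(e^{x/T}+1)=\tfrac12(1-\tanh\tfrac{x}{2T})$), which is why $w$ never enters, and the $O(f^2)$ coefficient collapses to $\tfrac{N_0}{32T^3}\,g(x/(2T))$, which after $f^2/(T_c-T)^2\to v(x)^2$ and the substitution $x=2T_c\eta$ reproduces exactly the constant $N_0/(8T_c^2)$. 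What your route buys is transparency: the stationarity cancellation and the disappearance of $w$ are made explicit, and the value is read off from $\Psi$ rather than from its derivative. What it costs is the extra step you yourself flag: to upgrade the Taylor limit to the actual differentiability of $\partial\Psi/\partial T$ at $T_c$ you must separately establish that $\lim_{T\uparrow T_c}(\partial^2\Psi/\partial T^2)(T)$ exists and is finite (then conclude via the mean value theorem or your integral identity); this is where the uniform control of $\partial^2 f/\partial T^2$ from (C3) is genuinely needed, whereas the paper's difference-quotient computation invokes only (C2). Write that finiteness step out explicitly and your argument is complete and equivalent in strength to the paper's.
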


\begin{proof}
A straightforward calculation gives that $\partial \Psi/\partial T$ is differentiable on $(T_1\,,\,T_c)$. So it suffices to discuss its differentiability at $T=T_c$. Note that $(\partial \Psi/\partial T)(T_c)=0$ by Lemma \ref{lm:deltat}. Then
\begin{eqnarray*}
-\lim_{T \uparrow T_c} \frac{\,\displaystyle{
\frac{\,\partial \Psi\,}{\partial T}(T)}\,}{\,T_c-T\,}&=&-N_0
\lim_{T \uparrow T_c}
\int_{\displaystyle{\varepsilon}}^{\displaystyle{\hslash\omega_D}}
\frac{\, u(T,\,x)^2\,}{\, T_c-T\,}\; u(T,\,x)
\frac{\,\partial u(T,\,x)\,}{\,\partial T\,}\frac{1}{\,x^2+u(T,\,x)^2\,}
\times \\ \noalign{\vskip0.3cm} & &\quad \times
\left( \frac{1}{\,2T\cosh^2 \frac{\,\sqrt{\,x^2+u(T,\, x)^2\,}\,}{2T}\,}
-\frac{\, \tanh \frac{\,\sqrt{\,x^2+u(T,\, x)^2\,}\,}{2T}\,}{\sqrt{\,x^2+u(T,\, x)^2\,}} \right)\,dx.
\end{eqnarray*}
Note that $u(T_c\,,\,x)=0$ at all $x \in [\varepsilon,\, \hslash\omega_D]$ by Remark \ref{rmk:utc}. Hence, combining the Lebesgue dominated convergence theorem with (C2) in condition (C) proves the lemma.
\end{proof}

Note that $T_1>0$ is arbitrary. Therefore, combining Lemmas \ref{lm:deltat} and \ref{lm:deltac2} with Remark \ref{rmk:omegan-v} immediately implies the following.

\begin{lemma}
The thermodynamical potential $\Omega$, regarded as a function of $T$, is of class $C^1$ on $(0,\,\infty)$.
\end{lemma}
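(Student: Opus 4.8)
The plan is to verify the two defining conditions of $C^1$ smoothness — continuity of $\Omega$ and of its first derivative — on the open intervals $(0,T_c)$ and $(T_c,\infty)$ separately, and then to check the matching at the single junction $T=T_c$ where the two branches of Definition \ref{dfn:thpo} meet. On $(T_c,\infty)$ we have $\Omega=\Omega_N$, which is of class $C^2$, hence $C^1$, by Remark \ref{rmk:omegan-v}. On $(0,T_c)$ we have $\Omega=\Omega_S=\Omega_N+\Psi$; here $\Omega_N$ is $C^1$ by the same remark, while Lemmas \ref{lm:deltat} and \ref{lm:deltac2} give that $\Psi$ is differentiable and that $\partial\Psi/\partial T$ is itself differentiable — hence continuous — on $(T_1,T_c)$ for every $T_1>0$. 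Since $T_1>0$ is arbitrary, $\partial\Psi/\partial T$ is continuous on all of $(0,T_c)$, so $\Omega_S\in C^1(0,T_c)$.

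The crux is the junction $T=T_c$. For continuity of $\Omega$ there I would invoke $\Psi(T_c)=0$ (established inside the proof of Lemma \ref{lm:deltat}), which gives $\Omega_S(T_c)=\Omega_N(T_c)$; together with the continuity of $\Omega_N$ this matches the left branch to the right branch, so $\Omega$ is continuous at $T_c$. For the derivative, the left-hand limit is $\lim_{T\uparrow T_c}\partial\Omega/\partial T=\lim_{T\uparrow T_c}\left(\partial\Omega_N/\partial T+\partial\Psi/\partial T\right)$: the first term tends to $(\partial\Omega_N/\partial T)(T_c)$ by the smoothness of $\Omega_N$, and the second tends to $(\partial\Psi/\partial T)(T_c)=0$ by Lemma \ref{lm:deltat}. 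The right-hand limit is simply $(\partial\Omega_N/\partial T)(T_c)$. Since the two one-sided limits agree and coincide with the value $(\partial\Omega_N/\partial T)(T_c)$, the derivative $\partial\Omega/\partial T$ exists and is continuous at $T_c$. Combining the three regions yields $\Omega\in C^1(0,\infty)$.

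The main obstacle, such as it is, lies entirely at the junction: everything reduces to the two vanishing conditions $\Psi(T_c)=0$ and $(\partial\Psi/\partial T)(T_c)=0$, both furnished by Lemma \ref{lm:deltat}. Conceptually, the content is that the correction term $\Psi$ distinguishing the superconducting branch from the normal branch vanishes to first order as $T\uparrow T_c$, so no kink develops in $\Omega$ and the entropy stays continuous. Note that $C^1$-regularity is all that is claimed here; the discontinuity of the second derivative across $T_c$, which stems from $(\partial^2\Psi/\partial T^2)(T_c)\neq 0$ in Lemma \ref{lm:deltac2} and produces the jump $\Delta C_V$, is precisely the second-order signature and is to be handled separately.
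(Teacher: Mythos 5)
Your proposal is correct and follows essentially the same route as the paper, which simply combines Lemmas \ref{lm:deltat} and \ref{lm:deltac2} with Remark \ref{rmk:omegan-v} (using that $T_1>0$ is arbitrary); you merely make explicit the matching at the junction $T=T_c$ via $\Psi(T_c)=0$ and $(\partial\Psi/\partial T)(T_c)=0$, which the paper leaves implicit. The only quibble is attributional: the left-continuity of $\partial\Psi/\partial T$ at $T_c$ needed for your final step comes from the differentiability of $\partial\Psi/\partial T$ on $(T_1,\,T_c]$ in Lemma \ref{lm:deltac2}, not from Lemma \ref{lm:deltat} alone.
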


A straightforward calculation based on (C3) in condition (C) gives the following.

\begin{lemma}\  The function $\partial^2 \Psi/\partial T^2$ is continuous on $(0,\,T_c]$.
\end{lemma}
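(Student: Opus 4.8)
The plan is to split $(0,T_c]$ into the open part $(0,T_c)$, where continuity is routine, and the single endpoint $T=T_c$, where the real work lies. On $(0,T_c)$ the temperature stays in a compact subinterval bounded away from $0$, while $x\geq\varepsilon>0$ keeps $\sqrt{x^2+u(T,x)^2}\geq\varepsilon$; since $u\in C^2((0,T_c)\times[\varepsilon,\hslash\omega_D])$, the three integrands in \eqref{eq:delta} are jointly $C^2$ in $(T,x)$ there. Hence one may differentiate twice under the integral sign over the fixed compact range $[\varepsilon,\hslash\omega_D]$, and the resulting integrand $\partial^2(\cdots)/\partial T^2$ is jointly continuous; integrating a jointly continuous function over a fixed compact interval yields a continuous function of $T$. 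This gives $\partial^2\Psi/\partial T^2\in C((0,T_c))$ together with an explicit integral formula for it.

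It then remains to prove continuity at $T=T_c$, i.e. that $\lim_{T\uparrow T_c}(\partial^2\Psi/\partial T^2)(T)$ exists and equals the value $(\partial^2\Psi/\partial T^2)(T_c)$ already computed in Lemma \ref{lm:deltac2}. First I would take the explicit formula for $(\partial^2\Psi/\partial T^2)(T)$ from the previous step and pass to the limit $T\uparrow T_c$. The inputs are the uniform limits supplied by condition (C): as $T\uparrow T_c$ one has $u(T,x)\to0$ uniformly in $x$ (since $u^2=(u^2/(T_c-T))(T_c-T)$ and $u^2/(T_c-T)$ stays bounded by (C2)), while $u^2/(T_c-T)\to v$ and $2u\,\partial u/\partial T\to -v$ by (C2), and, with $f=u^2$, one has $\partial^2 f/\partial T^2\to w$ together with $\left(f+(T_c-T)\partial f/\partial T\right)/(T_c-T)^2\to -w/2$ by (C3). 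Substituting these limits into the integrand and collapsing $\sqrt{x^2+u^2}\to x$ and $T\to T_c$ in the hyperbolic factors reduces the integrand to the one appearing in Lemma \ref{lm:deltac2}; the interchange of limit and integral is justified by dominated convergence, the dominating function coming from the two-sided estimates in (C2), (C3) exactly as the bound $\left|(\sqrt{x^2+u^2}-x)/(T_c-T)\right|\leq(v(x)+T_c)/x$ was used in the proof of Lemma \ref{lm:deltat}.

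More economically, once the left-hand limit $L=\lim_{T\uparrow T_c}(\partial^2\Psi/\partial T^2)(T)$ is shown merely to exist and be finite, one need not re-identify its value. Since $\partial\Psi/\partial T$ is continuous on $(T_1,T_c]$ and differentiable on $(T_1,T_c)$, the standard one-sided mean-value argument forces the left derivative of $\partial\Psi/\partial T$ at $T_c$ to equal $L$; but Lemma \ref{lm:deltac2} has already identified that left derivative as $(\partial^2\Psi/\partial T^2)(T_c)$, so $L=(\partial^2\Psi/\partial T^2)(T_c)$ automatically, and continuity at $T_c$ follows.

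I expect the main obstacle to be precisely this passage to the limit at $T_c$. The second derivative $\partial^2\Psi/\partial T^2(T)$ contains several pieces that are individually singular as $T\uparrow T_c$: terms carrying factors $1/(T_c-T)$, and terms involving $\partial u/\partial T$ and $\partial^2 f/\partial T^2$, none of which need converge separately (recall that derivatives of the gap can blow up, cf. $\Delta_1'(T)\to-\infty$ in Proposition \ref{prp:simplegap}). The delicate step is to regroup the integrand so that only the combinations actually controlled by (C2) and (C3) appear, in particular the combination $\left(f+(T_c-T)\partial f/\partial T\right)/(T_c-T)^2$, which is finite in the limit although $\partial f/\partial T$ and $f/(T_c-T)^2$ separately are not, and then to exhibit a single $T$-independent dominating function so that dominated convergence applies uniformly on $[\varepsilon,\hslash\omega_D]$. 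This regrouping, rather than any individual estimate, is where the content of condition (C) is consumed.
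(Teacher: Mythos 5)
Your proposal is correct and follows the route the paper intends: the paper's entire proof of this lemma is the single sentence ``A straightforward calculation based on (C3) in condition (C) gives the following,'' and your argument (differentiation under the integral sign on $(0,\,T_c)$, then dominated convergence at $T=T_c$ using the uniform limits of $f$, $\partial f/\partial T$ and $\partial^2 f/\partial T^2$ supplied by (C2)--(C3)) is precisely the calculation being waved at. Your mean-value shortcut is a worthwhile addition: since the value computed in Lemma \ref{lm:deltac2} involves only $v$ and not $w$, identifying $\lim_{T\uparrow T_c}(\partial^2\Psi/\partial T^2)(T)$ by direct substitution would force you to verify that the $w$-contributions cancel, whereas your one-sided mean-value argument yields the identification for free once mere existence and finiteness of the left-hand limit are established.
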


Lemma \ref{lm:deltac2} immediately implies the following.

\begin{lemma}\  The thermodynamical potential $\Omega$, regarded as a function of $T$, is of class $C^2$ on $(0,\,\infty) \setminus \{ T_c\}$. Consequently, the transition to a superconducting state at the transition temperature $T_c$ is a second-order phase transition, and hence the condition that the solution to the BCS gap equation \eqref{eq:gapequation} belongs to $W$ is a sufficient condition for the second-order phase transition in superconductivity.
\end{lemma}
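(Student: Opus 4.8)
The plan is to glue together smoothness information on the two regimes $T>T_c$ and $0<T<T_c$ and then isolate the behaviour of the second derivative as $T$ crosses $T_c$. On $(T_c,\,\infty)$ we have $\Omega=\Omega_N$ by Definition \ref{dfn:thpo}, and $\Omega_N$ is of class $C^2$ there by Remark \ref{rmk:omegan-v}; so the only genuine work is on $(0,\,T_c)$. On $(0,\,T_c)$ we have $\Omega=\Omega_S=\Omega_N+\Psi$. Again $\Omega_N$ is $C^2$ by Remark \ref{rmk:omegan-v}, while the preceding lemma shows that $\partial^2\Psi/\partial T^2$ is continuous on $(0,\,T_c]$; combined with the differentiability of $\Psi$ and of $\partial\Psi/\partial T$ recorded in Lemmas \ref{lm:deltat} and \ref{lm:deltac2}, this makes $\Psi$ of class $C^2$ on $(0,\,T_c)$. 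Hence $\Omega$ is of class $C^2$ on each of $(0,\,T_c)$ and $(T_c,\,\infty)$, that is, on $(0,\,\infty)\setminus\{T_c\}$.

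It remains to show that $\partial^2\Omega/\partial T^2$ is genuinely discontinuous at $T_c$, which is where the content of Lemma \ref{lm:deltac2} enters. I would compute the two one-sided limits separately. For $T\downarrow T_c$ one has $\Omega=\Omega_N$, so the right-hand limit of $\partial^2\Omega/\partial T^2$ is $(\partial^2\Omega_N/\partial T^2)(T_c)$. For $T\uparrow T_c$ one has $\Omega=\Omega_N+\Psi$, so the left-hand limit is $(\partial^2\Omega_N/\partial T^2)(T_c)+(\partial^2\Psi/\partial T^2)(T_c)$, the second term being well defined and continuous up to $T_c$ by the preceding lemma. Since $\Omega_N$ is $C^2$ across $T_c$, the two limits differ by exactly $(\partial^2\Psi/\partial T^2)(T_c)$. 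By Lemma \ref{lm:deltac2} this equals $\dfrac{N_0}{8T_c^2}\int_{\varepsilon/(2T_c)}^{\hslash\omega_D/(2T_c)}v(2T_c\,\eta)^2\,g(\eta)\,d\eta$, and because $v>0$ by (C1) and $g(\eta)<0$ for all $\eta\ge 0$ by Lemma \ref{lm:gproperty}, the integrand is strictly negative on a set of positive measure; hence the jump is strictly negative and in particular nonzero. Therefore $\partial^2\Omega/\partial T^2$ has a genuine jump discontinuity at $T=T_c$.

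Putting this together with the previously established fact that $\Omega$ is of class $C^1$ on $(0,\,\infty)$, both conditions (a) and (b) in the definition of a second-order phase transition are verified, so the transition at $T_c$ is second order; equivalently, membership of the solution in $W$ is a sufficient condition for the second-order transition. The only nontrivial step is the verification of the jump, and even that has been reduced to the two preceding lemmas: the main point to be careful about is the \emph{strict} sign, namely that the value $(\partial^2\Psi/\partial T^2)(T_c)$ supplied by Lemma \ref{lm:deltac2} is strictly negative rather than merely nonpositive, which is guaranteed by $g<0$ together with $v>0$. Everything else is bookkeeping: the decomposition $\Omega=\Omega_N+\Psi$, the $C^2$-smoothness of $\Omega_N$, and the one-sided continuity of $\partial^2\Psi/\partial T^2$ up to $T_c$.
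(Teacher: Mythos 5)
Your proposal is correct and follows essentially the same route as the paper, which treats this lemma as an immediate consequence of Lemma \ref{lm:deltac2} together with Remark \ref{rmk:omegan-v}, the continuity of $\partial^2\Psi/\partial T^2$ on $(0,\,T_c]$, and the earlier $C^1$ lemma; the jump in $\partial^2\Omega/\partial T^2$ at $T_c$ is exactly $(\partial^2\Psi/\partial T^2)(T_c)<0$. You simply spell out the bookkeeping that the paper leaves implicit.
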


Since the specific heat at constant volume is of the form $\displaystyle{C_V=-T\left( \partial^2\Omega/\partial T^2\right)}$, the gap $\Delta C_V$ in $C_V$ at the transition temperature $T_c$ is given by (see Definition \ref{dfn:thpo})
\[
\Delta C_V=-T_c \left\{ \lim_{T\uparrow T_c}
\frac{\,\partial^2\Omega\,}{\,\partial T^2\,}(T)-
\lim_{T\downarrow T_c} \frac{\,\partial^2\Omega\,}{\,\partial T^2\,}(T)
\right\}=-T_c\,\frac{\,\partial^2\Psi\,}{\,\partial T^2\,}(T_c).
\]

We thus have the following by Lemma \ref{lm:deltac2}.

\begin{lemma}
The gap $\Delta C_V$ in the specific heat at constant volume at the transition temperature $T_c$ is given by the form
\[
\Delta C_V=-\frac{N_0}{\, 8\, T_c\,}
 \int_{\varepsilon/(2\, T_c)}^{\hslash\omega_D/(2\, T_c)} v(2T_c\,\eta)^2
 \, g(\eta)\, d\eta \quad (>0).
\]
\end{lemma}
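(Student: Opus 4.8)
The plan is to obtain $\Delta C_V$ by inserting the explicit value of $(\partial^2\Psi/\partial T^2)(T_c)$ supplied by Lemma~\ref{lm:deltac2} into the jump formula for the specific heat that precedes the statement. First I would justify that formula. Since $C_V=-T(\partial^2\Omega/\partial T^2)$ and $\Omega$ equals $\Omega_N$ for $T>T_c$ and $\Omega_N+\Psi$ for $0<T\le T_c$ (Definition~\ref{dfn:thpo}), while $\Omega_N$ is of class $C^2$ on $(0,\infty)$ (Remark~\ref{rmk:omegan-v}), the only possible discontinuity of $\partial^2\Omega/\partial T^2$ at $T_c$ is carried by $\Psi$. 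Because $u(T_c,x)=0$ at every $x$ (Remark~\ref{rmk:utc}), one has $\Psi\equiv 0$ on $[T_c,\infty)$, so the right-hand limit of $\partial^2\Psi/\partial T^2$ at $T_c$ vanishes and the jump reduces to the left-hand limit, which is $(\partial^2\Psi/\partial T^2)(T_c)$ by the differentiability already established. This yields $\Delta C_V=-T_c\,(\partial^2\Psi/\partial T^2)(T_c)$.

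The second step is a direct substitution. By Lemma~\ref{lm:deltac2},
\[
\frac{\partial^2\Psi}{\partial T^2}(T_c)=\frac{N_0}{\,8T_c^2\,}\int_{\varepsilon/(2T_c)}^{\hslash\omega_D/(2T_c)} v(2T_c\,\eta)^2\,g(\eta)\,d\eta.
\]
Multiplying by $-T_c$ cancels one factor of $T_c$ and produces exactly the claimed expression $\displaystyle -\frac{N_0}{8T_c}\int_{\varepsilon/(2T_c)}^{\hslash\omega_D/(2T_c)} v(2T_c\eta)^2\, g(\eta)\,d\eta$.

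Finally I would verify the sign. By Lemma~\ref{lm:gproperty} the kernel $g$ satisfies $g(\eta)<0$ for all $\eta\ge 0$, and $v(2T_c\eta)^2\ge 0$ with $v>0$ by (C1), so the integral is strictly negative; the prefactor $-N_0/(8T_c)$ is also negative, and the product is therefore strictly positive, giving $\Delta C_V>0$.

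As for the difficulty: given the preceding results there is no real obstacle here, since the entire analytic content sits in Lemma~\ref{lm:deltac2} (the dominated-convergence computation of the second derivative of $\Psi$ at $T_c$ using condition (C)) and in the continuity and differentiability lemmas guaranteeing that $\Psi$ and $\partial\Psi/\partial T$ extend continuously across $T_c$, so that only the second derivative jumps. The one point demanding care in the present statement is the bookkeeping that identifies the jump of $\partial^2\Omega/\partial T^2$ with the single left-limit $(\partial^2\Psi/\partial T^2)(T_c)$, which hinges on $\Omega_N\in C^2$ and on $\Psi\equiv 0$ above $T_c$.
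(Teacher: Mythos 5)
Your proof is correct and follows essentially the same route as the paper: the jump in $\partial^2\Omega/\partial T^2$ at $T_c$ is reduced to $-T_c\,(\partial^2\Psi/\partial T^2)(T_c)$ via Definition~\ref{dfn:thpo} and the $C^2$ regularity of $\Omega_N$, after which the explicit value from Lemma~\ref{lm:deltac2} is substituted and the sign is read off from Lemma~\ref{lm:gproperty}. The only difference is that you make explicit why the right-hand limit of $\partial^2\Psi/\partial T^2$ at $T_c$ vanishes (namely $\Psi\equiv 0$ for $T\geq T_c$), a point the paper leaves implicit.
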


%%%%%%%%%%%%%%%%%%%%%%%%%%%%%%%%%%%%%%%%%%%%%%%%%%
\noindent \textbf{Acknowledgments}

S. Watanabe is supported in part by the JSPS Grant-in-Aid for Scientific Research (C) 21540110.

%%%%%%%%%%%%%%%%%%%%%%%%%%%%%%%%%%%%%%%%%%%%%%%%%%

\end{document}